\newtheorem{thm}{Theorem}
\newtheorem{lem}[thm]{Lemma}
\newtheorem{prop}[thm]{Proposition}
\newtheorem{cor}[thm]{Corollary}
\newtheorem{dfn}[thm]{Definition}
\newtheorem{rem}{Remark}
\newcommand{\ii}{\mathbbm{1}}
\newcommand{\R}{\mathbb R}
\newcommand{\inner}[1]{\left\langle #1\right\rangle}
\newcommand\numberthis{\addtocounter{equation}{1}\tag{\theequation}}
\theoremstyle{definition}
\algnewcommand{\Inputs}{%
	\State \textbf{Inputs:}
}
\algnewcommand{\Initialize}{%
	\State \textbf{Initialize:}
}
\algnewcommand{\Outputs}{%
	\State \textbf{Outputs:}
}
\algnewcommand{\ForLoop}{%
	\State \textbf{For $k=0,1,2,...$ do}
}
\algnewcommand{\ForEnd}{%
	\State \textbf{End For}
}
\algnewcommand{\Iterate}{%
	\State \textbf{Iterate:}
}
\begin{document}

%
\title{\huge NBIHT:  An Efficient Algorithm for $1$-bit Compressed Sensing with Optimal Error Decay Rate}
%
%
%

\author {Michael P. Friedlander,
\and Halyun Jeong,
\and Yaniv Plan,
\and and {\"O}zg{\"u}r Y{\i}lmaz
\thanks{M. P. Friedlander is with the Department of Computer Science and the Department of Mathematics,  The University of British Columbia, Vancouver, BC, Canada (e-mail:mpf@cs.ubc.ca).

H. Jeong. Y. Plan, and {\"O} Y{\i}lmaz are with the Department of Mathematics,  The University of British Columbia, Vancouver, BC, Canada (email: hajeong@math.ubc.ca; yaniv@math.ubc.ca; oyilmaz@math.ubc.ca).

M. P. Friedlander was partially supported by the Office of Naval Research through grant N00014-17-1-2009.
Y.~Plan was partially supported by an NSERC Discovery Grant (22R23068), a PIMS CRG 33: High-Dimensional Data Analysis, and a Tier II Canada Research Chair in Data Science. {\"O}.~Y{\i}lmaz was partially supported by an NSERC Discovery Grant (22R82411) and PIMS CRG 33: High-Dimensional Data Analysis. H.~Jeong was funded in part by the University of British Columbia Data Science Institute (UBC DSI) and by the Pacific Institute of Mathematical Sciences (PIMS). 

}}

\maketitle

\begin{abstract}
The Binary Iterative Hard Thresholding (BIHT) algorithm is a popular reconstruction method for one-bit compressed sensing due to its simplicity and fast empirical convergence. There have been several works about BIHT but a theoretical understanding of the corresponding approximation error and convergence rate still remains open. 

This paper shows that the normalized version of BIHT (NBHIT) achieves an approximation error rate optimal up to logarithmic factors. More precisely, using $m$ one-bit measurements of an $s$-sparse vector $x$, we prove that the approximation error of NBIHT is of order $O \left(1 \over m \right)$ up to logarithmic factors, which matches the information-theoretic lower bound $\Omega \left(1 \over m \right)$ proved by Jacques, Laska, Boufounos, and Baraniuk in 2013. To our knowledge, this is the first theoretical analysis of a BIHT-type algorithm that explains the optimal rate of error decay empirically observed in the literature. This also makes NBIHT the first provable computationally-efficient one-bit compressed sensing algorithm that breaks the inverse square root error decay rate $O \left(1 \over m^{1/2} \right)$.

\end{abstract}

\begin{IEEEkeywords}
One-bit compressed sensing, iterative reconstructions, binary iterative hard thresholding, optimal reconstruction error decay, quantization, sparse signals. 
\end{IEEEkeywords}

%
\IEEEpeerreviewmaketitle

\section{Introduction}
%
%
%
%
Compressed sensing is an efficient signal acquisition and recovery paradigm that has received a considerable amount of attention from both researchers and practitioners. This novel paradigm has led to satisfactory solutions to many interesting problems in modern signal processing and machine learning that were considered to be intractable by traditional approaches. In the most basic setup of compressed sensing, one aims to recover an unknown $s$-sparse signal $x \in \R^N$, i.e., a vector with at most $s$ nonzero entries, from its linear measurements $Ax$ where $A$ is a known $m \times N$ matrix with $m \ll N$.

For measurement matrices with certain conditions, e.g., the \emph{restricted isometry property (RIP)}, which is satisfied for a large class of matrices including Gaussian random matrices (random matrices whose entries are i.i.d. standard Gaussian random variables), the typical results in compressed sensing guarantee that signal recovery is possible if the number of measurements $m \gtrsim s \log (2N/s)$ \cite{foucart2013invitation, eldar2012compressed}. 

On the other hand, because modern technology relies on digital computers, it is necessary to quantize measurements to store and process them. The discrete nature of  quantization presents exciting challenges in efficient signal recovery. One of the central challenges in this area is \emph{quantized compressed sensing}, the problem of recovering a sparse vector from its quantized measurements $Q(Ax)$ where $Q(\cdot)$ is a \emph{quantizer} that maps  vectors in $\R^m$ to elements of a finite set.
There are mainly two approaches to quantization in the setting of compressed sensing. 

The first approach is to use noise-shaping quantizers which aim to design vector quantizers such that the quantization error in the measurement space lies in directions away from the set of all possible unquantized measurements; thus, the quantization error can be mostly filtered out. Examples of noise-shaping quantizers in compressed sensing, which can be one-bit or multi-bit per measurement include $\Sigma\Delta$ quantization \cite{gunturk2013sobolev, saab2018quantization}, $\beta$-condensation \cite{chou2015noise,chou2016distributed}, and adaptive quatization \cite{baraniuk2017exponential}. These quantizers can reach approximation error rates decaying exponentially in $m$ directly \cite{deift2011optimal} or after additional encoding \cite{saab2017compressed}. This nearly optimal compression rate, however, comes at the expense of introducing memory components or sophisticated subroutine \cite{baraniuk2017exponential}  in the feedback quantization (encoding) process --- often not desired or sometimes not practical. 

The second approach to quantization is memoryless scalar quantization (MSQ), which quantizes each measurement separately. In other words, the quantizer $Q$ acts on $Ax$ elementwise, so it can be implemented without the need for analog memory. Because of its non-adaptive nature in the encoding process, MSQ is more suitable to parallel or distributed computing environment. One important case of quantized compressed sensing is when the quantizer only has two levels, i.e., it is one-bit. One-bit quantizers are easily implementable in hardware due to its simplicity, storage benefits, and robustness to scaling errors.

\emph{One-bit compressed sensing} was first introduced by Boufounos and Baraniuk \cite{boufounos20081}. The goal is to recover an $s$-sparse signal $x$ from its signed linear measurements, that is, from $b = \text{sign}(Ax)$ for a Gaussian random matrix $A$. Here $\text{sign}(\cdot)$ is applied elementwise on $Ax$ where the scalar function $\text{sign}(w)$ is equal to $1$ if $w > 0$ and $-1$ otherwise. This is a quantized compressed sensing problem where the $\text{sign}(\cdot)$ function is the one-bit memoryless scalar quantizer. Because one-bit measurements $\text{sign}(Ax)$ are scale invariant, one cannot recover the magnitude of $x$. Consequently, in the one-bit compressed sensing problem, we may assume without loss of generality that $x$ is a unit vector and restrict the signal set of interest to sparse vectors on the unit sphere. 

Another motivation for one-bit compressed sensing is the sparse binary response model in which we want to recover a sparse signal $x$ from binary observations  $\text{sign}(Ax + \epsilon)$ where $\epsilon$ is a noise vector \cite{ plan2016high, plan2016generalized, davenport20141, horowitz2009semiparametric}. Note that in these applications we do not have access to $Ax$ but only ${1 \over 2}(\text{sign}(Ax + \epsilon)+ 1)$, its binary labels, which excludes the vector quantization idea (here, the notation $1$ indicates a vector of all-ones).
Interestingly, the noiseless setting ($\epsilon = 0$) which is identical to the one-bit compressed sensing model, has been thought difficult to analyze since a certain amount of noise is required to make the traditional maximum likelihood analysis feasible \cite{plan2016generalized, davenport20141}. Our theory provides a near-optimal theoretical guarantee in the number of observations in this case.

Jacques et al. \cite{jacques2013robust} proposed \emph{binary iterative hard thresholding (BIHT)} as a reconstruction method for one-bit compressed sensing, which is our main subject of study. BIHT and its variants are state-of-the-art reconstruction algorithms for one-bit compressed sensing because of their low computational cost and superior empirical recovery accuracy in low-noise or noiseless settings \cite{jacques2013robust, boufounos2015quantization, liu2019one}. Formally, BIHT is a subgradient method with hard thresolding to the set of $s$-sprase vectors to solve the following nonconvex optimization problem \cite{jacques2013robust,liu2019one}:
\begin{equation} \label{prob:BIHT_optimization} 
\min_{z \in \R^N} \|Az\|_1 - \inner{b, Az} \quad
\text{subject to} \quad \|z\|_0 \le s, \; \|z\|_2 = 1.
\end{equation}

Other BIHT-type algorithms appeared in \cite{jacques2013robust} including {\it{normalized BIHT (NBIHT)}} with an addtional nonconvex projection step onto the unit sphere or the \emph{hinge loss function BIHT} by replacing the objective in \eqref{prob:BIHT_optimization} with the hinge-loss function.

There are various sources in the formulation \eqref{prob:BIHT_optimization} that make the convergence analysis of the BIHT challenging: First, the two constraints in \eqref{prob:BIHT_optimization} are nonconvex, so BIHT-type algorithms perform a nonconvex projection after the gradient step. Moreover, the objective function is not strictly convex and is not differentiable. As a consequence, the subgradients in BIHT-type algorithms have discontinuities, so traditional approaches based on the RIP are not directly applicable.

However, thanks to the randomness of measurement matrix $A$, we can use a multicale analysis to show that NBIHT converges to within a small radius of $x$; At large scales, i.e., for $z$ far from $x$, a RIP-type property holds. As $z$ approaches $x$, the RIP constants become larger until the near isometry breaks down when $z$ gets within the radius of the approximation error. This allows us to show that the iterates of NBIHT contract until they land within the radius of approximation error. We call this multiscale RIP-type guarantee the \emph{restricted approximate invertibility condition (RAIC)}, and believe it may be useful in other problems involving binary measurements. This forms the foundation of our analysis for NBHIT. 

Our main result shows that the NBIHT iterates $x_k$ obey 
\[
\|x_{k} - x\|_2 \le   C{ (s \log(N/s) )^{7/2}(\log m)^{12}  \over m^{ 1 -  {1 \over 2}\left( {5 \over 6} \right)^{ \lfloor{k/25}\rfloor -1} } }
\] for some absolute constants $ C > 0$ with probability at least $1 - O \left({1 \over m^3} \right)$.

To the best of our knowledge, there is no previous theoretical analysis for NBIHT despite much numerical evidence. There are some results on the convergence of BIHT (unnormalized):
Several works  \cite{boufounos2015quantization,jacques2013quantized,plan2016high} showed that the first iteration of BIHT achieves an approximation error of order of $ {\sqrt{s \log (N/s)} \over m^{1/2} } $. No theoretical analysis for the BIHT convergence has appeared after its first iteration, as far as we are aware, except for \cite{liu2019one}. Theorem V.1 in \cite{liu2019one} states that the iterates of BIHT remain bounded after the first iteration maintaining the same order of accuracy for the rest of iterations. Their analysis, however, does not imply that the approximation error may decrease after the first iteration. In contrast, we establish that the approximation error of NBHIT decays as $ {(s \log(N/s) )^{7/2}(\log m)^{12} \over m}$ after sufficient iterates. On the other hand, Theorem 1 in \cite{jacques2013robust} says that any one-bit compressed sensing recovery algorithms should exhibit error rate of at least order of ${s \over m}$. 
Since the lower bound of any reconstruction algorithm error is $\Omega(1/m)$ by that theorem, we give the first optimal dependence on $m$ by a BIHT-type algorithm, or, to our knowledge, any polynomial-time one-bit compressed sensing algorithms.  
Further, the error bound only depends poly-logarithmically on the ambient dimension, $N$, as is expected in the compressed sensing setup.  We believe that the dependence on $s$ and logarithmic factors can be improved and we conjecture that the linear dependence on $s$ of the lower bound is achievable by NBIHT.



 

\subsection{Related Works}
There is a vast literature about quantized compressed sensing, e.g., see \cite{powell2013quantization, boufounos2015quantization, chou2016distributed}. As for one-bit compressed sensing,
Plan and Vershynin \cite{plan2013one, plan2012robust} have proposed recovery algorithms based on convex programs. These are computationally efficient and also cover more general classes of signals, but their recovery accuracy is only guaranteed of order ${\sqrt{s \log (N/s) \over m} } $. Knudson et al. \cite{knudson2016one} studied one-bit compressed sensing when the observations are one-bit measurements with dither, i.e., $b = \text{sign}(Ax + \epsilon)$ where $\epsilon$ is a dither noise.  This allows recovery of the manginute of $x$ under certain conditions, but their measurement model is different from ours and the recovery guarantee is still at most of order $ {\sqrt{s \log (N/s) \over m} }$.

In the overcomplete settting ($m \ge N$) without assuming sparsity of the signal, \emph{consistent reconstruction} methods produce a signal $w$ whose memoryless scalar quantized measurements $Q(Aw)$ agree with those of $x$. Such methods can be implemented, for example,  using linear programming with constraints enforcing $Q(Aw) = Q(Ax)$ and offers recovery error decay of order ${N \over m} $. Further considerations of computational efficiency have led to iterative signal recovery methods such as the \emph{Rangan-Goyal algorithm} \cite{rangan2001recursive} or the \emph{frame permutation quantization} \cite{nguyen2011frame} in which only one quantized measurement is enforced at each iteration. But their approximation error is still of order  ${N \over m}$ \cite{rangan2001recursive, powell2010mean, nguyen2011frame}, which would not provide a meaningful recovery when $N \gtrsim m$ (the typical compressed sensing setting); Because these algorithms are not specially designed to take into account sparsity, they assume the overcomplete setting  and their error dependency on the signal dimension is $O(N)$ whereas ours is only $O(\log^{7/2}N)$.

\subsection{Notation}
Throughout  this paper, we use the standard notation for Big-O and Big-Omega:
If $f(n) = O(g(n))$,  there exists a positive constant $C>0$ such that 
$|f(n)| \le C|g(n)|$ when $n$ is sufficiently large. On the other hand, if $f(n) = \Omega(g(n))$ means that there exists a positive constant $C>0$ such that 
$|f(n)| \ge C|g(n)|$ for sufficiently large $n$.

We denote the unit sphere and the Euclidean unit ball in $\R^N$ by $\mathbb{S}^{N-1}$ and $B^N_2$ respectively. The Euclidean ball centered at $z$ with radius $r$ is denoted by $B(z,r)$. 
The set of all $s$-sparse vectors in $\R^N$ is denoted by $\Sigma_s^N$ or $K$. We write $T_s$ for the hard thresholding operator onto $K$ that keeps the $s$-largest magnitude elements of a vector and zeros out the rest of its entries. As usual, $\mathcal{N}(0,1)$ denotes the standard Gaussian distribution and $\mathcal{N}(0, I_N)$ the $N$-dimensional standard Gaussian distribution. We assume that the measurement vectors $a_j$ or the rows of $A$ are the standard Gaussian random vectors in $\R^N$. As for norms, $\|\cdot\|$ or $\|\cdot\|_2$ denote the $\ell_2$ norm for a vector and $\|\cdot\|_1$ is the $\ell_1$ norm. All other norms will be defined in later sections. 

Given two sets $A, B \subset \R^N$, $A + B$ is the Minkowski sum of $A$ and $B$, i.e., $A + B = \{v + w  | v \in A, w \in B \}$. The normalized geodesic distance between two vectors $x$ and $y$ is  $d_g(x, y) := {\arccos{\inner{x,y}} \over \pi}$. 

\section{Global convergence of normalized BIHT}
\subsection{The normalized BIHT algorithm}
We briefly describe the normalized binary iterative hard-thresholding (NBIHT) algorithm. The description of the NBIHT algorithm \cite{jacques2013robust} is given in Algorithm \ref{alg:NBIHT}.

\begin{algorithm}
	\caption{Normalized BIHT  }
	\begin{algorithmic}
		\Inputs $m\times N$ matrix $A$, measurements $b=\text{sign}(Ax)$, sparsity level $s$, step size $\tau$
		
		\Initialize	Pick a $x_0$ with $||x_0|| = 1$ and sparsity level $s$.
		
		\Iterate Repeat until a stopping criteria on $k$ is met
		 
		\textbf{Compute $z_{k+1}$}
		\[
		 z_{k+1} := x_k + {\tau \over m} A^T(\text{sign}(Ax) - \text{sign}(Ax_k)) 
		\]
		
		\textbf{Project to $s$-sparse vector sets}\\
		\[
		 \tilde x_{k+1} := T_s(z_{k+1}) 
		\]
		
		\textbf{Normalize}\\
		\[
			x_{k+1} := \tilde x_{k+1}/ ||\tilde x_{k+1}||
		\]
		
		\Outputs The $s$-sparse approximate solution to $\text{sign}(Ax)=b$.

	\end{algorithmic}
	\label{alg:NBIHT}
\end{algorithm}


An important property of NBIHT is that $z_1$ is an unbiased estimator of $x$ for $\tau = \sqrt{\pi/2}$ (this is true for any initialization $x_0$ that does not depend on $A$). This follows from Proposition \ref{prop:one-bit correlation with Gaussian} which states that for any fixed unit vector $y\in \R^N, \sqrt{\pi \over 2} \mathbb{E} \Big[{1 \over m} A^T \text{sign}(Ay) \Big] = y$. This shows that it is indeed unbiased, i.e., $\mathbb{E}z_1 = x$. Further iterations are no longer unbiased estimators of $x$.  Typically, analyses in this scenario proceed by giving uniform deviations from expectation over the whole set of values that the iterations can take (e.g., RIP).  However, due to the discontinuity of the sign function, uniform bounds on the deviation (which scale according to the distance to $x$) are impossible.  Instead, we give a multiscale approach, showing uniform deviations over a sequence of annuli of increasing radii.  See Section \ref{section:Multiscale analysis} for more details about this idea.

\subsection{Main global convergence theorem}

We are now ready to state our main theorem for the convergence of the NBIHT algorithm. 
\begin{thm}
	\label{thm: Main Convergence Thm}
	Let $x$ be an $s$-sparse unit vector in $\R^N$. Assume that each entry of the measurement matrix $A \in \R^{m \times N}$ is drawn i.i.d. from $\mathcal{N}(0,1)$. Then, there exist positive universal constants $c, C$ such that the NBIHT iterates $x_k$ satisfy
	\[
	\|x_{k} - x\|_2 \le   C{ (s \log(N/s) )^{7/2}(\log m)^{12}  \over m^{ \left(1 -{1 \over 2}\left( {5 \over 6} \right)^{  \lfloor{k/25}\rfloor -1 }  \right) } }
	\] for all $m > \max\{ c \cdot s^{10} \log^{10}(N/s), 24^{48} \}$ with probability at least $1 - O \left({1 \over m^3} \right)$.
\end{thm}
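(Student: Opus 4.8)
The plan is to reduce the convergence of NBIHT to a single scale-dependent deterministic contraction inequality and then to bootstrap it across a geometric sequence of scales. Writing the unprojected update as $z_{k+1} - x = (x_k - x) + \frac{\tau}{m}A^T(\text{sign}(Ax) - \text{sign}(Ax_k))$, the unbiasedness in Proposition~\ref{prop:one-bit correlation with Gaussian} tells us that the operator $z \mapsto \frac{\tau}{m}A^T(\text{sign}(Ax) - \text{sign}(Az))$ has ``ideal'' value $x - z$ for $\tau = \sqrt{\pi/2}$, so that $z_{k+1} - x$ is exactly the deviation of this operator from that value. If the deviation were zero, NBIHT would converge in a single step; the whole difficulty is to bound it uniformly over the set in which the hard-thresholding step constrains the iterates to lie.

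First I would establish the RAIC: a high-probability, scale-localized uniform bound of the form
\[
\sup_{\substack{z \in \Sigma_s^N \cap \mathbb{S}^{N-1} \\ \|z - x\| \le r}} \left\| \frac{\tau}{m}A^T(\text{sign}(Ax) - \text{sign}(Az)) - (x - z) \right\| \le \rho\, r + \eta(r),
\]
with a contraction constant $\rho < 1$ and a floor $\eta(r)$ that decreases as $r \to 0$. The key point is that for $z$ at geodesic distance $\asymp r$ from $x$ the two sign patterns disagree on only an $O(r)$-fraction of the rows, so the associated random sum localizes and its fluctuations shrink with $r$; carrying the sparsity through a covering/chaining argument over the annulus produces the $s\log(N/s)$ and $\log m$ factors. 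After passing this estimate through the projection $T_s$ and the normalization, each of which I expect to distort the error only by a universal constant factor when $x$ itself is $s$-sparse and unit-norm, this yields a per-iteration contraction $\|x_{k+1} - x\| \le \rho'\|x_k - x\| + C\,\eta(\|x_k - x\|)$ with $\rho' < 1$.

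Next I would iterate this inequality through a multiscale induction. Define a decreasing sequence of radii by $r_{n+1} = \eta(r_n)$; because $\eta(r)$ is polynomial in $r$ with an $m^{-1}$ prefactor, the induced recursion on the exponents $\alpha_n$ (writing $r_n \approx m^{-\alpha_n}$ up to logarithmic factors) closes the gap $1 - \alpha_n$ to the optimal exponent $1$ geometrically, with ratio $5/6$ per stage, matching the factor $(5/6)^{\lfloor k/25\rfloor - 1}$ in the statement. Within the $n$-th stage the RAIC at scale $r_n$ is valid and the linear contraction with factor $\rho'$ drives the error from $r_n$ down to its floor $r_{n+1}$ in a bounded number of steps, roughly $25$, which is exactly what produces the $\lfloor k/25\rfloor$ in the exponent. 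The base case is furnished by the unbiasedness of $z_1$ together with the known first-iteration analysis, supplying the initial $m^{-1/2}$-type scale. A union bound over the finitely many scales controls the total failure probability at $1 - O(m^{-3})$, while the hypotheses $m \gtrsim s^{10}\log^{10}(N/s)$ and $m > 24^{48}$ are precisely the thresholds needed for the RAIC to be nonvacuous at every scale down to the optimal one.

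The main obstacle is unquestionably the RAIC itself. Because $\text{sign}(\cdot)$ is discontinuous, the operator $z \mapsto A^T\text{sign}(Az)$ is not Lipschitz, so the standard RIP machinery and its chaining proofs do not apply, and a small perturbation of $z$ can flip many signs. The crux is to obtain a uniform deviation bound whose floor $\eta(r)$ genuinely decays as $r \to 0$: a plain, non-localized uniform bound would only yield the $m^{-1/2}$ rate and could never break the inverse-square-root barrier. Capturing this localization requires controlling, simultaneously over the sparse annulus, both the number of sign disagreements and the norm of the associated signed sum of Gaussian rows, and tracking how both contract with the radius; this scale-by-scale concentration, rather than any single global estimate, is the technical heart of the argument.
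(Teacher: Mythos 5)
Your proposal follows the paper's strategy almost exactly: a scale-localized RAIC-type uniform deviation bound, a geometric closing of the exponent gap with ratio $5/6$ driven by the recursion $r_{i+1} \asymp r_i^{5/6} m^{-1/6}$ (up to logarithmic and $C(N,s,m)$ factors), roughly $25$ contraction steps per stage producing the $\lfloor k/25 \rfloor$, the first-iteration bound as base case, and a union bound over the $L = O(\log \log m)$ scales for the $1 - O(m^{-3})$ probability. However, two points in your write-up are genuine gaps rather than routine details. First, you state the RAIC in the unrestricted Euclidean norm, and in that form it is false in the compressed sensing regime: with $u = \text{sign}(Ax) - \text{sign}(Az)$ supported on roughly $rm$ flipped rows, the vector ${1 \over m}A^T u$ has full $\ell_2$ norm of order $\sqrt{rN/m}$, which cannot be dominated by $\rho\, r + \eta(r)$ unless $m \gtrsim N/r$ --- impossible as $r \to 0$ when $N \gg m$. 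The paper instead measures the deviation in the restricted dual norm $\|\cdot\|_{K_1^\circ}$ (supremum of inner products with differences of sparse vectors of norm at most one), which is the quantity the $\epsilon$-net, Bernstein, and local-binary-embedding machinery actually controls, and then converts this into a Euclidean bound on $\|T_s(z_{k+1}) - x\|_2$ via the metric projection lemma (Corollary \ref{cor: Metrical_projection bound}), yielding the factor $4$ in \eqref{ineq:Main_inequality}. Your remark that the projection and normalization ``distort the error only by a universal constant factor'' is precisely where this dual-norm-to-Euclidean conversion must occur; it does not follow from an $\ell_2$ deviation bound alone.

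Second, your multiscale induction tacitly assumes that once the error reaches the terminal scale $r_L$ it stays there, but the contraction machinery is only available on the annuli $r_{i+1} \le \|x_k - x\| \le r_i$ with $i < L$, and a single NBIHT step from inside the $r_L$-ball can eject the iterate, since there is no monotonicity. Because the theorem asserts the bound for \emph{all} $k$, the paper needs a separate stability argument in this regime (its ``Regime $B$''): using the local binary embedding to bound the number of sign flips and the uniform bound on $\max_j \|a_j\|_{K_1^\circ}$, it shows $\|x_{k+1} - x\|_2 \le 8 C_{10}\, r_L \log m \cdot C(N,s,m)$ whenever $\|x_k - x\| \le r_L$, and then reapplies the multiscale corollary if the iterate lands back outside the $r_L$-ball; the extra $\log m \cdot C(N,s,m)$ factor is absorbed using $m^{{1 \over 2}(5/6)^{L-1}} = O(1)$, which follows from the definition of $L$. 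Without this step your argument only proves the stated bound up to the first time the iterate enters the finest scale, not for all subsequent iterations. Both gaps are repairable with exactly the paper's devices, but each is a missing idea, not bookkeeping.
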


\begin{proof}
	See Theorem \ref{thm:Main_gloabl_convergence3} and its proof.
\end{proof}

By lettting the iteration number $k \rightarrow \infty$, Theorem \ref{thm: Main Convergence Thm} imples the following Corollary about the approximation error of NBIHT. 

\begin{cor}
	\label{cor:Main_global_convergence_Cor}
	Under the same assumptions in Theorem \ref{thm: Main Convergence Thm}, the NBIHT iterates $x_k$ converge to $x$ up to approximation error $  C{ (s \log(N/s) )^{7/2}(\log m)^{12}  \over m} $ with probability at least $1 - O \left({1 \over m^3} \right)$ provided $m > \max\{ c \cdot s^{10} \log^{10}(N/s), 24^{48} \}$.
\end{cor}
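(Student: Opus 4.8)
The plan is to obtain this corollary directly from Theorem~\ref{thm: Main Convergence Thm} by letting the iteration index $k \to \infty$; all of the analytic work is already contained in that theorem, so what remains is a limit computation together with a check on the probabilistic statement. Writing
\[
e_k := 1 - \frac{1}{2}\Bigl(\frac{5}{6}\Bigr)^{\lfloor k/25 \rfloor - 1},
\]
Theorem~\ref{thm: Main Convergence Thm} gives $\|x_k - x\|_2 \le C (s\log(N/s))^{7/2}(\log m)^{12}\, m^{-e_k}$ for every $k$, and the goal is to send $e_k$ to $1$.

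The step that requires care --- and the one I would flag as the only real obstacle --- is the probabilistic bookkeeping. Because the corollary concerns the behavior of the \emph{entire} trajectory $\{x_k\}_{k}$ as $k \to \infty$, I must be sure that the event of probability at least $1 - O(1/m^3)$ is a single event on which the per-iterate bound holds simultaneously for all $k$, rather than a family of events indexed by $k$. A naive union bound over infinitely many iterates would of course be useless. This is precisely what the multiscale RAIC construction of Section~\ref{section:Multiscale analysis}, realized in Theorem~\ref{thm:Main_gloabl_convergence3}, is designed to deliver: the good event is assembled once from uniform deviation bounds over the sequence of annuli, and on it the contraction --- hence the displayed bound --- holds at every iterate. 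I would therefore condition on that single event, so that no additional randomness is spent and the probability $1 - O(1/m^3)$ passes through verbatim.

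On this event the conclusion is immediate. Since $\lfloor k/25 \rfloor \to \infty$ and $5/6 < 1$, we have $(5/6)^{\lfloor k/25 \rfloor - 1} \to 0$, so $e_k$ increases monotonically to $1$; and because $m > \max\{c\, s^{10}\log^{10}(N/s), 24^{48}\} > 1$, the map $t \mapsto m^{-t}$ is decreasing, so $m^{-e_k}$ decreases monotonically to $m^{-1}$. Taking $\limsup_k$ of the per-iterate bound therefore yields
\[
\limsup_{k \to \infty}\|x_k - x\|_2 \le C \frac{(s\log(N/s))^{7/2}(\log m)^{12}}{m},
\]
which is exactly the stated approximation error; since the right-hand side is independent of $k$, this is the sense in which the iterates converge to $x$ up to that error. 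I do not expect any genuine difficulty beyond the uniformity-in-$k$ point above.
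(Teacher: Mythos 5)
Your proposal is correct and matches the paper's own (one-line) argument: the paper obtains the corollary exactly by letting $k \to \infty$ in Theorem~\ref{thm: Main Convergence Thm}, and your limit computation $m^{-e_k} \downarrow m^{-1}$ is the whole content. Your extra care about uniformity in $k$ is well placed and is indeed justified by the paper's proof of Theorem~\ref{thm:Main_gloabl_convergence3}, where the union bound is taken over the $L = O(\log\log m)$ scale levels $K_1,\dots,K_L$ rather than over iterations, so the bound holds for all $k$ simultaneously on a single event of probability $1 - O\left({1 \over m^3}\right)$.
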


\begin{rem}
	Jacques et.al.  \cite{jacques2013robust} lower bounded the best achievable reconstruction error from one-bit (sign) measurements. Specifically, Theorem 1 in \cite{jacques2013robust} shows that the approximation error decay rate cannot be faster than $\Omega({s \over m})$ no matter what algorithm we use as long as $m = \Omega(s^{3/2})$. Thus, Corollary \ref{cor:Main_global_convergence_Cor} reveals that the NBIHT algorithm achieves the best possible approximation error decay rate in $m$. 
\end{rem}

\begin{rem}
    The number of measurements to have the optimal error decay rate in our theory is substantially large. We have focused on proving the optimal error decay rate not optimizing the constant. 
\end{rem}


\section{Multiscale analysis based on restricted approximate invertibility condition}
\label{section:Multiscale analysis}

This section illustrates the idea behind the proof of our main convergence result. First, we introduce the dual norm which is used in the rest of the paper.   

\begin{dfn}
	Let $G$ be a symmetric subset in $\R^N$. The dual norm of $G$ is given by
	\[
	\|x\|_{G^\circ} := \sup\limits_{u \in G} \inner{x,u}, \quad x \in \R^N.
	\] We also denote $(G-G) \cap dB^N_2$ by $G_d$. Moreover, when $G$ is a cone, we see that  $\|x\|_{G_t^\circ} = t 	\|x\|_{G^\circ}$ for any $t > 0$. 
\end{dfn} 

Recall that in the standard compressed sensing, one wishes to reconstruct an $s$-sparse vector $x$ from its linear measurements $Ax$. There are numerous algorithms available for this problem including \emph{iterative hard thresholding (IHT)}, which is commonly used due to its simplicity and computational efficiency. We present IHT in Algorithm \ref{alg:IHT} as it has motivated  BIHT-type algorithms and our NBIHT convergence proof hinges, in part, on the analysis of IHT.

\begin{algorithm}
	\caption{IHT }
	\begin{algorithmic}
		\Inputs $m\times N$ standard Gaussian random matrix $A$, measurements $b=Ax$, sparsity level $s$
		
		\Initialize	Pick a $x_0$ with sparsity level $s$.
		
        \Iterate Repeat until a stopping criteria on $k$ is met
		 
		\textbf{Compute $z_{k+1}$}
		\[
		 z_{k+1} := x_k + {1 \over m} A^T(Ax - Ax_k) 
		\]
		
		\textbf{Project to $s$-sparse vector sets}\\
		\[
		 x_{k+1} := T_s(z_{k+1}) 
		\]
		
		\Outputs The approximate $s$-sparse solution to $Ax=b$.
		
	\end{algorithmic}
	\label{alg:IHT}
\end{algorithm}

The proofs for the linear convergence of IHT typically start with showing $\| z_{k+1} -x \|_2 \le c \| x_k -x  \|_2$ for some $0 < c < 1$. Since we want to convey the idea, let us assume $c = {1 \over 8}$, i.e., we have $\| z_{k+1} -x \|_2 \le {1 \over 8} \| x_k -x  \|_2$. Then, a standard argument implies a contraction inequality $\|x_{k+1} - x \|_2 \le {1 \over 2} \| x_k -x\|_2$ which leads to the linear convergence of IHT. To prove the inequality $\| z_{k+1} -x \|_2 \le {1 \over 8} \| x_k -x  \|_2$, define ${\tilde A} := {1 \over \sqrt{m}}A$, which satisfies the RIP with high probability for sufficiently large $m$ \cite{foucart2013invitation}.
Since $z_{k+1} -x = {1 \over m} A^T(Ax - Ax_k) - (x - x_k)  =  {\tilde A}^T({\tilde A} x - {\tilde A} x_k) - (x - x_k) $, it is evident that controlling the term ${\tilde A}^T({\tilde A} x - {\tilde A} x_k) - (x - x_k) $ plays a critical role in the convergence analysis. This term is controlled with the RIP which implies that ${\tilde A}^T{\tilde A}$ acts as approximately the identity when restricted to sparse vectors.  The RIP is equivalent to the following equation \cite{foucart2013invitation} which will naturally connect to the RAIC:
\[
\sup_{y \in K} 	 \left\| {1 \over m} \cdot { A}^T({ A} x - { A} y) - (x - y)  \right \|_{K_1^\circ} = \sup_{y \in K} 	 \left\| {\tilde A}^T({\tilde A} x - {\tilde A} y) - (x - y)  \right \|_{K_1^\circ}  \le {1 \over 8}||x-y||_2.
\] This is the key inequality in the proof for the linear convergence proof of IHT. 

Unfortunately, due to the discontinuity of  $\text{sign}(A\cdot)$, the measurement operator in one-bit compressed sensing does not have such handy properties; Arbitrarily close two vectors can be mapped to points which are ``far" under the operator $\text{sign}(A\cdot)$. However, it turns out that the randomness of $\text{sign}(A \cdot)$ still allows us to derive a ``restricted approximate invertibility condition (RAIC)" which makes it possible to replace the RIP to facilitate the analysis and we think it is interesting on its own. Informally speaking, the RAIC is similar to RIP but holds only on an annulus at $x$ ($r_{i+1} \le ||x-y||_2 \le r_i$) with a small additive error. To be more precise, the RAIC with the associated parameters $r_i$ and $r_{i+1}$ holds if
\[
\sup_{\substack{y \in K \cap \mathbb{S}^{N-1} \\ r_{i+1} \le ||x-y||_2 \le r_i }} 	{\left\| {\tau \over m} \cdot A^T(\text{sign}(Ax) - \text{sign}(Ay)) -  (x-y) \right \|_{K_1^\circ}   }  \le {1 \over 8}||x-y||_2 + {3 \over 50} r_{i+1}.
\] 
Here $\{r_j\}_{j=1}^L$ is a finite decreasing  sequence of real numbers with $r_1 \approx  {\sqrt{s \log (N/s)} + \sqrt{\log m}  \over m^{1/2}} $ and $r_L \lesssim {(s \log(N/s) )^{7/2}(\log m)^{12} \over m}$, whose exact specification will be given later. It is well-known that the first iteration of the NBIHT, say $x_1$, satisfies $\|x_1 - x\| \le r_1$ with high probability. 

Then, the rest of analysis is based on a contraction-type inequality for the approximation error (up to a small additive term) that is obtained by applying the RAIC with the current error scale; For example, if the $k$-th NBIHT iterate, $x_k$ satisfies $r_{i+1} \le ||x-x_k||_2 \le r_i$ for some $1 \le i \le L$, then we apply the RAIC with parameters $r_i$ and $r_{i+1}$. Repeated applications of the RAIC combined with an elementary argument imply that $x_k$ converges to $x$ until it reaches the approximation error scale $r_L$. Lastly, once $x_k$ reaches the error scale $r_L$, we show that the rest of the NIBHT iterates stay close to $x$ with the same error scale as $r_L$.

\section{Outline of the proof of main technical results}

In this section, we present the proof outline of the main technical results including the RAIC.

\subsection{Sketch of the proof of our main convergence result based on the RAIC}
\begin{enumerate} 
	
	\item Let $K$ be the set of $s$-sparse vectors. First, from the gradient step iteration for the NBIHT, $z_{k+1} = x_k + {\tau \over m} A^T(\text{sign}(Ax) - \text{sign}(Ax_k))$. Using a standard argument for the projections to $K\cap \mathbb{S}^{N-1}$, we will see that
	\begin{align}
	\label{outline_step1}
	\|x_{k+1} -x \|_2 \le 4 \left\|  {\tau \over m}\cdot A^T(\text{sign}(Ax) - \text{sign}(Ax_k)) - (x-x_k) \right \|_{K_1^\circ}.
	\end{align}

	\item 	Let $L$ be the largest positive integer satisfying $m^{{1 \over 40} ({5 \over 6})^L} > 24$. Such $L$ always exists from the assumption $m > 24^{48} = 24^{40\left({6 \over 5}\right)}$. Suppose $i$ is a postive integer less than $L$. We define a sequence of sets, $K^{(i)} := \{y \; |\; \text{y is an $s$-sparse unit vector, } r_{i+1} \le \|y-x\| \le r_i\}$ where $\{r_i\}$ is a decreasing sequence of real numbers satisfying
	\[
	r_i \lesssim m^{- \left(1 -{1 \over 2}\left({5 \over 6} \right)^{i-1} \right)}  (\log m)^{10} (s \log (N/s))^3.
	\] By the choice of $L$ and the sequence $\{r_i\}$, it turns out that $r_L \lesssim {(s \log(N/s) )^{7/2}(\log m)^{12} \over m}$. The exact specifications of $r_i$ will be given in Section \ref{section:Toward to the proof of NBIHT convergence}.
	
	\item 
	The RAIC says that for all $y \in K^{(i)}$, 
	\begin{align}
	\label{ineq:RAIC_outline}
	\left\|  {\tau \over m}\cdot A^T(\text{sign}(Ax) - \text{sign}(Ay)) - (x-y) \right \|_{K_1^\circ} \le {3 \over m^{{1 \over 40} ({5 \over 6})^{i}}} \|x - y\|_2 + {3 \over 50}r_{i+1}
	\end{align}
	 with high probability. This bound and \eqref{outline_step1} imply that
	\[
	\|x_{k+1} -x \|_2  \le {12 \over m^{{1 \over 40} ({5 \over 6})^{i}}} \|x - x_k\|_2 +  {6 \over 25}r_{i+1}
	\] for all $x_k$ with $r_{i+1} \le \|x-x_k\| \le r_i$. Since $m^{{1 \over 40} ({5 \over 6})^L} > 24$ and $i < L$, $m^{{1 \over 40} ({5 \over 6})^i} > 24$. Thus,  
	\[
	\|x_{k+1} -x \|_2  \le {1 \over 2} \| x_k - x\|_2 +  {6 \over 25} r_{i+1}.
	\]
	Note that ${1 \over 2} \|x_k - x\|_2 + {6 \over 25} r_{i+1} \le {1 \over 2} r_i + {1 \over 2} r_i \le r_i$, so $\|x_{k+1} -x\|_2 \le r_i$. Then, a simple induction argument yields the following: For any $t \ge 1$, either there exists $p$ with $1 \le p \le t$ such that $\|x_{k+p} -x\| \le r_{i+1}$ or 
	\[
	\|x_{k+t} -x \|_2  \le \left({1 \over 2}\right)^t \|x - x_k\|_2 +  {12 \over 25} r_{i+1}
	\] with high probability.
	
	From the relation between $r_i$,$r_{i+1}$ and the previous inequality, one can easily show that $\|x_{k+p} -x\| \le r_{i+1}$ for some $p$ with $1 \le p \le 25$. In other words, $25$ iterations are enough for the BIHT iterates to reach the next level $K^{(i+1)}$. Repeating this argument yields that $x_k$ approaches $x$ until it reaches the level set $K^{(L)}$, guaranteeing the approximation error $\approx r_L$ up to some logarithmic factors. 
	This is the main idea of Theorem \ref{thm: Main Convergence Thm}.
	
\end{enumerate}
	
\subsection{Sketch of the proof of the RAIC}	
It remains to present the idea of the proof the RAIC, which turns out to be the main challenge in our approach. Here are the key ingredients of the proof.
\begin{itemize}
	
	\item Local Binary Embedding (LBE) \cite{oymak2015near}: The LBE is a refined version of the Binary $\epsilon$-Stable Embedding (B$\epsilon$SE) by Jacques et.al. \cite{jacques2013robust}. The B$\epsilon$SE states that the distance of any two $s$-sparse unit vectors is close to the normalized Hamming distance of their one-bit measurements up to additive error $\epsilon$. It is also proved that the B$\epsilon$SE holds for an $m \times N$ Gaussian random measurement matrix with $\epsilon$ of order of $\sqrt{s/M}$ with high probability, but we were not able to apply B$\epsilon$SE to achieve our decay rate essentially because $\sqrt{1 \over M} \gg {1 \over M}$.
	
	On the other hand, although the LBE resembles the B$\epsilon$SE, the key difference is that its additive error in the embedding becomes smaller as two vectors are sufficiently close to each other. In other words, we have more accurate control of the additive error if one vector belongs to a small neighborhood of the other.  
	
	\item The standard $\epsilon$-net argument is used for a uniform approximation of Lipschitz continuous parts in our analysis, whereas the LBE is applied to the discontinuous part involving the $\text{sign}(\cdot)$ operation.
	
	\item We also use a concentration argument based on the bounded Bernstein's inequality in several places. 
\end{itemize}

\begin{enumerate} 
	\item The first step of our proof is the orthogonal decomposition of a Gaussian random vector into three components along the directions of $x-x_k$, $x+x_k$ and their orthogonal complement to facilitate the analysis. More precisely, given a unit vector $y$, consider the following decomposition of each measurement vectors $a_i$
	\[
	a_i = \inner{a_i, {x -y \over \|x -y \| }} {x -y \over \|x -y \| } + \inner{a_i,{x +y \over \|x +y \|}  }{x +y \over \|x +y \| } + b_i(x,y),
	\] where $b_i(x,y)$ is the component of $a_i$ orthogonal to $x -y$ and $x +y$. This decomposition allows us to decouple the right hand side of \eqref{ineq:RAIC_outline} into three parts, each of which has a similar structure. 
	As will be made more precise in later sections, the analysis of each part essentially boils down to controlling the following form of a sum:
	\begin{align}
		\label{eq:Multiplier_process_sum}
		{\tau \over m} \sum\limits_{i=1}^m \left [( \text{sign}(\inner{a_i,x} ) - \text{sign}(\inner{a_i,y})) g(a_i, y) \right ]	- \mathbb{E}_{a} \left[ ( \text{sign}(\inner{a,x} ) - \text{sign}(\inner{a,y})) g(a, y)\right]
	\end{align}
	for all $s$-sparse unit vectors $y$. Here $g$ is a jointly $1$-Lipschitz continuous function bounded by $C(\sqrt{s \log (2N/s)} + \sqrt{\log m})$ for some fixed constant $C \ge 1$. For the sake of illustration, we further assume $\mathbb{E}_{a} \left[ ( \text{sign}(\inner{a,x} ) - \text{sign}(\inner{a,y})) g(a, y)\right] = 0$ for all unit vector $y$.

    \item  We are now ready to present the idea on how to control the sum  \eqref{eq:Multiplier_process_sum} by giving strings of inequalities with technical details followed by justifications. First, let $\mathcal{N}_{K^{(i)}}$ be an $\epsilon$-net of $K^{(i)}$ with $\epsilon := \delta_i$ which will be defined in Section \ref{section:Toward to the proof of NBIHT convergence}. So, for any $y \in K^{(i)}$, there exists $\hat{y} \in \mathcal{N}_{K^{(i)}}$ with $\|y-\hat{y}\|_2 \le \delta_i$.
    Then, 
	\begin{align*}
			& \left | {\tau  \over m} \sum\limits_{i=1}^m  [( \text{sign}(\inner{a_i,x} ) - \text{sign}(\inner{a_i,y})) g(a_i, y) \right | \\
			&  \stackrel{(i)}{\le} \left| {\tau  \over m} \sum\limits_{i=1}^m \left [( \text{sign}(\inner{a_i,x} ) - \text{sign}(\inner{a_i,y}) g(a_i, y) \right ] - {\tau  \over m} \sum\limits_{i=1}^m \left [( \text{sign}(\inner{a_i, x } ) - \text{sign}(\inner{a_i,\hat{y}})) g(a_i, y) \right ] \right | \\
			& \qquad +\left | {\tau  \over m} \sum\limits_{i=1}^m  [( \text{sign}(\inner{a_i,x} ) - \text{sign}(\inner{a_i,\hat{y}})) g(a_i, y) \right | \\
			&  \stackrel{(ii)}{\le}   {\tau  \over m} \sum\limits_{i=1}^m \left | \text{sign}(\inner{a_i, y } ) - \text{sign}(\inner{a_i,\hat{y}}) \right| |g(a_i, y)| +\left | {\tau  \over m} \sum\limits_{i=1}^m  ( \text{sign}(\inner{a_i,x} ) - \text{sign}(\inner{a_i,\hat{y}})) g(a_i, y) \right | \\
			 & \stackrel{(iii)}{\le}   2 C\tau  \cdot (\sqrt{s \log (2N/s)} + \sqrt{\log m}) \log m \cdot \delta_i + \left | {\tau  \over m} \sum\limits_{i=1}^m  ( \text{sign}(\inner{a_i,x} ) - \text{sign}(\inner{a_i,\hat{y}})) g(a_i, y) \right | \\
			& \stackrel{(iv)}{\le}   2 C \tau \cdot (\sqrt{s \log (2N/s)} + \sqrt{\log m}) \log m \cdot \delta_i \\
			&\qquad + {\tau  \over m} \sum\limits_{i=1}^m  \left | \text{sign}(\inner{a_i,x} ) - \text{sign}(\inner{a_i,\hat{y}})\right | |g(a_i, y) -  g(a_i, \hat{y}) | \\
			&\qquad +   \left | {\tau  \over m} \sum\limits_{i=1}^m  ( \text{sign}(\inner{a_i,x} ) - \text{sign}(\inner{a_i,\hat{y}})) g(a_i, \hat{y}) \right | \\
			&\stackrel{(v)}{\le}  4 C \tau \cdot (\sqrt{s \log (2N/s)} + \sqrt{\log m}) \log m \cdot \delta_i  +   \left | {\tau  \over m} \sum\limits_{i=1}^m  ( \text{sign}(\inner{a_i,x} ) - \text{sign}(\inner{a_i,\hat{y}})) g(a_i, \hat{y}) \right |\\	
			&\stackrel{(vi)}{\le}  4 C \tau \cdot (\sqrt{s \log (2N/s)} + \sqrt{\log m}) \log m \cdot \delta_i  + {2 \over m^{{1 \over 40} ({5 \over 6})^{i}}}\|x - \hat{y}\|_2 + {1 \over 600} r_{i+1}\\
			&\stackrel{(vii)}{\le}   {2 \over m^{{1 \over 40} ({5 \over 6})^{i}}}\|x - \hat{y}\|_2 + {4 \tau \over 600} r_{i+1} \\
			&\stackrel{(viii)}\le  {2 \over m^{{1 \over 40} ({5 \over 6})^{i}}}\|x - y\|_2 + 2 \|y - \hat{y} \| + {4 \tau \over 600} r_{i+1} \\
			&\stackrel{(ix)}\le  {2 \over m^{{1 \over 40} ({5 \over 6})^{i}}}\|x - y\|_2 + {3 \over 200} r_{i+1}.
	\end{align*}
	Here are justifications for the above chain of inequalities: (i), (ii), (iv), and (viii) follow from the triangle inequality. (iii) is due to the local binary embedding, the $\epsilon$-net $\mathcal{N}_{K^{(i)}}$, and the fact that $g$ is bounded by $C(\sqrt{s \log(2N/s)} + \sqrt{\log m})$. For (v), we used a standard $\epsilon$-net argument since $g$ is $1$-Lipschitz. Next, (vi) follows from the Bernstein's inequality for mean-zero bounded random variables. This step is important since we exploit the cancellation of terms in the sum based on a concentration inequality. Simple arguments without considering the cancellation effect in the sum would yield  $C(\sqrt{s \log (2N/s)} + \sqrt{\log m}) \log m \cdot r_{i+1} > r_{i+1}$, which is not good enough to show the RAIC used in \eqref{ineq:RAIC_outline}. Lastly, (vii) and (ix) are from the relation between $\delta_i$ and $r_{i+1}$. 
\end{enumerate}

\section{Toward the proof of NBIHT convergence}
\label{section:Toward to the proof of NBIHT convergence}
Several previous works in one-bit compressed sensing are based on the binary stable embedding property (BSE) type of inequalities \cite{jacques2013robust,jacques2013quantized,plan2012robust}. Although this property is interesting on its own, it is not as strong as the RIP in the standard compressed sensing when we have full linear measurements for sparse signals, so we were not able to use the BSE to achieve our goal. Instead, our proof for the main theorem relies on various different tools and this section provides them before the proof appears. 

We begin with the following proposition relating the approximation error of NBHIT iterates $x_k$ to $T_s(z_k)$ in Algorithm \ref{alg:NBIHT}.
\begin{prop}
    \label{prop:First_step_recurrence}
	If $T_s$ is the hard thresholding operator and $x_k = T_s(z_k)/ \|T_s(z_k)\|$, then
	\begin{align}
	\label{eq:First_step_recurrence}
	\|x_k- x\|_2 &\le \|T_s(z_k) - x_k\|_2 + \|T_s(z_k) - x\|_2  \le 2\|T_s(z_k) - x\|_2.
	\end{align} 
\end{prop}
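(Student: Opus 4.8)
The plan is to handle the two inequalities separately; the first is immediate, and the second rests on a single geometric observation about normalization. For the leftmost inequality I would simply invoke the triangle inequality: writing $x_k - x = (x_k - T_s(z_k)) + (T_s(z_k) - x)$ and taking norms gives $\|x_k - x\|_2 \le \|T_s(z_k) - x_k\|_2 + \|T_s(z_k) - x\|_2$, using nothing beyond the definition of $x_k$.

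The substance lies in the rightmost inequality, which is equivalent to the single claim $\|T_s(z_k) - x_k\|_2 \le \|T_s(z_k) - x\|_2$. I would establish this by recognizing that $x_k = T_s(z_k)/\|T_s(z_k)\|_2$ is exactly the metric projection (nearest point) of $T_s(z_k)$ onto the unit sphere $\mathbb{S}^{N-1}$. Concretely, for any unit vector $u$ one expands
\[
\|T_s(z_k) - u\|_2^2 = \|T_s(z_k)\|_2^2 - 2\inner{T_s(z_k), u} + 1,
\]
so minimizing the left side over $u \in \mathbb{S}^{N-1}$ amounts to maximizing $\inner{T_s(z_k), u}$. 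By Cauchy--Schwarz this inner product is at most $\|T_s(z_k)\|_2$, with equality precisely when $u = T_s(z_k)/\|T_s(z_k)\|_2 = x_k$. Hence $x_k$ minimizes the Euclidean distance to $T_s(z_k)$ among all unit vectors.

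To finish I would apply this minimality with $u = x$: since $x$ is an $s$-sparse unit vector it lies on $\mathbb{S}^{N-1}$, so $\|T_s(z_k) - x_k\|_2 \le \|T_s(z_k) - x\|_2$, and adding $\|T_s(z_k) - x\|_2$ to both sides delivers the rightmost inequality. There is no genuine obstacle here; the only point deserving care is that the normalization (and hence the entire statement) presupposes $T_s(z_k) \ne 0$, so that $x_k$ is well defined and the Cauchy--Schwarz equality case is attained. This holds whenever $z_k$ has at least one nonzero entry, which is guaranteed along the NBIHT iteration, so I would record it as a standing assumption rather than treat it as a difficulty.
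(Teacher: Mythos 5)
Your proposal is correct and takes essentially the same route as the paper: the first inequality by the triangle inequality, and the second from the fact that $x_k$ is the closest point on the unit sphere to $T_s(z_k)$ while $x$ itself is a unit vector --- the paper simply asserts this nearest-point fact, whereas you verify it via the Cauchy--Schwarz expansion. Your observation that the statement presupposes $T_s(z_k) \ne 0$ is a reasonable caveat the paper leaves implicit.
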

\begin{proof}
	The first inequality is by the triangle inequality. The second inequality follows from the facts that $\|x\| = \|x_k\| = 1$ and $x_k$ is the closest point on the unit sphere to $T_s(z_k)$.
\end{proof}

\subsection{The first iteration of normalized BIHT}
\label{section:FirstIteration} 
The first iteration of the BIHT is investigated by several papers \cite{jacques2013quantized, plan2016high}. In particular, Proposition 1 and 2 in \cite{jacques2013quantized} state that $\|T_s(z_1) - x\| = O \left( {\sqrt{s \log (N/s)} + \sqrt{\log m}  \over m^{1/2}}\right)$ with high probability. The following statement makes this precise.

\begin{prop}
    \label{prop:First_iteration_bound}
	Let $A$ be a $m \times N$ Gaussian random matrix. Then, there exists constant $\bar{c} \ge 1$ such that 
	\[
	\|T_s(z_1) -x \| \le {1 \over 2} \cdot {\sqrt{s \log (N/s)} + \sqrt{\log m} \over m^{1/2}}
	\] with probability at least $1 -  \bar{c} \left( ({s \over N})^{5 s} \cdot {1 \over m^5} \right)$.

\end{prop}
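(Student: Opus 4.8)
The plan is to strip off the hard-thresholding operator by a standard sparse-projection inequality, and thereby reduce the claim to a concentration bound for a single random \emph{restricted deviation}. Write $\hat x := T_s(z_1)$ and $T := \supp(\hat x)\cup\supp(x)$, so $|T|\le 2s$. Since $T_s$ keeps the $s$ largest entries of $z_1$, the restriction $\hat x_T$ is the best $s$-term approximation of $(z_1)_T$ among vectors supported in $T$; because $x$ is itself a feasible $s$-sparse competitor supported in $T$, this gives $\|\hat x_T-(z_1)_T\|_2\le\|x-(z_1)_T\|_2=\|(z_1-x)_T\|_2$. As $\hat x$ and $x$ are both supported in $T$, the triangle inequality yields
\[
\|T_s(z_1)-x\|_2 \le \|\hat x_T-(z_1)_T\|_2+\|(z_1)_T-x\|_2 \le 2\|(z_1-x)_T\|_2 \le 2\|z_1-x\|_{K_1^\circ},
\]
using that $T$ is $2s$-sparse and that $\|w\|_{K_1^\circ}=\sup_{|S|\le 2s}\|w_S\|_2$ for $K=\Sigma_s^N$. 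It therefore suffices to bound the restricted deviation $\|z_1-x\|_{K_1^\circ}$.

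Next I would exploit unbiasedness. Since $\tau=\sqrt{\pi/2}$ and $x_0$ is independent of $A$, Proposition~\ref{prop:one-bit correlation with Gaussian} gives $\mathbb{E}z_1=x$, and I would split
\[
z_1-x = \Big(\tfrac{\tau}{m}A^T\text{sign}(Ax)-x\Big) - \Big(\tfrac{\tau}{m}A^T\text{sign}(Ax_0)-x_0\Big),
\]
so that by the triangle inequality in $\|\cdot\|_{K_1^\circ}$ it is enough to bound, for each \emph{fixed} unit vector $y\in\{x,x_0\}$, the centered supremum $\|\tfrac{\tau}{m}A^T\text{sign}(Ay)-y\|_{K_1^\circ}=\sup_{u\in\Sigma_{2s}^N\cap\mathbb{S}^{N-1}}\inner{\tfrac{\tau}{m}A^T\text{sign}(Ay)-y,\,u}$, and then combine the two through a union bound over the two (deterministic) events.

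The core step is a uniform concentration bound for this supremum over $2s$-sparse unit directions. I would fix a $\tfrac12$-net $\mathcal{N}$ of $\Sigma_{2s}^N\cap\mathbb{S}^{N-1}$, whose log-cardinality is $\lesssim s\log(N/s)$, and pass from the net to the full supremum at the cost of a factor $2$ (the difference $u-u'$ stays $2s$-sparse). For a fixed $u$ the inner product equals $\tfrac{\tau}{m}\sum_{i=1}^m(\xi_i-\mathbb{E}\xi_i)$ with $\xi_i:=\text{sign}(\inner{a_i,y})\inner{a_i,u}$; each $\xi_i$ is sub-Gaussian with $\mathbb{E}\xi_i^2=\|u\|_2^2=1$, so Bernstein's inequality (in its sub-Gaussian regime, for the small $t$ of interest) gives a tail $2\exp(-cmt^2)$. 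A union bound over $\mathcal{N}$ then forces $t\asymp\sqrt{(s\log(N/s)+\log m)/m}$, and the exponent left over beyond the net entropy is precisely what produces the stated failure probability $\bar c\,((s/N)^{5s}/m^5)$: the factor $(s/N)^{5s}$ is $5s\log(N/s)$ of slack beyond the $\sim 2s\log(N/s)$ net entropy, and the $1/m^5$ comes from the $\sqrt{\log m}$ term. Applying this to $y=x$ and $y=x_0$ and inserting into the projection bound completes the argument.

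The main obstacle is that $\text{sign}(A\cdot)$ is discontinuous, so the multiplier process $u\mapsto\inner{\tfrac{\tau}{m}A^T\text{sign}(Ay),u}$ is not Lipschitz in the measurements and no RIP-type control is available; the net argument must be combined with a concentration inequality that directly exploits the cancellation in the centered sum, rather than bounding term-by-term. The delicate quantitative point is pinning down the \emph{explicit} constant $\tfrac12$ together with the very small failure probability at the same time, since the net entropy $\sim s\log(N/s)$ must be dominated by the Gaussian-type tail with enough room left over to yield the $(s/N)^{5s}$ factor. I expect the sharp constant to require either the refined first-iteration estimates of Jacques et al.\ (their Propositions~1--2, cited above) or a Gaussian-width/chaining bound for the supremum, rather than the crude union bound sketched here.
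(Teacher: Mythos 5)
You should know that the paper contains no proof of Proposition~\ref{prop:First_iteration_bound} at all: it is imported from Propositions~1 and~2 of \cite{jacques2013quantized}, with Section~\ref{section:FirstIteration} merely remarking that those results ``make this precise.'' Your proposal is therefore a genuinely different treatment --- a self-contained reconstruction of the cited first-iteration estimate. Your ingredients are the right ones and largely mirror the paper's general machinery: your factor-$2$ projection inequality is exactly the specialization to $k=0$ of the argument the paper runs through Proposition~\ref{prop:First_step_recurrence} and Corollary~\ref{cor: Metrical_projection bound} (which likewise yields $\|T_s(z_1)-x\|_2\le 2\|z_1-x\|_{K_1^\circ}$), your centering uses the same Proposition~\ref{prop:one-bit correlation with Gaussian}, and the splitting $z_1-x=\bigl(\tfrac{\tau}{m}A^T\mathrm{sign}(Ax)-x\bigr)-\bigl(\tfrac{\tau}{m}A^T\mathrm{sign}(Ax_0)-x_0\bigr)$ into two \emph{fixed-direction} deviations is a clean observation that correctly exploits the independence of $x_0$ from $A$ --- it is precisely what makes the first iteration so much easier than the later ones, where the multiscale RAIC is needed. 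Your approach buys transparency about where the rate and the $(s/N)^{5s}m^{-5}$ probability come from; the paper's citation buys the explicit constant.

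Two points need repair, both fixable. First, your claim that for a $\tfrac12$-net of $\Sigma_{2s}^N\cap\mathbb{S}^{N-1}$ ``the difference $u-u'$ stays $2s$-sparse'' is false for a generic net: two $2s$-sparse vectors differ in up to $4s$ coordinates. The standard fix is to build the net support-by-support --- a $\tfrac12$-net of the unit sphere inside each of the $\binom{N}{2s}$ coordinate subspaces --- so that $u$ and its nearest net point share a support; this preserves both the sparsity of the difference and the entropy bound $\lesssim s\log(N/s)$. Second, and as you candidly flag yourself, the crude net-plus-Hoeffding argument (your $\xi_i$ are sub-Gaussian, so the tail is $2\exp(-cmt^2)$ for all $t$; Bernstein is not really needed) delivers the bound with an unspecified absolute constant $C$ in place of $\tfrac12$, and no amount of enlarging $m$ shrinks that constant since it multiplies the whole rate. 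The explicit $\tfrac12$ must come from a carefully quantified version of the concentration analysis, i.e., exactly the refined estimates of \cite{jacques2013quantized} you propose to invoke. Since that citation is precisely the paper's entire proof, your sketch together with it is no weaker than the paper's treatment --- but you should be aware that, as written, your argument proves the statement only up to absolute constants.
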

Propositions \ref{prop:First_step_recurrence} and \ref{prop:First_iteration_bound} together imply that
\begin{align}
\label{eq:bound_for_first_NBHIT_iteration}
\| x_1 - x \| \le  {\sqrt{s \log (N/s)} + \sqrt{\log m} \over m^{1/2}}
\end{align} with probability at least $ 1 - \bar{c} \left( ({s \over N})^{5 s} \cdot {1 \over m^5}\right)$.
Also note that $x_1$ is a unit $s$-sparse vector since it is an iterate of NBIHT. 

\subsection{Metric projection}
In this section, we introduce the notion of the restricted approximate invertibility condition (RAIC) and its connection to the approximation error $\|x_k - x\|$. We start with the definition of Gaussian width of a set.

\begin{dfn}
	The Gaussian width of a set $G \in \R^N$ is defined by
	\[
		w(G) := \mathbb{E} \sup\limits_{u \in G} \inner{h,u} 
	\] where $h \sim \mathcal{N}(0, I_N)$. Note that $w(G) = \mathbb{E} \|h\|_{G^\circ}$.
\end{dfn} 

The following Corollary 8.3. in \cite{plan2016high} allows us to control $\|x_{k+1} -x \|_2$.
\begin{cor}
\label{cor: Metrical_projection bound}
	Let $P_G$ be the orthogonal projection on a star-shaped set $G$. Then, for $z \in G$, we have
	\[
		\|P_G(w) -z \|_2 \le \max \left(t, {2 \over t} \|w-z\|_{G_t^\circ} \right) \quad \text{ for any $t > 0$}.
	\]
\end{cor}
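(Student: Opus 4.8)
The plan is to combine the distance-minimality of the projection with the star-shaped (cone-like) structure of $G$, in a two-case argument split according to whether $\|P_G(w)-z\|_2$ falls below or above the scale $t$. Write $\hat w := P_G(w)$ and $v := \hat w - z$. The only optimality property I need is that, since $z \in G$ and $\hat w$ is a closest point of $G$ to $w$, we have $\|w-\hat w\|_2 \le \|w-z\|_2$. Expanding $\|w-\hat w\|_2^2 = \|(w-z)-v\|_2^2$ and cancelling the common term $\|w-z\|_2^2$ gives at once the elementary inequality
\[
\|v\|_2^2 \le 2\inner{w-z,\,v}.
\]
I emphasize that this step uses neither convexity nor the usual variational characterization of the projection, only minimality together with $z\in G$; this is essential, since $G$ is merely star-shaped and $P_G(w)$ need not be unique (the argument is unaffected, as it only needs $\hat w$ to be some distance-minimizer).

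Next I case-split on $\|v\|_2$. If $\|v\|_2 \le t$ then trivially $\|\hat w - z\|_2 \le t \le \max\!\big(t,\ \tfrac{2}{t}\|w-z\|_{G_t^\circ}\big)$ and there is nothing to prove. The substantive case is $\|v\|_2 > t$, where I exploit the geometry of $G$. Setting $\lambda := t/\|v\|_2 \in (0,1)$, star-shapedness about the origin guarantees that both $\lambda\hat w \in G$ and $\lambda z \in G$, so the shortened vector $\lambda v = \lambda\hat w - \lambda z$ is a difference of two points of $G$ with $\|\lambda v\|_2 = t$; hence $\lambda v \in G_t = (G-G)\cap tB_2^N$. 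By the definition of the dual norm this gives $\inner{w-z,\,\lambda v} \le \|w-z\|_{G_t^\circ}$, equivalently $\inner{w-z,\,v} \le \tfrac{\|v\|_2}{t}\,\|w-z\|_{G_t^\circ}$.

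Combining this last estimate with the elementary inequality yields $\|v\|_2^2 \le \tfrac{2\|v\|_2}{t}\|w-z\|_{G_t^\circ}$, and dividing through by $\|v\|_2>0$ produces $\|\hat w - z\|_2 \le \tfrac{2}{t}\|w-z\|_{G_t^\circ} \le \max\!\big(t,\ \tfrac{2}{t}\|w-z\|_{G_t^\circ}\big)$, which settles the second case and hence the claim for every $t>0$. I do not anticipate a serious obstacle, since the proof is short; the single place where the geometry of $G$ genuinely enters is the rescaling step, and it is precisely star-shapedness — not convexity — that keeps the shortened difference $\lambda v$ inside $G_t$ so that the dual norm can be invoked. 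The mild subtlety to get right is to route the whole argument through $z$ and the rescaled point rather than through a variational inequality, so that the non-convexity of $G$ never causes trouble.
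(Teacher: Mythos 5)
Your proof is correct. The paper does not prove this corollary itself --- it simply cites Corollary 8.3 of \cite{plan2016high} --- and your argument (distance-minimality of $\hat w$ against the competitor $z\in G$ giving $\|v\|_2^2\le 2\inner{w-z,v}$, then in the case $\|v\|_2>t$ rescaling by $\lambda=t/\|v\|_2$ so that star-shapedness about the origin places $\lambda v=\lambda\hat w-\lambda z$ in $G_t=(G-G)\cap tB_2^N$, where the dual norm applies) is exactly the standard proof from that source, so nothing further is needed.
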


Since $T_s$ is the hard thresholding operator to the $s$-sparse vectors, by Corollary \ref{cor: Metrical_projection bound}, we have

\begin{align*}
\left\| T_s(z_{k+1}) - x \right \|_2 
& \le \max \left(t, {2 \over t} \left\| z_{k+1}- x \right \|_{K_t^\circ} \right) \\
& \le \max \left(t, {2 \over t} \cdot t \left\| z_{k+1}- x \right \|_{K_1^\circ} \right) \\
& \le 2 \left\| z_{k+1}- x \right \|_{K_1^\circ},
\end{align*} where the second inequality follows from the fact that $K$ is a symmetric cone and the third one is by taking $t = 2 \left\| z_{k+1}- x \right \|_{K_1^\circ}$. 

Hence, combining the above inequality, Proposition \ref{prop:First_step_recurrence}, and $z_{k+1} = x_k+ {\tau \over m}\cdot A^T(\text{sign}(Ax) - \text{sign}(Ax_k))$ yields
\begin{align}
	\label{ineq:Main_inequality}
	\|x_{k+1} -x \|_2 \le 4 \left\|  {\tau \over m}\cdot A^T(\text{sign}(Ax) - \text{sign}(Ax_k)) - (x-x_k) \right \|_{K_1^\circ}.
\end{align}

\begin{rem}
	Note that if we had linear measurements of $x$, not the one-bit measurements, the restricted isometry property (RIP) would be sufficient to establish a contraction, that is, $\|x_{k+1} -x \|_2 \le \rho \|x_k - x\|_2$ for some $0< \rho <1$. Indeed, 
	the $ \delta_{2s}$-RIP for matrix ${1 \over \sqrt{m}}A$ in can be recasted as 
	\[
	\max_{S \subset [N], \text{card}(S) \le 2s } \left \| {1 \over m} A^T|_S A|_S - I \right \|_{2 \rightarrow 2} = \delta_{2s}
	\] which implies that
	\[
	\left\| {1 \over m} A^T(Ax - Ay) -  (x-y) \right \|_{K_1^\circ}  \le \delta_{2s} ||x-y||_2
	\] for all $s$-sparse vectors $x$ and $y$ (See \cite{foucart2013invitation} for more details).
	
	Hence, if we had  $	\left\| {1 \over m} A^T(Ax - Ax_k) -  (x-x_k) \right \|_{K_1^\circ}$ instead of $\left\|  {\tau \over m}\cdot A^T(\text{sign}(Ax) - \text{sign}(Ax_k)) - (x-x_k) \right \|_{K_1^\circ}$ in the right hand side of \eqref{ineq:Main_inequality}, the RIP would give a contraction for sufficiently small $\delta_{2s}$, which leads to a linear convergence. 
\end{rem}

Unfortunately, because of a severe discontinuity of the operator $\text{sign}(A \cdot)$, we don't have the RIP. However we will show that the following property still holds, which provides ``local approximate version of RIP". 

\begin{dfn}[Restricted Approximate Invertibility Condition] \mbox{}\\
	We say that the matrix $M$ satisfies the $(\nu, \delta_{2s}, \eta_{2s}, r_{lb}, r_{ub})$-restricted approximate invertibility condition (RAIC) at $x$ (the ground truth $s$-sparse unit vector) if the following inequality holds for all $s$-sparse unit vectors $y$ with $r_{lb} \le \|x-y\|_2 \le r_{ub}$,
	\[
	\left\| \nu \cdot M^T(\text{sign}(Mx) - \text{sign}(My)) -  (x-y) \right \|_{K_1^\circ} \le \delta_{2s} ||x-y||_2 + \eta_{2s}.
	\]
\end{dfn}

Note that the RAIC is similar to the RIP except it holds for a certain region at $x$ with an additive error $\eta_{2s}$. As will be more clear in Proposition \ref{prop:Main_proposition}, we have the RAIC for the matrix $A$ with high probability within certain regions around $x$. At this point, one would notice that the RAIC can be applied to \eqref{ineq:Main_inequality} under appropriate conditions.

\subsection{Orthogonal decomposition of measurement vectors}

We will use the following orthogonal decomposition of Gaussian measurement vectors $a_i$, which is inspired by the decomposition technique in Plan et. al \cite{plan2016high}, as the first step in the proof of Theorem \ref{thm: Main Convergence Thm} in the next section.

\begin{lem}
	\label{lem:Orthogonal_decomposition_Gaussian_vector}
	Suppose that $a_i$'s are the standard Gaussian random vectors. 
	Let $x$, $y$ be unit vectors in $\R^N$. Since $x -y$ and $x +y$ are orthogonal to each other, then we have
	\[
	a_i = \inner{a_i, {x -y \over \|x -y \| }} {x -y \over \|x -y \| } + \inner{a_i,{x +y \over \|x +y \|}  }{x +y \over \|x +y \| } + b_i(x,y)
	\] where $b_i(x,y)$ is the component of $a_i$ orthogonal to $x -y$ and $x +y$.
\end{lem}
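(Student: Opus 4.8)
The plan is to treat this as the standard orthogonal decomposition of a vector onto a two-dimensional subspace and its orthogonal complement; the only input special to this setting is that the two spanning directions happen to be orthogonal. First I would verify this orthogonality: since $x$ and $y$ are unit vectors,
\[
\inner{x-y,\, x+y} = \|x\|^2 - \|y\|^2 = 1 - 1 = 0,
\]
so $x-y \perp x+y$. Here I tacitly assume $y \neq \pm x$, which is the only case of interest, since in the convergence analysis $y$ ranges over an annulus $r_{i+1} \le \|x-y\|_2 \le r_i$ with the $r_i$ bounded away from $0$ and $2$; the degenerate cases merely make the normalizations undefined and can be excluded.

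Next I would form the orthonormal pair
\[
u_1 := {x - y \over \|x-y\|}, \qquad u_2 := {x + y \over \|x+y\|},
\]
which by the previous step satisfies $\inner{u_1, u_2} = 0$ and $\|u_1\| = \|u_2\| = 1$. I then simply \emph{define} the residual
\[
b_i(x,y) := a_i - \inner{a_i, u_1} u_1 - \inner{a_i, u_2} u_2,
\]
so that the claimed identity $a_i = \inner{a_i, u_1} u_1 + \inner{a_i, u_2} u_2 + b_i(x,y)$ holds by construction. It remains only to check that $b_i(x,y)$ is genuinely the component orthogonal to both $x-y$ and $x+y$; this follows by pairing $b_i(x,y)$ against $u_1$ and $u_2$ and invoking orthonormality, e.g.\ $\inner{b_i(x,y), u_1} = \inner{a_i, u_1} - \inner{a_i, u_1}\cdot 1 - \inner{a_i, u_2}\cdot 0 = 0$, and symmetrically $\inner{b_i(x,y), u_2} = 0$.

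There is essentially no obstacle here: the statement is pure linear algebra and the Gaussian hypothesis plays no role in the decomposition itself. I expect the real reason the lemma is phrased for standard Gaussian $a_i$ is the payoff used downstream. Because $u_1, u_2$ are orthonormal and $a_i \sim \mathcal{N}(0, I_N)$, the coordinates $\inner{a_i, u_1}$ and $\inner{a_i, u_2}$ are independent standard Gaussians and are jointly independent of $b_i(x,y)$, which lives in the orthogonal complement. I would flag this independence as the structural fact that makes the decomposition worthwhile, since it is precisely what lets one decouple the summand in \eqref{eq:Multiplier_process_sum} into the three pieces along $x-y$, $x+y$, and $b_i(x,y)$ exploited in the proof outline.
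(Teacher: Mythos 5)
Your proposal is correct and takes essentially the same route as the paper, whose entire proof is the one-line observation that the two projection components are orthogonal by direct calculation (i.e.\ $\inner{x-y,\,x+y} = \|x\|^2 - \|y\|^2 = 0$ for unit $x,y$), with $b_i(x,y)$ implicitly defined as the residual. Your additional remarks --- excluding the degenerate case $y = \pm x$ where the normalizations are undefined, and flagging the Gaussian independence of $\inner{a_i,u_1}$, $\inner{a_i,u_2}$, and $b_i(x,y)$ as the downstream payoff --- are accurate and fill in details the paper leaves tacit.
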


\begin{proof}
	Since $x, y$ are unit vectors, one can easily check that $\inner{a_i, {x -y \over \|x -y \| }} {x -y \over \|x -y \| } $ and $ \inner{a_i,{x +y \over \|x +y \|}  }{x +y \over \|x +y \| }$ are orthogonal by a direct calculation. 
\end{proof}

\subsection{Uniform bounds}
\subsubsection{Uniform bounds for \texorpdfstring{ $ \left| \inner{a_i,{x +y \over \|x +y \|}  } \right|, \left| \inner{a_i,{x -y \over \|x -y \|}  } \right|, \|b_i(x,y)\|_{K_1^\circ}$} {}}
Since we will use the Bernstein's inequality for the bounded random variables later, we need to show $ \left| \inner{a_i,{x +y \over \|x +y \|}  } \right|$, $ \left| \inner{a_i,{x -y \over \|x -y \|}  } \right|$, $\|b_i(x,y)\|_{K_1^\circ}$ bounded with high probability.

First, the following upper bound of $\|a\|_{K_1^\circ}$ for the standard Gaussian random vector $a$ will be used repeatedly in this subsection. 
\begin{prop}
	\label{prop:Uniform_bound_for_dual_norm_Gaussian_vector}
	Suppose $a \sim \mathcal{N}(0,I_N)$. Then there exist absolute constants $C_b \ge 1$ and $0 < c_b \le 1$ such that with probability at least $1 - {2 \over m^5}$, we have
	\[
	\|a\|_{K_1^\circ} \le C_b  \sqrt{ s\log(N/s) } + \sqrt{ {5 \log m \over c_b}}.
	\]
\end{prop}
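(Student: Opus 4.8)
The plan is to reduce $\|a\|_{K_1^\circ}$ to a familiar object and then combine a Gaussian-width estimate with Gaussian concentration of measure. First I would unwind the definition: since $K$ is the set of $s$-sparse vectors, $K-K$ consists of $2s$-sparse vectors, so $K_1 = (K-K)\cap B^N_2$ is exactly the set of $2s$-sparse vectors of Euclidean norm at most one. Hence $\|a\|_{K_1^\circ} = \sup_{u\in K_1}\inner{a,u}$ equals the $\ell_2$-norm of the $2s$ largest-magnitude entries of $a$. Two observations then follow immediately: (a) $\mathbb{E}\,\|a\|_{K_1^\circ} = w(K_1)$ is the Gaussian width of $K_1$; and (b) the map $a\mapsto\|a\|_{K_1^\circ}$ is $1$-Lipschitz on $\R^N$, because it is a supremum of linear functionals $a\mapsto\inner{a,u}$ with $\|u\|_2\le 1$.

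For the mean, I would invoke the standard bound on the Gaussian width of the set of $2s$-sparse unit vectors, namely $w(K_1) \le C_b\sqrt{s\log(N/s)}$ for an absolute constant $C_b\ge 1$ (obtained from a covering/entropy estimate for sparse vectors, as in \cite{foucart2013invitation, plan2016high}). For the fluctuation, observation (b) lets me apply the concentration inequality for Lipschitz functions of a standard Gaussian vector: there is an absolute constant $0<c_b\le 1$ (the sharp value being $c_b = \tfrac12$) such that
\[
\Pr\big(\|a\|_{K_1^\circ} \ge \mathbb{E}\,\|a\|_{K_1^\circ} + t\big) \le e^{-c_b t^2}, \qquad t>0.
\]
Choosing $t = \sqrt{5\log m / c_b}$ makes the right-hand side equal to $m^{-5}\le 2/m^5$; combining this with the width bound gives $\|a\|_{K_1^\circ} \le C_b\sqrt{s\log(N/s)} + \sqrt{5\log m / c_b}$ on the complementary event, which is exactly the claim.

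The argument has no serious obstacle, since every ingredient is standard. The only point requiring genuine (though routine) work is the Gaussian-width estimate $w(K_1)\lesssim\sqrt{s\log(N/s)}$, and the sole piece of bookkeeping to watch is matching the constant in the concentration exponent to the target failure probability $2/m^5$ — which is precisely what the choice $t=\sqrt{5\log m/c_b}$ arranges. I would also note that only the \emph{upper} tail is needed here, so there is no loss in using the one-sided form of the concentration bound.
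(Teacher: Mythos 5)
Your proof is correct and follows essentially the same route as the paper's: bound $\mathbb{E}\,\|a\|_{K_1^\circ} = w(K_1) \le C_b\sqrt{s\log(N/s)}$ via the standard Gaussian-width estimate for $2s$-sparse unit vectors, use that $a \mapsto \|a\|_{K_1^\circ}$ is $1$-Lipschitz, apply Gaussian concentration, and set $u = \sqrt{5\log m/c_b}$. The only (harmless) differences are your extra identification of $\|a\|_{K_1^\circ}$ with the $\ell_2$-norm of the top $2s$ entries of $a$ and your use of the one-sided tail bound, where the paper's two-sided form $1 - 2\exp(-c_b u^2)$ directly yields the stated $1 - 2/m^5$.
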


\begin{proof}
	From the definition of the dual norm $\|a\|_{K_1^\circ}$, 
	\[
	\mathop{\mathbb{E}}_{a} \|a\|_{K_1^\circ}  = \mathop{\mathbb{E}}_{a} \sup\limits_{\substack{v -w \in K-K \\ \|v - w\|_2 \le 1 }}  \inner{a, v-w} = w(K_1) \le C_b  \sqrt{ s\log(N/s) }
	\] where the inequality is by a well-known bound of Gaussian width of the set of $2s$-sparse unit vectors for some constant $C_b \ge 1$. 
	
	Since $a \rightarrow \|a\|_{K_1^\circ}$ is a Lipschitz continuous function with Lipschitz constant at most $1$,  by the Gaussian concentration inequality \cite{ledoux2001concentration, vershynin2018high}, 
	\begin{align}
	\label{eq: Uniform_bound 0}
	\left \| a \right \|_{K_1^\circ}  \le C_b \sqrt{ s\log(N/s) } + u
	\end{align}
	with probability at least $1 - 2\exp(-c_b u^2)$ for some constant $0 < c_b \le 1$. Finally, set $u = \sqrt{ {5 \log m \over c_b}}$ to obtain the proposition.
\end{proof}

Let $E_i$ be the event that the bound in Proposition \ref{prop:Uniform_bound_for_dual_norm_Gaussian_vector} holds for $a_i$. Note that $\mathbb{P}(E_i) \ge 1 - {2 \over m^5}$
and also define $E_u$ such that
\[
E_u = \cap_{i=1}^m E_i.
\] Then, by the union bound, $\mathbb{P}(E_u) \ge 1 - {2 \over m^4}$.

\begin{lem}
	\label{lem:Uniform_bound 1}
	With probability at least $1 - {2 \over m^4}$, we have
	\begin{align*}
	    & \max_{1\le i \le m} \sup\limits_{y \in K \cap \mathbb{S}^{N-1}} \left \{ \left| \inner{a_i,{x +y \over \|x +y \|}  } \right|,  \left| \inner{a_i,{x -y \over \|x -y \|}  } \right|, \|b_i(x,y)\|_{K_1^\circ} \right \} \\
	&\quad \le   3\left(C_b  \sqrt{ s\log(N/s) } + \sqrt{ {5 \log m \over c_b}}\right).
	\end{align*}

\end{lem}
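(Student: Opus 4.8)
The plan is to condition on the event $E_u = \cap_{i=1}^m E_i$, on which Proposition \ref{prop:Uniform_bound_for_dual_norm_Gaussian_vector} gives $\|a_i\|_{K_1^\circ} \le B$ for every $i$, where I abbreviate $B := C_b\sqrt{s\log(N/s)} + \sqrt{5\log m/c_b}$. Since $\mathbb{P}(E_u) \ge 1 - 2/m^4$, this already matches the probability claimed in the statement, and the entire remaining argument is deterministic. The key structural point is that all three quantities to be bounded can be controlled purely through the single $y$-independent scalar $\|a_i\|_{K_1^\circ}$; this is precisely why the bound holds \emph{uniformly} in $y$ with no need for an $\epsilon$-net here.

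The first observation is that $K_1 = (K-K)\cap B^N_2$ is exactly the set of $2s$-sparse vectors of Euclidean norm at most $1$, and it is symmetric ($u \in K_1 \Leftrightarrow -u \in K_1$). Consequently, for any $2s$-sparse unit vector $v$ one has $|\inner{a_i, v}| \le \|a_i\|_{K_1^\circ}$ (apply the supremum defining the dual norm to $\pm v \in K_1$) and also, by Cauchy--Schwarz together with $v \in K_1$, $\|v\|_{K_1^\circ} = \sup_{u\in K_1}\inner{v,u} = 1$. Applying the first fact to the two $2s$-sparse unit vectors $\frac{x\pm y}{\|x\pm y\|}$ (which are $2s$-sparse since $x,y$ are $s$-sparse) yields, on $E_u$ and uniformly over all $s$-sparse unit $y$ and all $i$,
\[
\left|\inner{a_i, \tfrac{x+y}{\|x+y\|}}\right| \le B, \qquad \left|\inner{a_i, \tfrac{x-y}{\|x-y\|}}\right| \le B.
\]

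For the residual term I would expand the decomposition of Lemma \ref{lem:Orthogonal_decomposition_Gaussian_vector}, namely $b_i(x,y) = a_i - \inner{a_i,\tfrac{x-y}{\|x-y\|}}\tfrac{x-y}{\|x-y\|} - \inner{a_i,\tfrac{x+y}{\|x+y\|}}\tfrac{x+y}{\|x+y\|}$, and apply the triangle inequality for $\|\cdot\|_{K_1^\circ}$:
\[
\|b_i(x,y)\|_{K_1^\circ} \le \|a_i\|_{K_1^\circ} + \left|\inner{a_i,\tfrac{x-y}{\|x-y\|}}\right|\left\|\tfrac{x-y}{\|x-y\|}\right\|_{K_1^\circ} + \left|\inner{a_i,\tfrac{x+y}{\|x+y\|}}\right|\left\|\tfrac{x+y}{\|x+y\|}\right\|_{K_1^\circ}.
\]
Using $\|\tfrac{x\pm y}{\|x\pm y\|}\|_{K_1^\circ} = 1$ from the previous paragraph together with the two inner-product bounds, the right-hand side is at most $B + B + B = 3B$. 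Taking the maximum of the three controlled quantities then gives the claimed uniform bound $3B = 3(C_b\sqrt{s\log(N/s)} + \sqrt{5\log m/c_b})$, valid simultaneously for every $i$ and every $s$-sparse unit $y$ on $E_u$.

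There is no substantive obstacle; the only points requiring care are recognizing that $\frac{x\pm y}{\|x\pm y\|}$ lie in $K_1$, so that their $K_1^\circ$-norms equal $1$ and their inner products with $a_i$ are dominated by $\|a_i\|_{K_1^\circ}$, and invoking the symmetry of $K_1$ to pass to absolute values. Uniformity over $y$ is then automatic, since every estimate has been reduced to the $y$-independent scalar $\|a_i\|_{K_1^\circ}$ already controlled on $E_u$.
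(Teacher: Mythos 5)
Your proposal is correct and follows essentially the same route as the paper: condition on the event $E_u$ of probability at least $1 - 2/m^4$, bound $\left|\inner{a_i, \tfrac{x\pm y}{\|x\pm y\|}}\right|$ by $\|a_i\|_{K_1^\circ}$ using the symmetry of $K$ and the fact that these normalized vectors are $2s$-sparse unit vectors, and then control $\|b_i(x,y)\|_{K_1^\circ}$ via the triangle inequality applied to the orthogonal decomposition, with $\left\|\tfrac{x\pm y}{\|x\pm y\|}\right\|_{K_1^\circ} = 1$. Your explicit justification that this dual norm equals $1$ (via Cauchy--Schwarz plus testing against the vector itself) is a small clarification of a step the paper attributes only to ``the definition of the dual norm,'' but the argument is otherwise identical.
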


\begin{proof}
	For any $y \in K \cap \mathbb{S}^{N-1}$, we observe that ${x +y \over \|x +y \|} \in (K - K) \cap \mathbb{S}^{N-1}$ because $K$ is symmetric. 
	\begin{align*}
	& \sup\limits_{y \in K \cap \mathbb{S}^{N-1}}  \left| \inner{a_i,{x +y \over \|x +y \|}  } \right| \\
	& \quad \le \sup\limits_{w,y \in K \cap \mathbb{S}^{N-1}}  \left| \inner{a_i,{w +y \over \|w +y \|}  } \right| \\
	& \quad \le   \sup\limits_{u -v \in (K-K), \|u-v\|_2 \le 1 }  \left| \inner{a_i, u -v}  \right| \\
	& \quad = \sup\limits_{u -v \in (K-K), \|u-v\|_2 \le 1 }  \inner{a_i, u -v} \\
	& \quad = \|a_i \|_{K_1^\circ} \le C_b  \sqrt{ s\log(N/s) } + \sqrt{ {5 \log m \over c_b}}
	\end{align*}  under the event $E_u$. Here the second equality holds since $K$ is symmetric and the last inequality is from Proposition \ref{prop:Uniform_bound_for_dual_norm_Gaussian_vector}. 
	
	Again by the symmetry of $K$, this also implies that
	\begin{align*}
	& \sup\limits_{y \in K \cap \mathbb{S}^{N-1}}  \left| \inner{a_i,{x -y \over \|x -y \|}  } \right| \le C_b  \sqrt{ s\log(N/s) } + \sqrt{ {5 \log m \over c_b}}.
	\end{align*}

	As for the bound for $\left \| b_i(x, y) \right \|_{K_1^\circ}$, we first start from the decomposition of $a_i$ in Lemma \ref{lem:Orthogonal_decomposition_Gaussian_vector}.
	\[
	 b_i(x,y) = \inner{a_i, {x -y \over \|x -y \| }} {x -y \over \|x -y \| } + \inner{a_i,{x +y \over \|x +y \|}  }{x +y \over \|x +y \| } - a_i. 
	\] 
	After taking the dual norm $\| \cdot \|_{K_1^\circ}$ on both sides, we have
	\begin{align*}
		\| b_i(x,y) \|_{K_1^\circ}
		& \le \left\|  \inner{a_i, {x -y \over \|x -y \| }} {x -y \over \|x -y \| } \right\|_{K_1^\circ}+ \left\| \inner{a_i,{x +y \over \|x +y \|}  }{x +y \over \|x +y \| } \right\|_{K_1^\circ}+  \|a_i \|_{K_1^\circ} \\
		& \le  \left| \inner{a_i, {x -y \over \|x -y \| }} \right| \cdot  \left\|  {x -y \over \|x -y \| } \right\|_{K_1^\circ} + \left| \inner{a_i,{x +y \over \|x +y \|}  } \right| \cdot \left\| {x +y \over \|x +y \| } \right\|_{K_1^\circ} +  \|a_i \|_{K_1^\circ} \\
		& =  \left| \inner{a_i, {x -y \over \|x -y \| }} \right|  + \left| \inner{a_i,{x +y \over \|x +y \|}  } \right| +  \|a_i \|_{K_1^\circ} \\
		& \le 3\left(C_b  \sqrt{ s\log(N/s) } + \sqrt{ {5 \log m \over c_b}}\right),
	\end{align*} where the first and second inequalities are by the triangle inequality, the equality is from the definition of the dual norm $\| \cdot \|_{K_1^\circ}$, and we applied the bounds for  $\left| \inner{a_i, {x -y \over \|x -y \| }} \right|, \left| \inner{a_i,{x +y \over \|x +y \|}  } \right|$, and  $\|a_i \|_{K_1^\circ}$ in the last inequality. Thus we have the bound in the lemma under the event $E_u$ which holds with probability exceeding $1 - {2 \over m^4}$.

\end{proof}

\subsubsection{Other uniform bounds}

\begin{prop}
	\label{prop:difference_of_two_normalized_vectors}
	Let $y, \hat{y} \in \mathbb{S}^{N-1}$ with $r \le \|x-y\| $ and $r \le \|x- \hat{y}\|$ for some $r > 0$. Also, assume that $\|y-\hat{y}\| \le \delta$. Then, we have
	\[
	\left \|{x -y \over \|x -y \| } -  {x -\hat{y} \over \|x -\hat{y}  \| } \right \| \le {2 \delta \over r}.
	\]
\end{prop}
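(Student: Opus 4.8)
The plan is to reduce this to the standard estimate for how much the normalization map distorts distances between nonzero vectors. Write $u := x - y$ and $v := x - \hat{y}$, so that the quantity to bound is $\left\| \frac{u}{\|u\|} - \frac{v}{\|v\|} \right\|$. The two hypotheses translate cleanly into statements about $u$ and $v$: first, $u - v = \hat{y} - y$, so $\|u - v\| = \|y - \hat{y}\| \le \delta$; second, $\|u\| = \|x - y\| \ge r$ (and likewise $\|v\| \ge r$, though only the bound on $\|u\|$ will be needed).

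The key step is a telescoping decomposition through the hybrid vector $\tfrac{v}{\|u\|}$:
\[
\frac{u}{\|u\|} - \frac{v}{\|v\|} = \frac{u - v}{\|u\|} + v\left( \frac{1}{\|u\|} - \frac{1}{\|v\|} \right).
\]
Taking norms and applying the triangle inequality gives
\[
\left\| \frac{u}{\|u\|} - \frac{v}{\|v\|} \right\| \le \frac{\|u - v\|}{\|u\|} + \frac{\big| \|v\| - \|u\| \big|}{\|u\|},
\]
where the second summand comes from $\|v\| \cdot \left| \tfrac{1}{\|u\|} - \tfrac{1}{\|v\|} \right| = \tfrac{|\,\|v\| - \|u\|\,|}{\|u\|}$.

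Now I would invoke the reverse triangle inequality, $\big| \|v\| - \|u\| \big| \le \|u - v\|$, to collapse both terms into a single multiple of $\|u - v\|$, yielding $\left\| \frac{u}{\|u\|} - \frac{v}{\|v\|} \right\| \le \frac{2\|u - v\|}{\|u\|}$. Substituting $\|u - v\| \le \delta$ and $\|u\| \ge r$ produces the claimed bound $\frac{2\delta}{r}$. There is no real obstacle here: the only point requiring a moment of care is choosing $\|u\|$ (rather than $\|v\|$ or their minimum) as the common denominator so that the reverse triangle inequality applies uniformly to both pieces, after which the lower bound $\|x - y\| \ge r$ closes the argument immediately.
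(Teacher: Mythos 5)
Your proof is correct and is essentially the same argument as the paper's: both insert a hybrid intermediate vector, apply the triangle and reverse triangle inequalities, and bound the common denominator below by $r$. The only (immaterial) difference is the choice of hybrid---you pass through $\tfrac{x-\hat{y}}{\|x-y\|}$ so both terms carry denominator $\|x-y\|$, while the paper passes through $\tfrac{x-y}{\|x-\hat{y}\|}$ and lands on denominator $\|x-\hat{y}\|$; since both norms are at least $r$, either choice closes the argument identically.
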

\begin{proof}
	Repeated applications of triangle inequality yield the following chain of inequalities. 
	\begin{align*}
	\left \|{x -y \over \|x -y \| } -  {x-\hat{y} \over \|x -\hat{y}  \| } \right \| 
	&\le \left \| 	{x -y \over \|x -y \| }  - 	{x -y \over \|x - \hat{y} \| } \right \| + \left \| 		{x -y \over \|x - \hat{y} \| } -  {x -\hat{y} \over \|x -\hat{y}  \| } \right \|\\
	&\le \left | 	{1\over \|x -y \| }  - 	{1 \over \|x- \hat{y} \| } \right | \|x-y\| + {1 \over \|x -\hat{y}  \| } \left \| x-y -  (x -\hat{y} )  \right \|\\
	&\le \left | 	{\|x - \hat{y} \| - \|x -y \| \over \|x -y \|  \|x - \hat{y} \| }  \right | \|x-y\| + {1 \over \|x -\hat{y}  \| }  \|y -\hat{y}    \| \\
	&\le \left | 	{ \|y - \hat{y}  \| \over  \|x - \hat{y} \| }  \right | + {1 \over \|x -\hat{y}  \| }  \|y -\hat{y}  \| \\
	&=  {2\over \|x -\hat{y}  \| }  \|y -\hat{y}   \| \le {2 \delta \over r}. 
	\end{align*}
\end{proof}

\begin{lem}
	\label{lem:stable_perturbation_lemma1}
	Let $z, \hat{z}$ be $s$-sparse vectors with $\|z - \hat{z} \|_2 \le \delta$. Then, we have
	\[
	|\inner{a_i,z} -\inner{a_i,\hat{z} } | \le \delta \left(C_b  \sqrt{ s\log(N/s) } + \sqrt{ {5 \log m \over c_b}}\right)
	\]
	for all $1 \le i \le m$ with probability at least $1 - {2 \over m^4}$. 
\end{lem}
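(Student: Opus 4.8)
The plan is to reduce the claim directly to the uniform bound on $\|a_i\|_{K_1^\circ}$ that has already been established under the event $E_u$. The key observation is that since $z$ and $\hat{z}$ are both $s$-sparse, their difference $z - \hat{z}$ has at most $2s$ nonzero entries, so $z - \hat{z} \in K - K$. Writing $\inner{a_i,z} - \inner{a_i,\hat{z}} = \inner{a_i, z - \hat{z}}$, I would exploit the fact that the dual norm $\|a_i\|_{K_1^\circ} = \sup_{u - v \in K-K,\, \|u-v\|_2 \le 1} \inner{a_i, u-v}$ is precisely the supremum of $\inner{a_i,\cdot}$ over the set $(K-K) \cap B^N_2$, which is the same supremum appearing in the proof of Lemma \ref{lem:Uniform_bound 1}.

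First I would dispose of the trivial case $\delta = 0$ (then $z = \hat{z}$ and both sides vanish) and assume $\delta > 0$. The normalized difference $(z - \hat{z})/\delta$ then lies in $(K-K) \cap B^N_2$, since it is $2s$-sparse with $\ell_2$ norm at most one. By the symmetry of $K$, both $\pm(z-\hat{z})/\delta$ belong to this set, so the definition of the dual norm gives $\left|\inner{a_i, (z-\hat{z})/\delta}\right| \le \|a_i\|_{K_1^\circ}$, that is, $\left|\inner{a_i, z - \hat{z}}\right| \le \delta\, \|a_i\|_{K_1^\circ}$.

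Finally I would invoke Proposition \ref{prop:Uniform_bound_for_dual_norm_Gaussian_vector} together with the event $E_u = \cap_{i=1}^m E_i$, under which $\|a_i\|_{K_1^\circ} \le C_b \sqrt{s\log(N/s)} + \sqrt{5\log m / c_b}$ holds simultaneously for every $1 \le i \le m$. Since $\mathbb{P}(E_u) \ge 1 - 2/m^4$ by the union bound, substituting this bound into the previous display yields the claimed inequality uniformly in $i$ on this event, which completes the argument. There is no genuine obstacle here; the only point requiring care is the sparsity bookkeeping, namely recognizing that the difference of two $s$-sparse vectors is $2s$-sparse and hence an admissible competitor in the supremum defining $\|\cdot\|_{K_1^\circ}$, after which the result is an immediate consequence of the previously established high-probability bound on $\|a_i\|_{K_1^\circ}$.
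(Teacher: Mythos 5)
Your proposal is correct and follows essentially the same route as the paper's own proof: both reduce the claim to the scaling identity $\|a_i\|_{K_\delta^\circ} = \delta\,\|a_i\|_{K_1^\circ}$ for the dual norm applied to the ($2s$-sparse) difference $z-\hat{z}$, then invoke Proposition \ref{prop:Uniform_bound_for_dual_norm_Gaussian_vector} and a union bound over the $m$ measurement vectors to upgrade the per-vector probability $1-2/m^5$ to $1-2/m^4$. Your explicit handling of the $\delta = 0$ case and the normalization $(z-\hat{z})/\delta \in (K-K)\cap B^N_2$ are minor bookkeeping refinements of the same argument, not a different approach.
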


\begin{proof}
	Let $h = z - \hat{z}$. By Proposition 	\ref{prop:Uniform_bound_for_dual_norm_Gaussian_vector}, for all $i \in [m]$, we have
	\begin{align}
	\label{eq: Uniform_bound_for_difference}
	 \sup\limits_{h \in K-K \cap \delta \mathbb{S}^{N-1}}  \left| \inner{a_i,h } \right| = \|a_i\|_{K_\delta^\circ} = \delta \|a_i\|_{K_1^\circ} 
	\le \delta \left(C_b  \sqrt{ s\log(N/s) } + \sqrt{ {5 \log m \over c_b}}\right)
	\end{align} with probability at least $1 - {2 \over m^5}$.  Then the lemma follows from the union bound. 
\end{proof}

\begin{lem}
	\label{lem:stable_perturbation_lemma2}
	Let $z, \hat{z}$ be unit $s$-sparse vectors with $\|z - \hat{z} \|_2 \le \delta$. Then, for $u > 0$, we have
	\[
	\| \inner{a_i,z}z -\inner{a_i,\hat{z} } \hat{z} \|_{K_1^\circ}  \le 2\delta \left(C_b  \sqrt{ s\log(N/s) } + \sqrt{ {5 \log m \over c_b}}\right)
	\]
	for all $1 \le i \le m$ with probability at least $1 - {2 \over m^4}$. 
\end{lem}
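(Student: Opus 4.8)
The plan is to reduce this to a product-rule perturbation estimate by splitting the difference along the two slots in which $z$ and $\hat z$ occur. First I would add and subtract $\inner{a_i,z}\hat z$ and use linearity of the inner product in the second summand to write
\[
\inner{a_i,z}z - \inner{a_i,\hat z}\hat z = \inner{a_i,z}\,(z-\hat z) + \inner{a_i,z-\hat z}\,\hat z .
\]
Applying the triangle inequality for the dual norm $\|\cdot\|_{K_1^\circ}$ and pulling the scalars out of each term then gives
\[
\|\inner{a_i,z}z - \inner{a_i,\hat z}\hat z\|_{K_1^\circ} \le |\inner{a_i,z}|\,\|z-\hat z\|_{K_1^\circ} + |\inner{a_i,z-\hat z}|\,\|\hat z\|_{K_1^\circ},
\]
so the task becomes bounding these four factors.

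The two purely geometric factors require no probability. Since $z-\hat z$ is $2s$-sparse with $\|z-\hat z\|_2\le\delta$, the dual norm is dominated by the Euclidean norm, giving $\|z-\hat z\|_{K_1^\circ}\le\|z-\hat z\|_2\le\delta$; and since $\hat z$ is a unit $s$-sparse vector it lies in $K_1$, so $\|\hat z\|_{K_1^\circ}\le\|\hat z\|_2=1$. The two random factors I would both control from the single event $E_u$ introduced after Proposition \ref{prop:Uniform_bound_for_dual_norm_Gaussian_vector}, on which $\|a_i\|_{K_1^\circ}\le C_b\sqrt{s\log(N/s)}+\sqrt{5\log m/c_b}$ for every $i$. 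Because $z$ is a unit $s$-sparse vector, $\pm z\in K_1$, so $|\inner{a_i,z}|\le\|a_i\|_{K_1^\circ}$; and because $\pm(z-\hat z)\in K_\delta$, we get $|\inner{a_i,z-\hat z}|\le\|a_i\|_{K_\delta^\circ}=\delta\|a_i\|_{K_1^\circ}$, which is precisely the estimate already recorded in Lemma \ref{lem:stable_perturbation_lemma1}.

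Combining these bounds, each of the two summands is at most $\delta\bigl(C_b\sqrt{s\log(N/s)}+\sqrt{5\log m/c_b}\bigr)$, and their sum gives the claimed factor of $2\delta$. There is no substantive obstacle here, as this is a standard Leibniz-type perturbation bound; the only point needing care is the probability accounting. I would deliberately route \emph{both} random estimates through the one event $E_u$ rather than invoking Lemma \ref{lem:stable_perturbation_lemma1} as an independent event and taking a union bound, since $E_u$ already holds with probability at least $1-2/m^4$ and yields both bounds uniformly in $i$ simultaneously; this is exactly what keeps the final probability at $1-2/m^4$ instead of degrading it. The remaining work is then just verifying the containments $\pm z\in K_1$ and $\pm(z-\hat z)\in K_\delta$ that license the dual-norm bounds.
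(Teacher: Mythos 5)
Your proof is correct and follows essentially the same route as the paper's: a Leibniz-type add-and-subtract split of $\inner{a_i,z}z-\inner{a_i,\hat z}\hat z$ (you insert $\inner{a_i,z}\hat z$ where the paper inserts $\inner{a_i,\hat z}z$, a mirror-image of the same decomposition), followed by the triangle inequality, the comparison $\|v\|_{K_1^\circ}\le\|v\|_2$, and the uniform bound on $\|a_i\|_{K_1^\circ}$ from the event $E_u$ underlying Lemma \thref{lem:stable_perturbation_lemma1}, with both random estimates routed through that single event exactly as the paper implicitly does to keep the probability at $1-\tfrac{2}{m^4}$. One trivial nit: your bound $\|\hat z\|_{K_1^\circ}\le\|\hat z\|_2=1$ is right, but the stated reason ``$\hat z$ lies in $K_1$'' is not what licenses it --- it follows from Cauchy--Schwarz, since every $u\in K_1$ satisfies $\|u\|_2\le 1$.
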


\begin{proof}
	Note that 
	\begin{align*} 
		\| \inner{a_i,z}z -\inner{a_i,\hat{z} } \hat{z} \|_{K_1^\circ} 
		& \le  \| \inner{a_i,z}z -\inner{a_i,\hat{z} } z \|_{K_1^\circ} + \| \inner{a_i,\hat{z} } z -\inner{a_i,\hat{z} } \hat{z} \|_{K_1^\circ}\\
		& \le | \inner{a_i,z}-\inner{a_i,\hat{z} }| \cdot \| z \|_{K_1^\circ} + |\inner{a_i,\hat{z} } | \cdot \| z -  \hat{z} \|_{K_1^\circ}\\
		& \le | \inner{a_i,z}-\inner{a_i,\hat{z} }| \cdot \|z\|_2 + |\inner{a_i,\hat{z} } | \cdot \| z -  \hat{z} \|_2\\
		& \le  \delta \left(C_b  \sqrt{ s\log(N/s) } + \sqrt{ {5 \log m \over c_b}}\right) + \left(C_b  \sqrt{ s\log(N/s) } + \sqrt{ {5 \log m \over c_b}}\right) \cdot \delta \\
		& \le  2\delta \left(C_b  \sqrt{ s\log(N/s) } + \sqrt{ {5 \log m \over c_b}}\right).
	\end{align*}
	The explanations to above inequalities are as follows: We used the triangle inequality in the first and second inequalities. The third inequalities is from the definition of the dual norm $\| \cdot \|_{K_1^\circ}$. We applied Lemma  \ref{lem:stable_perturbation_lemma1} to the fourth inequality. 	
\end{proof}

Lastly, define a quantity $C(N,s,m)$ to be 
\begin{align}
\label{eq:definition_of_C(N,s,m)}
C(N,s,m) :=    3\left(C_b  \sqrt{ s\log(N/s) } + \sqrt{ {5 \log m \over c_b}}\right).
\end{align}
Note also that $C(N,s,m) \le 9 C_b  \sqrt{ s\log(N/s) } \cdot  \sqrt{ {5 \log m \over c_b}}$. 

 Using this notation, we summarize Lemma \ref{lem:Uniform_bound 1}, \ref{lem:stable_perturbation_lemma1}, and \ref{lem:stable_perturbation_lemma2} in one lemma below.

\begin{lem}
	\label{lem: Unified uniform bound}
	With probability at least $1 - {2 \over m^4}$, for all $i$ with $1 \le i \le m$, we have
 	\[
		\sup\limits_{y \in K \cap \mathbb{S}^{N-1}} \left \{ \left| \inner{a_i,{x +y \over \|x +y \|}  } \right|,  \left| \inner{a_i,{x -y \over \|x -y \|}  } \right|, \|b_i(x,y)\|_{K_1^\circ} \right \} \le  C(N,s,m)
	\] and
	\[
	\max \left \{|\inner{a_i,z} -\inner{a_i,\hat{z} } |,\| \inner{a_i,z}z -\inner{a_i,\hat{z} } \hat{z} \|_{K_1^\circ} \right  \} \le 2\delta \cdot C(N,s,m)
	\] for any $s$-sparse unit vectors $z, \hat{z}$ with $\|z - \hat{z} \|_2 \le \delta$.
\end{lem}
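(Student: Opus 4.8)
The plan is to recognize that this lemma is purely a consolidation of the three preceding results, Lemmas \ref{lem:Uniform_bound 1}, \ref{lem:stable_perturbation_lemma1}, and \ref{lem:stable_perturbation_lemma2}, and that the crucial point is that all three already hold on a \emph{single} high-probability event, so that no fresh probabilistic work is required. Concretely, each of these lemmas is deduced from Proposition \ref{prop:Uniform_bound_for_dual_norm_Gaussian_vector} applied to the individual rows $a_i$; the governing event is $E_u = \cap_{i=1}^m E_i$, where $E_i$ is the event that $\|a_i\|_{K_1^\circ} \le C_b\sqrt{s\log(N/s)} + \sqrt{5\log m/c_b}$, and the union bound over $i \in [m]$ gives $\mathbb{P}(E_u) \ge 1 - 2/m^4$. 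I would open the proof by fixing this event and declaring that every bound below is read on $E_u$.

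For the first display I would simply quote Lemma \ref{lem:Uniform_bound 1}, which already furnishes, on $E_u$,
\[
\sup\limits_{y \in K \cap \mathbb{S}^{N-1}} \left\{ \left|\inner{a_i, \tfrac{x+y}{\|x+y\|}}\right|, \left|\inner{a_i, \tfrac{x-y}{\|x-y\|}}\right|, \|b_i(x,y)\|_{K_1^\circ}\right\} \le 3\left(C_b\sqrt{s\log(N/s)} + \sqrt{\tfrac{5\log m}{c_b}}\right),
\]
and then identify the right-hand side with $C(N,s,m)$ via its definition \eqref{eq:definition_of_C(N,s,m)}. This step is immediate.

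For the second display, the two quantities inside the maximum are handled separately by the two perturbation lemmas. Lemma \ref{lem:stable_perturbation_lemma1} yields $|\inner{a_i,z} - \inner{a_i,\hat z}| \le \delta\big(C_b\sqrt{s\log(N/s)} + \sqrt{5\log m/c_b}\big) = \tfrac{\delta}{3}\,C(N,s,m)$, while Lemma \ref{lem:stable_perturbation_lemma2} yields $\|\inner{a_i,z}z - \inner{a_i,\hat z}\hat z\|_{K_1^\circ} \le 2\delta\big(C_b\sqrt{s\log(N/s)} + \sqrt{5\log m/c_b}\big) = \tfrac{2\delta}{3}\,C(N,s,m)$. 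Both are therefore dominated by the deliberately loose common bound $2\delta \cdot C(N,s,m)$, which is exactly what is asserted. Absorbing both into a single bound costs only a harmless constant factor and is what produces the clean unified statement.

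There is no genuine analytic obstacle here; the lemma is a bookkeeping statement whose value is notational, collapsing three bounds into one. The only point demanding care is the probability accounting: because Lemmas \ref{lem:Uniform_bound 1}, \ref{lem:stable_perturbation_lemma1}, and \ref{lem:stable_perturbation_lemma2} are all consequences of the \emph{same} event $E_u$ rather than of three distinct events, their conjunction still holds with probability at least $1 - 2/m^4$, and one must resist taking a further union bound (which would needlessly inflate the failure probability and break the stated constant). I would conclude by noting that both displays hold jointly on $E_u$, which occurs with probability at least $1 - 2/m^4$, completing the proof.
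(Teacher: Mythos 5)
Your proposal is correct and matches the paper exactly: the paper offers no separate proof, stating only that this lemma ``summarizes'' Lemmas \thref{lem:Uniform_bound 1}, \thref{lem:stable_perturbation_lemma1}, and \thref{lem:stable_perturbation_lemma2}, which is precisely your consolidation, with the right-hand sides identified via the definition \eqref{eq:definition_of_C(N,s,m)} of $C(N,s,m)$. Your observation that all three lemmas are deterministic consequences of the single event $E_u$ (so no further union bound is taken) is exactly the bookkeeping needed to justify the stated probability $1 - \tfrac{2}{m^4}$ rather than the inflated $1 - \tfrac{6}{m^4}$.
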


\subsection{Local binary embedding for small regions at \texorpdfstring{$x$}{Lg}}
\label{section:local_embedding}
One of key ingredients of our proof of the RAIC is the local binary embedding (local sensitivity hashing) property by Oymak and Recht \cite{oymak2015near}.  
Bilyk and Lacey \cite{bilyk2015random} also reported a similar property.

\begin{thm} [Local $\delta$-binary embedding \cite{oymak2015near,bilyk2015random}] 
	\label{thm:Local binary embedding}
	Let $A \in \R^{m \times N}$ be a standard Gaussian random matrix. Then, there exists  universal constants $C_l, C_L \ge 1$, and $\tilde{c}>0$ such that given a set $G \in \mathbb{S}^{N-1}$ and a constant $\delta \in (0,1)$, we have 
	\begin{enumerate}
		\item For all $x, y \in G$ with $\|x-y\|_2 \le \delta/\sqrt{\log \delta^{-1}}$, $\left| {1 \over 2m}\|\text{sign}(Ax) - \text{sign}(Ay)\|_1 - d_g(x, y) \right| \le C_L \delta$,
		\item Conversely for all  $x, y \in G$ with $\left| {1 \over 2m}\|\text{sign}(Ax) - \text{sign}(Ay)\|_1 - d_g(x, y) \right| \le C_L \delta$, $\|x-y\|_2 \le \delta$,
	\end{enumerate}
	with probability $1-\exp(-\tilde{c} \delta m)$ whenever $m \ge C_l\delta^{-3} \log \delta^{-1}  w^2(G)$. Here $w(G)$ is the Gaussian width of $G$. 
	
\end{thm}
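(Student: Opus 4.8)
The plan is to observe that the normalized Hamming distance $\frac{1}{2m}\|\text{sign}(Ax)-\text{sign}(Ay)\|_1$ is exactly the empirical disagreement frequency $\hat d(x,y):=\frac1m\sum_{i=1}^m \ii\big[\text{sign}\inner{a_i,x}\ne\text{sign}\inner{a_i,y}\big]$, an average of i.i.d.\ Bernoulli indicators. The classical random-hyperplane identity $\mathbb{P}(\text{sign}\inner{a,x}\ne\text{sign}\inner{a,y})=d_g(x,y)$ for a standard Gaussian $a$ shows each indicator has mean $d_g(x,y)$, so $\hat d$ is an unbiased estimator of $d_g$ and part (1) becomes a uniform deviation bound. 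First I would fix a pair with $\|x-y\|_2\le\delta/\sqrt{\log\delta^{-1}}$; then $p:=d_g(x,y)\lesssim\|x-y\|_2$ is small, the Bernoulli variance $p(1-p)$ is small, and Bernstein's inequality gives
\[
\mathbb{P}\Big(\big|\hat d(x,y)-d_g(x,y)\big|>C_L\delta\Big)\le 2\exp\Big(-\frac{c\,m\delta^2}{p+\delta}\Big)\le 2\exp(-\tilde c\,\delta m),
\]
the last step using $p\ll\delta$. Exploiting the small variance rather than a crude Hoeffding bound (which only sees the range of the indicators) is what yields the favorable exponent linear in $\delta$, and the logarithmic gap between the hypothesis scale $\delta/\sqrt{\log\delta^{-1}}$ and the target error $\delta$ is precisely the slack I will later spend on a covering-number logarithm.

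Next I would upgrade this pointwise estimate to a uniform one over $G$ by an $\epsilon$-net plus a union bound. Using the standard entropy estimate $\log|\mathcal N|\lesssim w^2(G)/\eta^2$ for an $\eta$-net $\mathcal N$ of a subset of the sphere (dual Sudakov, which is where the Gaussian width $w(G)$ enters), I take $\eta\sim\delta$ and union-bound the pointwise estimate over all pairs of net points. The number of such pairs is $\exp\!\big(O(w^2(G)/\eta^2)\big)=\exp\!\big(O(w^2(G)/\delta^2)\big)$, so to keep the total failure probability at $\exp(-\tilde c\,\delta m)$ the per-pair exponent $\tilde c\,\delta m$ must dominate $w^2(G)/\delta^2$, i.e.\ $m\gtrsim w^2(G)/\delta^3$; together with the logarithmic factor from the net resolution this reproduces the hypothesis $m\ge C_l\,\delta^{-3}\log\delta^{-1}\,w^2(G)$.

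The main obstacle is passing from the net back to arbitrary $x,y\in G$, since $x\mapsto\text{sign}(Ax)$ is discontinuous and $\hat d$ is not Lipschitz, so a direct net interpolation fails. I would resolve this with a band argument: replacing $x$ by its nearest net point $\hat x$ changes the disagreement indicator only on rows $i$ where the sign flips, and a flip forces $|\inner{a_i,\hat x}|\le|\inner{a_i,x-\hat x}|$, i.e.\ $\inner{a_i,\hat x}$ lies in a thin band around the decision boundary of width proportional to $\|x-\hat x\|\le\eta$. The count of rows in such a band concentrates around its mean $m\cdot\mathbb{P}(|\mathcal N(0,1)|\lesssim\eta)\sim m\eta\sim m\delta$, and I would establish this count bound uniformly over $\hat x\in\mathcal N$ via a second, coarser union bound. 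Hence replacing $x,y$ by $\hat x,\hat y$ perturbs $\hat d$ by at most $O(\delta)$; since $d_g$ is genuinely Lipschitz in the Euclidean metric, $d_g(x,y)$ and $d_g(\hat x,\hat y)$ also differ by $O(\eta)=O(\delta)$, and both perturbations are absorbed into $C_L\delta$ after adjusting constants. This band estimate, quantifying how many measurements sit near the discontinuity, is the technical heart of the proof.

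Finally, part (2) is the reverse embedding inequality, which I would obtain on the same high-probability event from the identical pointwise deviation bound run in the other direction: controlling $d_g(x,y)$ by the observed disagreement frequency, and then converting control on the geodesic distance into the Euclidean bound via the elementary inequality $\|x-y\|_2\le\pi\,d_g(x,y)$ on the sphere. Arranging the union bounds so that both directions hold simultaneously yields the stated probability $1-\exp(-\tilde c\,\delta m)$.
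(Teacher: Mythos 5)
The paper does not actually prove Theorem~\ref{thm:Local binary embedding}: it imports the result from Oymak--Recht \cite{oymak2015near} (see also Bilyk--Lacey \cite{bilyk2015random}), so there is no internal proof to compare against; the relevant comparison is with the argument in the cited source. Your proposal reconstructs that argument's skeleton faithfully: the random-hyperplane identity makes the normalized Hamming distance an unbiased estimator of $d_g$, Bernstein's inequality exploiting the small Bernoulli variance $p(1-p)$ with $p \lesssim \delta$ yields the failure exponent linear in $\delta m$, dual Sudakov brings in $w^2(G)$ through the net cardinality, and a band argument handles the discontinuity of $z \mapsto \text{sign}(Az)$. This is the correct route, and your accounting of why $m \gtrsim \delta^{-3} w^2(G)$ emerges from balancing the union bound against the per-pair exponent is sound.

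There is, however, one step that fails as written, and it sits exactly at what you rightly call the technical heart. You claim that passing from $x$ to its nearest net point $\hat x$ flips the sign of row $i$ only when $\inner{a_i,\hat x}$ lies in a band of width proportional to $\|x-\hat x\| \le \eta$. A flip indeed forces $|\inner{a_i,\hat x}| \le |\inner{a_i, x-\hat x}|$, but the right-hand side is not bounded by $\eta$: it is a Gaussian with standard deviation at most $\eta$, and uniformly over all $v \in (G-G)$ with $\|v\| \le \eta$ its size in a fixed row is governed by a localized Gaussian supremum, not by $\eta$ itself. So an $O(\eta)$-band does not capture all flipped rows uniformly over the ball around $\hat x$. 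The repair --- and the true reason the hypothesis scale is $\delta/\sqrt{\log \delta^{-1}}$ while the target error is $\delta$ --- is to take the net at the finer scale $\eta = \delta/\sqrt{\log \delta^{-1}}$ (this, squared, is what produces the $\log \delta^{-1}$ in the sample-size condition, so your choice $\eta \sim \delta$ does not reproduce the hypothesis), widen the band to $\eta\sqrt{\log \delta^{-1}} = \delta$, and separately bound the number of exceptional rows where $|\inner{a_i, x-\hat x}|$ exceeds that threshold via a second concentration/chaining step with failure probability compatible with $\exp(-\tilde c\,\delta m)$. You noticed the logarithmic slack but misattributed where it is spent. The same caveat applies to part (2): the converse also quantifies over uncountably many pairs with a discontinuous statistic, so running ``the identical deviation bound in the other direction'' still requires the net-plus-band machinery to lower-bound the disagreement frequency uniformly over all pairs with $\|x-y\|_2 > \delta$; your final conversion from geodesic to Euclidean distance via $\|x-y\|_2 \le \pi\, d_g(x,y)$ is fine.
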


Define a constant $C_{10}$ as $C_{10} := \max \{C_l, C_L, 2C^2_b\} + \pi$. We are looking for two sequences $\{r_i\}_{i=1}$ and $\{\delta_i\}_{i=1}$ satisfying the following properties. 
\begin{enumerate}
	\item $r_1 = m^{-1/2}C(N,s,m).$
	
	\item $\delta_i =C(N,s,m) \left({r^2_i\log m \over m}\right)^{1/3} \log m$.
	
	\item $r^2_{i+1} = 600 C_{10} \log m \cdot  r_i \delta_i C(N,s,m)  \quad \text{or} \quad r_{i+1} = 10 \sqrt{6} C_{10}^{1/2}  r_i^{5/6} m^{-1/6} (\log m)^{7/6} C(N,s,m)$. 
\end{enumerate}

\begin{prop}
	\label{prop: Properties_of_sequences}
	It is easy to see that $\delta_i \ge {1 \over m}$ and $r_i \ge {1 \over m}$ for all $i$, so $\log m \ge \log \delta_i^{-1}$ and $\log m \ge \log r_i^{-1}$. Also note that
	${r_{i+1} \over 600} = C_{10} {r_i \delta_i \log m \cdot C(N,s,m) \over r_{i+1}}$ and ${r_{i+1} \over 10 \sqrt{6}} = \left(C_{10} r_i \delta_i \log m \cdot C(N,s,m)\right)^{1/2}$.
\end{prop}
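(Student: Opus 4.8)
The plan is to split the statement into its two algebraic identities, which require no estimation, and its two lower bounds $\delta_i \ge 1/m$, $r_i \ge 1/m$, which I would obtain by a short coupled induction on $i$; the logarithmic consequences then fall out immediately.

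First I would dispatch the two displayed identities, since both are mere rearrangements of defining property $3$, namely $r_{i+1}^2 = 600\, C_{10} \log m \cdot r_i \delta_i C(N,s,m)$. Dividing this equation by $r_{i+1}$ and then by $600$ yields $\frac{r_{i+1}}{600} = C_{10}\frac{r_i\delta_i \log m \cdot C(N,s,m)}{r_{i+1}}$, which is the first identity. Taking the square root of the same equation and using $\sqrt{600} = 10\sqrt{6}$ gives $r_{i+1} = 10\sqrt{6}\,\bigl(C_{10}\, r_i\delta_i\log m \cdot C(N,s,m)\bigr)^{1/2}$, and dividing by $10\sqrt{6}$ produces the second identity. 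Nothing beyond bookkeeping is needed here.

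For the lower bounds I would argue by induction on $i$, after recording the elementary facts that drive the argument: $C(N,s,m)\ge 3\sqrt{5\log m/c_b}\ge 1$ (using $c_b\le 1$ and $\log m\ge 1$, which holds with enormous room to spare in the regime $m>24^{48}$), $C_{10}=\max\{C_l,C_L,2C_b^2\}+\pi\ge\pi>1$, and $\log m\ge 1$. The base case is immediate, since $r_1 = m^{-1/2}C(N,s,m)\ge m^{-1/2}\ge 1/m$. For the inductive step, assume $r_i\ge 1/m$. Substituting $r_i^2\ge m^{-2}$ into $\delta_i = C(N,s,m)\bigl(r_i^2\log m/m\bigr)^{1/3}\log m$ gives $\delta_i\ge C(N,s,m)\,(\log m)^{4/3}/m\ge 1/m$, where the cube root contributes $m^{-1}(\log m)^{1/3}$ and combines with the outer $\log m$ to give the exponent $4/3$, and both $C(N,s,m)$ and $(\log m)^{4/3}$ exceed $1$. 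Then from property $3$, every factor other than $600/m^2$ is at least $1$, so $r_{i+1}^2 = 600\,C_{10}\log m\cdot r_i\delta_i C(N,s,m)\ge 600/m^2$, whence $r_{i+1}\ge\sqrt{600}/m\ge 1/m$, closing the induction.

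Finally, the inequalities $\log m\ge\log\delta_i^{-1}$ and $\log m\ge\log r_i^{-1}$ follow at once, because $t\ge 1/m$ forces $t^{-1}\le m$ and hence $\log t^{-1}\le\log m$. The only places calling for any care at all — and they are minor — are the exponent arithmetic in the bound for $\delta_i$ and the tally of which universal constants are at least $1$; there is no substantive obstacle, which is precisely why the statement is recorded as something that is \emph{easy to see}.
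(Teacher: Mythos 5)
Your proposal is correct, and it matches the paper's (implicit) reasoning: the paper states this proposition without proof precisely because the two identities are direct rearrangements of the third defining requirement $r_{i+1}^2 = 600\,C_{10}\log m\cdot r_i\delta_i\,C(N,s,m)$ (your division and square-root steps, with $\sqrt{600}=10\sqrt{6}$, are exactly right), and the bounds $r_i,\delta_i\ge 1/m$ follow from routine verification that every constant factor is at least $1$. The only remark worth adding is that your coupled induction could be bypassed entirely by reading the bounds off the closed-form expressions in Definition \ref{def:Sequences_for_nets}, where the exponent of $m$ is strictly greater than $-1$ and all remaining factors are at least $1$ — but this is a cosmetic difference, not a different argument.
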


The following two sequences $\{r_i\}$ and $\{\delta_i\}$ are constructed by induction based on above three requirements for the sequences above.
\begin{dfn} For $i \ge 0$, 
	\label{def:Sequences_for_nets}
	\begin{align*}
	& r_{i+1} = (600C_{10})^{3 \left(1 -\left({5 \over 6} \right)^{i} \right)} r_{1}^{({5 \over 6})^{i}}m^{- \left(1 -\left({5 \over 6} \right)^{i} \right)} (\log m)^{ 7\left(1 -\left({5 \over 6} \right)^{i} \right)} C(N,s,m)^{6 \left(1 -\left({5 \over 6} \right)^i \right)}, \\  
	& \qquad = (600 C_{10})^{3 \left(1 -\left({5 \over 6} \right)^{i} \right)} m^{- \left(1 -{1 \over 2}\left({5 \over 6} \right)^{i} \right)}  (\log m)^{7 \left(1 -  \left({5 \over 6} \right)^{i} \right)} C(N,s,m)^{6 - 5\left({5 \over 6} \right)^i}, \\  
	& \delta_{i+1} = (600C_{10})^{2 \left(1 -\left({5 \over 6} \right)^{i} \right)}  m^{- \left(1 -{1 \over 3}\left({5 \over 6} \right)^{i} \right)} (\log m)^{ {14 \over 3}\left(1 -\left({5 \over 6} \right)^{i} \right) +{4 \over 3}} C(N,s,m)^{5  - {10 \over 3}\left({5 \over 6} \right)^i}.   
	\end{align*}
\end{dfn}

\begin{prop}
\label{prop: Properties_of_sequences2}
    From the definitions of $r_{i}$, $r_{i+1}$, and $C(N,s,m)$, it is straightforward to check that there exists $c > 0$ such that $r_{i+1} \le r_i$, ${r_{i+1} \over 600} \le C_{10} \delta_i \log m \cdot C(N,s,m) $, and $\delta_i \le {r_{i+1} \over 600}$ as long as $m > c s^{10} \log^{10}(N/s)$.
\end{prop}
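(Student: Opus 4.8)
The plan is to verify all three inequalities by substituting the closed forms of $r_i$, $r_{i+1}$, and $\delta_i$ from Definition~\ref{def:Sequences_for_nets} and reducing each claim to a single scalar comparison among the exponents of $m$, $\log m$, $C(N,s,m)$, and the fixed constants $600$ and $C_{10}$. The structural fact that makes this tractable is that in Definition~\ref{def:Sequences_for_nets} every exponent is an affine function of $(5/6)^{i-1}$; hence, after forming the ratio of the two sides of any one inequality and cancelling the $i$-independent parts, what remains is a quantity of the form $\rho^{(5/6)^{i-1}}$ times an $i$-free constant, where the base $\rho$ is assembled from $m$, $\log m$, $C(N,s,m)$, $600$, and $C_{10}$. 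Since $(5/6)^{i-1}>0$ for all $i\ge 1$, the whole family of inequalities (ranging over $i$) collapses to finitely many $i$-free comparisons. Throughout I would control $C(N,s,m)$ by the estimate recorded just after \eqref{eq:definition_of_C(N,s,m)}, namely that $C(N,s,m)\le 9C_b\sqrt{s\log(N/s)}\sqrt{5\log m/c_b}$ and that $C(N,s,m)$ is bounded below by an absolute constant, so that each occurrence of $C(N,s,m)$ contributes a controlled power of $\sqrt{s\log(N/s)}$ and $\sqrt{\log m}$.

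First I would establish the monotonicity $r_{i+1}\le r_i$. Using the recurrence in the form $r_{i+1}=10\sqrt{6}\,C_{10}^{1/2}\,r_i^{5/6}m^{-1/6}(\log m)^{7/6}C(N,s,m)$ (equivalently the second relation of Proposition~\ref{prop: Properties_of_sequences}), the ratio $r_{i+1}/r_i$ equals $\big((600C_{10})^{1/2}m^{-1/12}(\log m)^{7/6}C(N,s,m)^{5/6}\big)^{(5/6)^{i-1}}$. The negative power of $m$ is decisive: once $m$ exceeds a fixed polynomial in $s\log(N/s)$ times a power of $\log m$, this base is at most $1$, and $r_{i+1}\le r_i$ holds for all $i$ simultaneously.

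The resolution bound $\delta_i\le r_{i+1}/600$ then comes essentially for free. Solving the defining recurrence $r_{i+1}^2=600\,C_{10}\,r_i\delta_i\log m\cdot C(N,s,m)$ for $\delta_i$ gives $\dfrac{600\,\delta_i}{r_{i+1}}=\dfrac{r_{i+1}}{C_{10}\,r_i\log m\cdot C(N,s,m)}$, and since $r_{i+1}\le r_i$ and $C_{10}\log m\cdot C(N,s,m)\ge 1$, the right-hand side is at most $1$; thus $\delta_i\le r_{i+1}/600$ is a consequence of the monotonicity just proved. For the middle bound $r_{i+1}/600\le C_{10}\,\delta_i\log m\cdot C(N,s,m)$ I would invoke the first identity of Proposition~\ref{prop: Properties_of_sequences}, namely $r_{i+1}/600 = C_{10}\,r_i\delta_i\log m\cdot C(N,s,m)/r_{i+1}$, to rewrite the claim purely in terms of $r_i$ and $r_{i+1}$, and then reduce it, by the same exponent cancellation as above, to a single $i$-free base comparison to be verified on the stated range $m>c\,s^{10}\log^{10}(N/s)$.

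The main obstacle is this last, middle, inequality together with the uniform bookkeeping it forces: I must exhibit one constant $c$ for which the threshold $m>c\,s^{10}\log^{10}(N/s)$ makes each of the $i$-free comparisons hold at once. Each comparison pits a fixed power of $m$ against the positive powers of $\log m$ and $C(N,s,m)$ accumulated on the other side; after expanding $C(N,s,m)$ one is left with an inequality of the shape (power of $m$) versus $(\mathrm{const})\,(s\log(N/s))^{b}(\log m)^{d}$ with $b\le 10$. The delicate point is the residual factor $(\log m)^{d}$: near the boundary $m\asymp s^{10}\log^{10}(N/s)$ one has $\log m=O(\log(s\log(N/s)))$, so $(\log m)^{d}$ is only polylogarithmic in $s\log(N/s)$ and is dominated by the polynomial factor with room to spare, while for larger $m$ the surplus power of $m$ only helps. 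I would therefore take $c$ to be the largest of the finitely many constants produced by the individual comparisons; the structural constant $C_{10}=\max\{C_l,C_L,2C_b^2\}+\pi$ and the numeral $600$ enter only as benign multiplicative factors inside each base $\rho$ and do not affect the order of the threshold.
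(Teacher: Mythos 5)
Your handling of the first and third inequalities is correct and is the natural way to verify them (the paper itself offers no proof, asserting the claims are ``straightforward to check''): the ratio $r_{i+1}/r_i$ does collapse, via the closed forms in Definition~\ref{def:Sequences_for_nets}, to $\rho^{(5/6)^{i-1}}$ with $\rho=(600C_{10})^{1/2}m^{-1/12}(\log m)^{7/6}C(N,s,m)^{5/6}$, and $\rho\le 1$ once $m>c\,s^{10}\log^{10}(N/s)$ for a large absolute $c$ (using $C(N,s,m)\lesssim \sqrt{s\log(N/s)}\sqrt{\log m}$, in effect $C(N,s,m)\le m^{1/20}$); likewise your identity $600\,\delta_i/r_{i+1}=r_{i+1}/\bigl(C_{10}\,r_i\log m\cdot C(N,s,m)\bigr)$ correctly reduces $\delta_i\le r_{i+1}/600$ to monotonicity together with $C_{10}\log m\cdot C(N,s,m)\ge 1$.

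However, your plan for the middle inequality fails, and carrying out your own reduction shows why. The identity $r_{i+1}/600=C_{10}\,r_i\delta_i\log m\cdot C(N,s,m)/r_{i+1}$ makes the printed claim $r_{i+1}/600\le C_{10}\,\delta_i\log m\cdot C(N,s,m)$ equivalent to $r_i/r_{i+1}\le 1$, i.e.\ to $r_i\le r_{i+1}$ --- exactly the negation of the first inequality. Equivalently, the single ``$i$-free base comparison'' you defer to the end is $\rho\ge 1$, which is strictly false on the stated range of $m$, where you have just arranged $\rho<1$; no choice of $c$ can verify all three printed inequalities simultaneously except in the degenerate case $\rho=1$. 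The inequality that is actually true --- and the one the paper uses downstream, e.g.\ in Step~1 of the proof of Lemma~\ref{lem:Bound_for_(II)} and in step (g) of the proof of Lemma~\ref{lem:Bound_for_(I)}, where terms of the form $C_{10}\tau\log m\cdot C(N,s,m)\,\delta_i$ are absorbed into $\tau r_{i+1}/600$ --- is the reversed bound $C_{10}\,\delta_i\log m\cdot C(N,s,m)\le r_{i+1}/600$, which follows in one line from the recursion and monotonicity: $600\,C_{10}\,\delta_i\log m\cdot C(N,s,m)=r_{i+1}^2/r_i\le r_{i+1}$ (and it subsumes the third inequality, since $C_{10}\log m\cdot C(N,s,m)\ge 1$). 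In short, the proposition's middle inequality is a direction typo; a careful execution of your strategy would have detected this rather than asserting the comparison could ``be verified on the stated range,'' since the identity you invoke makes the contradiction with $r_{i+1}\le r_i$ immediate.
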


Let $K^{(i)} := B(x, r_i) \cap \Sigma^N_{s}$.
Then, for $y \in K^{(i)}$,  $y-x \in B(0, r_i) \cap \Sigma^N_{2s}$, so $y \in B(0, r_i) \cap \Sigma^N_{2s} + x$, i.e., $K^{(i)} \subset B(0, r_i) \cap \Sigma^N_{2s} + x$. Hence, we have
\begin{align*}
w(K^{(i)}) &= w \left( B(x, r_i) \cap \Sigma^N_{s} \right) \\
&\le w \left(B(0, r_i) \cap \Sigma^N_{2s} + x \right) \quad \text{( for any sets $S, T$ with $S \subseteq T, w(S) \le w(T)$ ) } \\ 
&= w \left(B(0, r_i) \cap \Sigma^N_{2s} \right) \quad \text{( for any set $S, w(S+x) = w(S)$ ) }\\
&\le r_i \cdot C_b \sqrt{2s \log (N/s)}. 
\end{align*}

Next, we apply the local binary embedding to the set $K^{(i)}$ as follows.

\begin{cor} [Corollary of local $\delta$-binary embedding] 
	\label{cor:Local binary embedding}
	Under the same notations in Theorem \ref{thm:Local binary embedding}, we have
	\begin{enumerate}
		\item For all $y \in K^{(i)}$ with $\|x-y\|_2 \le \delta_i$, $\left| {1 \over 2m}\|\text{sign}(Ax) - \text{sign}(Ay)\|_1 \right| \le C_{10} \delta_i \log m$,
		
		\item For all $y \in K^{(i)}$ with $\|x-y\|_2 \le r_i$, $\left| {1 \over 2m}\|\text{sign}(Ax) - \text{sign}(Ay)\|_1 \right| \le C_{10} r_i \log m$,
	\end{enumerate}
	with probability $1 - {c_2 \over m^5}$ for some universal constant $c_2 > 0$.
\end{cor}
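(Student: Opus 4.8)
The plan is to derive both inequalities from a single application of the Local $\delta$-binary embedding (Theorem~\ref{thm:Local binary embedding}) to the set $G=K^{(i)}$, with the embedding parameter tuned to the radius at hand. The one geometric ingredient outside the LBE is the elementary comparison between the geodesic and Euclidean metrics on the sphere: for unit vectors $x,y$ one has $\|x-y\|_2=2\sin(\pi d_g(x,y)/2)$, and Jordan's inequality $\sin u\ge(2/\pi)u$ on $[0,\pi/2]$ gives $d_g(x,y)\le\tfrac12\|x-y\|_2$. This is what converts the \emph{additive} LBE estimate on $\tfrac1{2m}\|\text{sign}(Ax)-\text{sign}(Ay)\|_1-d_g(x,y)$ into a pure upper bound on the normalized Hamming distance. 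Since $\text{sign}(A\cdot)$ and $d_g(x,\cdot)$ are invariant under positive rescaling, I may assume the vectors in play are unit (which is the case for the NBIHT iterates), and the already-established width bound $w(K^{(i)})\le r_i C_b\sqrt{2s\log(N/s)}$ is unaffected.

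Both statements then follow from a single claim: if $\rho\in\{\delta_i,r_i\}$ and I apply Theorem~\ref{thm:Local binary embedding} to $G=K^{(i)}$ with parameter $\delta:=\rho\sqrt{\log m}$, then for every $y\in K^{(i)}$ with $\|x-y\|_2\le\rho$ I get $\tfrac1{2m}\|\text{sign}(Ax)-\text{sign}(Ay)\|_1\le C_{10}\rho\log m$. Indeed, because $\rho\ge 1/m$ (Proposition~\ref{prop: Properties_of_sequences}) we have $\delta\ge 1/m$, hence $\log\delta^{-1}\le\log m$ and $\delta/\sqrt{\log\delta^{-1}}\ge\rho\sqrt{\log m}/\sqrt{\log m}=\rho$; so the reach condition $\|x-y\|_2\le\delta/\sqrt{\log\delta^{-1}}$ of part~(1) of the LBE is met whenever $\|x-y\|_2\le\rho$. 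The LBE then yields $\tfrac1{2m}\|\text{sign}(Ax)-\text{sign}(Ay)\|_1\le d_g(x,y)+C_L\delta\le\tfrac12\rho+C_L\rho\sqrt{\log m}\le(\tfrac12+C_L)\rho\log m\le C_{10}\rho\log m$, where the last step uses $\sqrt{\log m}\le\log m$ and $C_{10}=\max\{C_l,C_L,2C_b^2\}+\pi\ge C_L+\tfrac12$. Taking $\rho=\delta_i$ gives statement~(1) and $\rho=r_i$ gives statement~(2).

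It remains to verify the hypotheses of the LBE for these choices, and this is where the real (routine but tedious) work sits. First I must confirm $\delta\in(0,1)$, which holds for $m$ large since $\delta\lesssim m^{-9/20}\sqrt{\log m}$ under $m>c\,s^{10}\log^{10}(N/s)$, and the sample-complexity condition $m\ge C_l\delta^{-3}\log\delta^{-1}\,w^2(G)$: feeding in $w^2(K^{(i)})\le 2C_b^2 r_i^2 s\log(N/s)$ and $\log\delta^{-1}\le\log m$, the right-hand side is at most $2C_l C_b^2\rho^{-3}(\log m)^{-1/2}r_i^2 s\log(N/s)$, and substituting the explicit forms of $\delta_i,r_i$ from Definition~\ref{def:Sequences_for_nets} (using $\delta_i^{-3}=C(N,s,m)^{-3}m\,r_i^{-2}(\log m)^{-4}$ when $\rho=\delta_i$) reduces this to a quantity dominated by $m$ exactly when $m>c\,s^{10}\log^{10}(N/s)$. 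Second, the failure probability $\exp(-\tilde c\,\delta m)$ must beat $c_2/(2m^5)$: for $\rho=\delta_i$ one has $\delta m=\delta_i m\sqrt{\log m}\gtrsim(\log m)^{7/3}$, and for $\rho=r_i$ one has $\delta m=r_i m\sqrt{\log m}\ge r_L m\sqrt{\log m}\gtrsim(\log m)^{25/2}$ — here I must invoke the genuine lower bound $r_i\ge r_L\approx(s\log(N/s))^{7/2}(\log m)^{12}/m$ rather than the crude $r_i\ge 1/m$ — so both exceed $\tfrac{5}{\tilde c}\log m$ for $m$ large, and a union bound over the two events gives probability $1-c_2/m^5$. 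The main obstacle is thus not conceptual but the careful bookkeeping of the powers of $m$, powers of $\log m$, and $C(N,s,m)$ factors needed to confirm the sample-complexity inequality and, for part~(2), to ensure $r_i m\sqrt{\log m}\gg\log m$; everything downstream is a direct substitution into the LBE.
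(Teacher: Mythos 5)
Your proposal is correct and follows essentially the same route as the paper's own proof: apply Theorem~\thref{thm:Local binary embedding} to $G=K^{(i)}$ with the parameter $\delta$ inflated by a $\sqrt{\log}$ factor (you take $\delta=\rho\sqrt{\log m}$ explicitly, the paper solves $\delta_i=\delta/\sqrt{\log\delta^{-1}}$ implicitly, using $\log\delta^{-1}\le\log m$ from Proposition~\thref{prop: Properties_of_sequences} either way), bound $d_g(x,y)$ by a constant multiple of $\|x-y\|_2$ (your Jordan-inequality constant $\tfrac12$ versus the paper's cruder $\pi$, both absorbed into $C_{10}$), and verify the sample-complexity condition $m\ge C_l\delta^{-3}\log\delta^{-1}w^2(K^{(i)})$ and the failure probability $\exp(-\tilde c\delta m)\le c_2/m^5$ by substituting the definitions of $\delta_i$, $r_i$, and $C(N,s,m)$. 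Your extra care on part~(2)'s probability bound (noting $r_i\ge r_L$ rather than the crude $r_i\ge 1/m$) and on the unit-norm hypothesis of the embedding are sound refinements of details the paper handles by ``the same arguments.''
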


\begin{proof}
     First, choose $\delta$ such that $\delta_i = \delta/\sqrt{\log \delta^{-1}}$. Since $y \in K^{(i)}$ with $\|x-y\|_2 \le \delta_i$ and $\log m \ge \log \delta^{-1}_i \ge  \log \delta^{-1}$ in Proposition \ref{prop: Properties_of_sequences}, the first part of Theorem \ref{thm:Local binary embedding} implies that 
     \[
     \left| {1 \over 2m}\|\text{sign}(Ax) - \text{sign}(Ay) \|_1 - d_g(x, y) \right| \le C_L \delta_i \log m.
     \]
     
     Next, note that $d_g(x,y) \le \pi \|x-y\|_2$ since $x$, $y$ are unit vectors, so $d_g(x,y) \le \pi \delta_i$ from the relation between the arc and chord lengths in the unit sphere.  
	 Thus, from the fact that $C_{10} \ge C_L + \pi$,  we have $\left| {1 \over 2m}\|\text{sign}(Ax) - \text{sign}(Ay)\|_1 \right| \le C_{10} \delta_i \log m$  with probability at least 
	 \begin{align*}
	    & 1 - \exp(-\tilde{c}\delta m  ) \\
	    & \ge 1 - \exp(-\tilde{c} \delta_i  m ) \\
	 	& \ge 1-\exp \left(- \tilde{c} (600C_{10})^{2 \left(1 -\left({5 \over 6} \right)^{i-1} \right)}  m^{{1 \over 3}\left({5 \over 6} \right)^{i-1}}  (\log m)^{ {14 \over 3}\left(1 -\left({5 \over 6} \right)^{i-1} \right) +{4 \over 3}} C(N,s,m)^{5  - {10 \over 3}\left({5 \over 6} \right)^{i-1}}  \right) \\
	 	&\ge 1-\exp \left(- \tilde{c} (\log m)^2 \right) \ge 1 - {c_2 \over m^5} \qquad \text{for some absolute constant $c_2 > 0$,}
	 \end{align*} 
	if the condition $m \ge C_l\delta^{-3} \log \delta^{-1}  w^2(K^{(i)})$ is met. By the construction of $\delta_i$ and $C(N,s,m) $, we have  $m \ge  {\left(\delta_i \over \log m \right)}^{-3}  r^2_i  \log m  \cdot C^3(N,s,m) \ge C_l\delta^{-3} \log \delta^{-1}  w^2(K^{(i)})$, so  this condition is satisfied. 
	 This proves the first part of the corollary and the second part follows from the same arguments. 
	 
\end{proof}

\subsection{\texorpdfstring{ $\epsilon$-net for  $K^{(i)}$}{Lg}. }
\label{section:Epsilon_net}
Consider an $\epsilon$-net $\mathcal{N}_{K^{(i)}}$ for $K^{(i)}$ with $\epsilon = \delta_{i}$. Then, by the Sudakov minorization inequality [Theorem 7.4.1 in Vershynin \cite{vershynin2018high}], there exist constants $C', C^{''} > 0$ such that for all $i \ge 1$,
\begin{align*}
\log | \mathcal{N}_{K^{(i)}}| 
& \le {C' w^2(K^{(i)}) \over \delta_{i}^2} \\
& \le {C' r_i^2 w^2(K) \over \delta_{i}^2} \\
& \le  C'(600C_{10})^{2 \left(1 -\left({5 \over 6} \right)^{i-1} \right)} m^{{1 \over 3}({5 \over 6})^{i-1}} (\log m)^{2 \left( 1- {7 \over 3}\left({5 \over 6} \right)^{i-1} \right)} C(N,s,m)^{2  - {10 \over 3}\left({5 \over 6} \right)^{i-1}}w^2(K)\\
& \le  C^{''}(600C_{10})^{2 \left(1 -\left({5 \over 6} \right)^{i-1} \right)} m^{{1 \over 3}({5 \over 6})^{i-1}} (\log m)^{2 \left( 1- {7 \over 3}\left({5 \over 6} \right)^{i-1} \right)} C(N,s,m)^{4  - {10 \over 3}\left({5 \over 6} \right)^{i-1}}.
\end{align*}

\subsection{\texorpdfstring{ Correlation between $\text{sign}\inner{a,x}$ and $\inner{a,y}$}{Lg} }

\begin{prop}
	\label{prop:one-bit correlation with Gaussian}
	Let $x, y$ be unit vectors and $a$ be a Gaussian random vector in $\R^N$. Then, 
	\[
	\sqrt{\pi \over 2} \mathbb{E} [ (\text{sign}\inner{a,x}) \inner{a,y} ] = \inner{x,y} 
	\] and
	\[
	\sqrt{\pi \over 2} \mathbb{E} \Big[{1 \over m} A^T \text{sign}(Ay) \Big] = y.
	\]
	
\end{prop}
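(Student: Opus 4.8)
The plan is to reduce the second (vector) identity to the first (scalar) one, and to prove the scalar identity by decomposing $y$ relative to $x$ and exploiting the independence of orthogonal Gaussian projections.

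First I would establish the scalar statement. Write $y = \inner{x,y}\,x + y^\perp$, where $y^\perp := y - \inner{x,y}\,x$ is orthogonal to $x$. Then $\inner{a,y} = \inner{x,y}\inner{a,x} + \inner{a,y^\perp}$, so by linearity of expectation
\[
\mathbb{E}[(\text{sign}\inner{a,x})\inner{a,y}] = \inner{x,y}\,\mathbb{E}[(\text{sign}\inner{a,x})\inner{a,x}] + \mathbb{E}[(\text{sign}\inner{a,x})\inner{a,y^\perp}].
\]
The variables $\inner{a,x}$ and $\inner{a,y^\perp}$ are jointly Gaussian (being linear images of $a$) and uncorrelated, since $\mathbb{E}[\inner{a,x}\inner{a,y^\perp}] = \inner{x,y^\perp} = 0$; hence they are independent. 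As $\text{sign}\inner{a,x}$ depends on $\inner{a,x}$ alone, the cross term factors as $\mathbb{E}[\text{sign}\inner{a,x}]\cdot\mathbb{E}[\inner{a,y^\perp}] = 0$. For the surviving term, $g := \inner{a,x}\sim\mathcal{N}(0,1)$ because $\|x\|_2 = 1$, and $(\text{sign}\,g)\,g = |g|$, so $\mathbb{E}[(\text{sign}\inner{a,x})\inner{a,x}] = \mathbb{E}|g| = \sqrt{2/\pi}$. Combining gives $\mathbb{E}[(\text{sign}\inner{a,x})\inner{a,y}] = \sqrt{2/\pi}\,\inner{x,y}$, and multiplying by $\sqrt{\pi/2}$ yields the first identity.

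Next I would derive the vector identity from the scalar one. Writing the rows of $A$ as i.i.d.\ standard Gaussian vectors $a_j$, we have $A^T\text{sign}(Ay) = \sum_{j=1}^m a_j\,\text{sign}\inner{a_j,y}$, so by linearity $\mathbb{E}\big[\tfrac1m A^T\text{sign}(Ay)\big] = \mathbb{E}[a\,\text{sign}\inner{a,y}]$ for a single standard Gaussian vector $a$. To identify this vector I would test it against an arbitrary fixed unit vector $v$ (equivalently, read off the $k$-th coordinate with $v = e_k$): applying the scalar identity with sign-direction $y$ and linear-direction $v$,
\[
\inner{\mathbb{E}[a\,\text{sign}\inner{a,y}],\,v} = \mathbb{E}[(\text{sign}\inner{a,y})\inner{a,v}] = \sqrt{2/\pi}\,\inner{y,v}.
\]
Since this holds for every $v$, we conclude $\mathbb{E}[a\,\text{sign}\inner{a,y}] = \sqrt{2/\pi}\,y$, and multiplying by $\sqrt{\pi/2}$ gives the second identity.

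There is essentially no hard step here; the only points requiring care are the justification that uncorrelated jointly Gaussian coordinates are independent (which is precisely where we use that $a$ is Gaussian, not merely isotropic), the degenerate case $y = \pm x$ where $y^\perp = 0$ and the cross term vanishes trivially, and the standard first absolute moment $\mathbb{E}|g| = \sqrt{2/\pi}$.
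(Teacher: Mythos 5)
Your proof is correct and is essentially the argument the paper itself defers to by citing Lemma 4.1 of Plan--Vershynin: decompose $y = \inner{x,y}x + y^\perp$, use independence of the orthogonal Gaussian projections to kill the cross term, and evaluate $\mathbb{E}|\inner{a,x}| = \sqrt{2/\pi}$. The only addition on your part is spelling out the routine reduction of the vector identity to the scalar one by testing against coordinate directions, which the paper leaves implicit.
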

\begin{proof}
See the proof of Lemma 4.1 in \cite{plan2012robust}.

\end{proof}

\section{Main technical propositions}
\label{section:Main technical propositions}
This section proves our main theorem and technical propositions. Our goal is to establish the RAIC for a certain small region around $x$ which is crucial in our analysis. 
With all the facts gathered in the previous section, we are prepared to state our main technical proposition for RAIC precisely.

\begin{prop}
	\label{prop:Main_proposition} 
	Let $x$ be a $s$-sparse unit vector in $\R^N$. Then, the following bound holds uniformly for any $s$-sparse unit vector $y$ with $r_{i+1} \le \|x-y\| \le r_i$.
	\begin{align*}
	\left \| {\tau \over m}\cdot A^T(\text{sign}(Ax) - \text{sign}(Ay)) -  (x- y) \right \|_{K_1^\circ} \le  {3 \over m^{{1 \over 40} ({5 \over 6})^{i}}} \|x - y\|_2 + {3 \over 50}r_{i+1} 
	\end{align*} with probability exceeding 
	\begin{align*}
	& 1 - 3\bar{c}  \left({s \over N} \right)^{5 s} \cdot {1 \over m^5}  -  {c_9 \over m^4}
	\end{align*}
	for some universal constants $\bar{c}, c_9 > 0$.
	
	In other words, the measurement matrix $A$ satisfies  $\left({\sqrt{2  \over \pi}\cdot {1 \over m}}, {3 \over m^{{1 \over 40} ({5 \over 6})^i }},{3 \over 50}r_{i+1},  r_{i+1}, r_i  \right)$-RAIC with high probability. 
\end{prop}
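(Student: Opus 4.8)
The plan is to read the left-hand side as the $\|\cdot\|_{K_1^\circ}$-norm of a \emph{centered} empirical process and to bound it uniformly over the annulus $K^{(i)}$ by a net-plus-concentration argument. First I would fix the centering constant $\tau=\sqrt{\pi/2}$ and invoke Proposition \ref{prop:one-bit correlation with Gaussian}: since $\sqrt{\pi/2}\,\mathbb{E}\big[\tfrac1m A^T\text{sign}(Az)\big]=z$ for every unit vector $z$, the mean of $\tfrac{\tau}{m}A^T(\text{sign}(Ax)-\text{sign}(Ay))$ is exactly $x-y$, so the vector inside the dual norm is the deviation of an empirical average from its expectation. Next I would apply the orthogonal decomposition of Lemma \ref{lem:Orthogonal_decomposition_Gaussian_vector}, expanding each $a_i$ along $\tfrac{x-y}{\|x-y\|}$, $\tfrac{x+y}{\|x+y\|}$, and the orthogonal complement $b_i(x,y)$. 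This breaks the quantity into three centered vector sums whose means are transparent: along $\tfrac{x-y}{\|x-y\|}$ the mean equals $\inner{x-y,\tfrac{x-y}{\|x-y\|}}=\|x-y\|$, which cancels the $-(x-y)$ term exactly; along $\tfrac{x+y}{\|x+y\|}$ the mean vanishes because $\inner{x-y,x+y}=\|x\|^2-\|y\|^2=0$; and along $b_i(x,y)$ the mean vanishes because, for Gaussian $a_i$, the orthogonal component is independent of the pair $(\inner{a_i,x},\inner{a_i,y})$. Each of the three terms therefore reduces to a multiplier process of the form \eqref{eq:Multiplier_process_sum} with a $1$-Lipschitz multiplier bounded by $C(N,s,m)$ --- namely $\inner{a_i,\tfrac{x-y}{\|x-y\|}}$, $\inner{a_i,\tfrac{x+y}{\|x+y\|}}$, and (after pairing $b_i(x,y)$ with the supremizing $u\in K_1$ in the dual norm) $\inner{b_i(x,y),u}$.

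For each of these three sums I would run the chain of inequalities (i)--(ix) from the proof sketch. Fix the $\delta_i$-net $\mathcal{N}_{K^{(i)}}$ of $K^{(i)}$ from Section \ref{section:Epsilon_net} and, for $y\in K^{(i)}$, choose $\hat y\in\mathcal{N}_{K^{(i)}}$ with $\|y-\hat y\|\le\delta_i$. The discretization is absorbed in two places: replacing $\text{sign}(\inner{a_i,y})$ by $\text{sign}(\inner{a_i,\hat y})$ is handled by the \emph{local binary embedding} of Corollary \ref{cor:Local binary embedding}, which bounds the fraction of flipped signs by $C_{10}\delta_i\log m$; replacing the multiplier $g(a_i,y)$ by $g(a_i,\hat y)$ is handled by the uniform Lipschitz and boundedness estimates of Lemma \ref{lem: Unified uniform bound}. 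Both contribute errors of order $C(N,s,m)\log m\cdot\delta_i$. It then remains to control the \emph{fixed} net sum $\tfrac{\tau}{m}\sum_i(\text{sign}(\inner{a_i,x})-\text{sign}(\inner{a_i,\hat y}))g(a_i,\hat y)$ for each $\hat y$, after which a union bound over $\mathcal{N}_{K^{(i)}}$ (and, for the $b_i$ term, over a net for the supremizing $u$) yields the uniform statement; transferring from $\hat y$ back to $y$ uses $\|x-\hat y\|\le\|x-y\|+\delta_i$ together with $\delta_i\le r_{i+1}/600$ from Proposition \ref{prop: Properties_of_sequences2}.

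The crux, and the step I expect to be hardest, is the concentration of the fixed net sum --- step (vi). Each summand is mean-zero and bounded by $2C(N,s,m)$, but a naive bound ignoring cancellation only gives $C(N,s,m)\log m\cdot r_{i+1}$, which exceeds $r_{i+1}$ and is useless; the improvement must come from the \emph{small variance}. Since $\text{sign}(\inner{a_i,x})-\text{sign}(\inner{a_i,\hat y})$ is nonzero only on a fraction $\approx d_g(x,\hat y)\lesssim\|x-\hat y\|$ of the indices, at a fixed net point the summand has second moment of order $\|x-\hat y\|\,C(N,s,m)^2$, so the variance proxy of $\tfrac{\tau}{m}\sum_i$ is $\approx\|x-\hat y\|\,C(N,s,m)^2/m$. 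Feeding this together with the per-term bound $2C(N,s,m)/m$ into Bernstein's inequality for bounded mean-zero variables, and choosing the deviation level to absorb $\log|\mathcal{N}_{K^{(i)}}|$ from the union bound, produces a contribution of order $\tfrac{2}{m^{{1\over 40}({5\over 6})^{i}}}\|x-\hat y\|_2+\tfrac{1}{600}r_{i+1}$. The delicate point is that the variance estimate, the net cardinality, the Bernstein threshold, and the scales $\delta_i,r_{i+1}$ must balance \emph{exactly} as prescribed by the recursion $r_{i+1}^2\asymp\log m\cdot r_i\,\delta_i\,C(N,s,m)$ of Definition \ref{def:Sequences_for_nets} and Propositions \ref{prop: Properties_of_sequences}--\ref{prop: Properties_of_sequences2}; it is precisely this balancing that turns the $m^{-1/2}$-scale first iterate into the improving exponent $1-\tfrac12\left(\tfrac56\right)^{i-1}$. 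Finally I would sum the three directional contributions and collect the failure events of the local-embedding, uniform-bound, and Bernstein steps (over all net points and directions) by a union bound to reach the stated probability $1-3\bar c\,(s/N)^{5s}/m^5-c_9/m^4$.
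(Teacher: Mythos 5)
Your proposal is correct in outline and, for the terms along $\tfrac{x-y}{\|x-y\|}$ and $\tfrac{x+y}{\|x+y\|}$, it follows essentially the same route as the paper: the orthogonal decomposition of Lemma \ref{lem:Orthogonal_decomposition_Gaussian_vector}, the mean computations via Proposition \ref{prop:one-bit correlation with Gaussian} (including the exact cancellation of $-(x-y)$ in the $\tfrac{x-y}{\|x-y\|}$ direction), the $\delta_i$-net with Corollary \ref{cor:Local binary embedding} and Lemma \ref{lem: Unified uniform bound} absorbing the discretization, and the truncated Bernstein step with variance proxy $\lesssim C(N,s,m)^2\,\|x-\hat y\|_2$ coming from $\mathbb{E}\bigl[(\mathrm{sign}\inner{a,x}-\mathrm{sign}\inner{a,\hat y})^2\bigr]=4\,d_g(x,\hat y)$ --- this is exactly the paper's Lemmas \ref{lem:Bound_for_(I)} and \ref{lem:Bound_for_(II)}, including the role of the recursion $r_{i+1}^2\asymp \log m\cdot r_i\,\delta_i\,C(N,s,m)$ in absorbing the cross terms (note that for term $(I)$ the normalized-direction swap costs $2\delta_i/r_{i+1}$ via Proposition \ref{prop:difference_of_two_normalized_vectors}, so the cross-term error is $C(N,s,m)\log m\cdot r_i\delta_i/r_{i+1}$ rather than $C(N,s,m)\log m\cdot\delta_i$; the recursion is precisely what makes this $\le r_{i+1}/\mathrm{const}$, so your appeal to the balancing covers it).

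Where you genuinely diverge is the orthogonal term $(III)$. You propose to scalarize the dual norm by pairing $b_i(x,\hat y)$ with a net over the supremizing $u\in K_1$ and then reuse the Bernstein machinery with a second union bound. The paper instead exploits the independence of $b_i(x,\hat y)$ from the sign factors more fully: conditional on the signs, $h=\tfrac1m\sum_i(\mathrm{sign}\inner{a_i,x}-\mathrm{sign}\inner{a_i,\hat y})\,b_i(x,\hat y)$ is exactly Gaussian, $N(0,\lambda^2 I_{(x,\hat y)^\perp})$ with $\lambda^2=\tfrac1{m^2}\sum_i(\mathrm{sign}\inner{a_i,x}-\mathrm{sign}\inner{a_i,\hat y})^2$, so $\mathbb{E}\|h\|_{K_1^\circ}\le\lambda\,w(K_1)$ and Gaussian concentration finishes, with the local binary embedding entering only through $\lambda\le 2\sqrt{C_{10}r_i\log m/m}$. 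This avoids any net over $u$ and yields the $\sqrt{s\log(N/s)}$ factor directly from the Gaussian width. Your Bernstein-plus-$u$-net alternative should also close: a constant-scale net over $K_1$ costs only $\log|\mathcal{N}_u|\lesssim s\log(N/s)\approx C(N,s,m)^2$, and the Bernstein exponent at deviation $t\asymp r_{i+1}$ is of order $m r_{i+1}^2/r_i = 600\,C_{10}\,m\,\delta_i\log m\cdot C(N,s,m)$, which exceeds $\log|\mathcal{N}_{K^{(i)}}|$ by an extra factor of roughly $C(N,s,m)^2(\log m)^{5}$ --- enough slack to absorb the second union bound. So both routes work; the paper's conditional-Gaussianity argument is sharper and shorter for this term, while yours is more mechanical but requires verifying that the recursion's slack covers the additional $s\log(N/s)$ in the exponent, which it does.
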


\subsection{Orthogonal decomposition of measurement vectors}
The first step of the proof of Proposition \ref{prop:Main_proposition} is the decomposition of ${\tau \over m} A^T(\text{sign}(Ax) - \text{sign}(Ay)) -(x-y)$.
Essentially, it decomposes into three parts: the components along the direction $x-y$, $x+y$, and their orthogonal part.
This decomposition is based on Lemma \ref{lem:Orthogonal_decomposition_Gaussian_vector}.

\begin{proof} [Proof of Proposition \ref{prop:Main_proposition} ]
Using the expansion of $a_i$ in Lemma \ref{lem:Orthogonal_decomposition_Gaussian_vector} and the triangle inequality, we have
\begin{align*}
&\left \| {\tau \over m}\cdot A^T(\text{sign}(Ax) - \text{sign}(Ay)) -  (x- y) \right \|_{K_1^\circ} \\
&= \left \| {\tau \over m} \sum_{i=1}^m(\text{sign} (\inner{a_i,x}) - \text{sign}(\inner{a_i, y})a_i -  (x- y) \right \|_{K_1^\circ} \\
&  \le  \left \| {\tau \over m} \sum_{i=1}^m(\text{sign} (\inner{a_i,x}) - \text{sign}(\inner{a_i, y})\left(\inner{a_i, {x -y \over \|x -y \| }} {x -y \over \|x -y \| } + \inner{a_i,{x + y\ \over \|x +y \|}  }{x +y \over \|x +y \| } \right) -  (x- y) \right \|_{K_1^\circ} \\
&\quad + \left \| {\tau \over m} \sum_{i=1}^m(\text{sign} (\inner{a_i,x}) - \text{sign}(\inner{a_i, y}) b_i(x,y) \right \|_{K_1^\circ} \\
&\le  \left \| {\tau \over m} \sum_{i=1}^m(\text{sign} (\inner{a_i,x}) - \text{sign}(\inner{a_i, y}) \inner{a_i, {x -y \over \|x -y \| }} {x -y \over \|x -y \| }   -  (x- y) \right \|_{K_1^\circ} \\
&\quad + \left \| {\tau \over m} \sum_{i=1}^m(\text{sign} (\inner{a_i,x}) - \text{sign}(\inner{a_i, y}) \inner{a_i,{x +y \over \|x +y \|}  }{x +y \over \|x +y \| }  \right \|_{K_1^\circ} \\
&\quad + \left \| {\tau \over m} \sum_{i=1}^m(\text{sign} (\inner{a_i,x}) - \text{sign}(\inner{a_i, y}) b_i(x,y) \right \|_{K_1^\circ} \\
&\le  \left | {1 \over m} \sum_{i=1}^m \tau (\text{sign} (\inner{a_i,x}) - \text{sign}(\inner{a_i, y}) \inner{a_i, {x -y \over \|x -y \| }}  -  \|x- y\| \right| \left \| {x -y \over \|x -y \| }   \right \|_{K_1^\circ} \\
&\quad + \left | {1 \over m} \sum_{i=1}^m \tau(\text{sign} (\inner{a_i,x}) - \text{sign}(\inner{a_i, y}) \inner{a_i,{x +y \over \|x +y \|}  } \right|  \left \| {x +y \over \|x +y \| }  \right \|_{K_1^\circ} \\
&\quad + \left \| {\tau \over m} \sum_{i=1}^m(\text{sign} (\inner{a_i,x}) - \text{sign}(\inner{a_i, y}) b_i(x,y) \right \|_{K_1^\circ} \\
& =  \underbrace{\left | {1 \over m} \sum_{i=1}^m \tau (\text{sign} (\inner{a_i,x}) - \text{sign}(\inner{a_i, y}) \inner{a_i, {x -y \over \|x -y \| }}  -  \|x- y\| \right| }_{(I)}     \\
&\quad +  \underbrace{ \left | {1 \over m} \sum_{i=1}^m \tau(\text{sign} (\inner{a_i,x}) - \text{sign}(\inner{a_i, y}) \inner{a_i,{x +y \over \|x +y \|}  } \right| }_{(II)}  \\
&\quad + \underbrace{  \left \| {\tau \over m} \sum_{i=1}^m(\text{sign} (\inner{a_i,x}) - \text{sign}(\inner{a_i, y}) b_i(x,y) \right \|_{K_1^\circ}  }_{(III)} .
\end{align*}
The above inequality brings to control the three terms $(I), (II)$, and $(III)$, which is the main technical challenge of our work. The subsequent three lemmas provide upper bounds for these terms, which will be proven in the next section. 

\begin{lem}
	\label{lem:Bound_for_(I)}
	There exist universal constants $c, \bar{c} > 0$ such that for all $s$-sparse unit vector $y$ with $r_i \le \|x -y \|_2 \le r_{i+1}$, we have
	\[
		(I) \le {1 \over m^{{1 \over 40}({5 \over 6})^i}} \|x -y\|_2  + {r_{i+1} \over 50} 
	\] for all $m > c s^{10} \log^{10}(N/s)$
	with probability at least  $1 -   \bar{c}  \left({s \over N} \right)^{5 s} \cdot {1 \over m^5}  - {c_6 \over m^4}$ 

\end{lem}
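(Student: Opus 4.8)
The plan is to recognize that $(I)$ is a \emph{centered} empirical process and then control it uniformly over the annulus $K^{(i)}$ by the net-plus-concentration scheme sketched in Section~\ref{section:Multiscale analysis}. First I would compute the mean of the summand: writing $u:={x-y\over\|x-y\|}$ and invoking Proposition~\ref{prop:one-bit correlation with Gaussian} with the unbiasing choice $\tau=\sqrt{\pi/2}$,
\[
\mathbb{E}\Big[\tau(\text{sign}\inner{a,x}-\text{sign}\inner{a,y})\inner{a,u}\Big]=\inner{x,u}-\inner{y,u}=\inner{x-y,u}=\|x-y\|,
\]
so the term $-\|x-y\|$ in $(I)$ is exactly the negative mean and $(I)=\big|{1\over m}\sum_i S_i(y)-\mathbb{E}\,S_i(y)\big|$ with $S_i(y):=\tau(\text{sign}\inner{a_i,x}-\text{sign}\inner{a_i,y})\inner{a_i,u}$. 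A clean reduction is to pass to the orthonormal frame $u,\ w:={x+y\over\|x+y\|}$ (orthogonal by Lemma~\ref{lem:Orthogonal_decomposition_Gaussian_vector}): setting $\alpha_i=\inner{a_i,u},\ \beta_i=\inner{a_i,w}$, a short computation shows the signs disagree exactly when ${\|x-y\|\over2}|\alpha_i|>\sqrt{1-\|x-y\|^2/4}\,|\beta_i|$, and on that event $S_i(y)=2\tau|\alpha_i|$, while $S_i(y)=0$ otherwise. Thus $S_i(y)\ge0$, is bounded by $2\tau\,C(N,s,m)$ on the event $E_u$ (Lemma~\ref{lem: Unified uniform bound}), and, conditioning on $\alpha_i$ and using $\mathbb{P}(|\beta_i|<t)\le\sqrt{2/\pi}\,t$, has variance of order $\|x-y\|$. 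This linear-in-distance variance is what ultimately produces the sharp rate.

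Next I would impose uniformity via the $\delta_i$-net $\mathcal{N}_{K^{(i)}}$ of Section~\ref{section:Epsilon_net}. For $y\in K^{(i)}$ pick $\hat y\in\mathcal{N}_{K^{(i)}}$ with $\|y-\hat y\|\le\delta_i$, set $\hat u:={x-\hat y\over\|x-\hat y\|}$, and split the deviation into three pieces mirroring inequalities (i)--(vi) of the RAIC proof sketch. (a) Replacing $\text{sign}\inner{a_i,y}$ by $\text{sign}\inner{a_i,\hat y}$ costs at most ${\tau\over m}\sum_i|\text{sign}\inner{a_i,y}-\text{sign}\inner{a_i,\hat y}|\,|\inner{a_i,u}|$; by the local binary embedding (Corollary~\ref{cor:Local binary embedding} at scale $\delta_i$) the flip count is $\le 2mC_{10}\delta_i\log m$ and $|\inner{a_i,u}|\le C(N,s,m)$ on $E_u$, giving $\lesssim \delta_i\,C_{10}\log m\,C(N,s,m)$. (b) Replacing $u$ by $\hat u$ costs $\le{\tau\over m}\sum_i|\text{sign}\inner{a_i,x}-\text{sign}\inner{a_i,\hat y}|\,|\inner{a_i,u-\hat u}|$; here Proposition~\ref{prop:difference_of_two_normalized_vectors} bounds $\|u-\hat u\|\le 2\delta_i/r_{i+1}$ and the second part of Corollary~\ref{cor:Local binary embedding} bounds the flip count at scale $r_i$ by $2mC_{10}r_i\log m$, giving $\lesssim (r_i\delta_i/r_{i+1})\,C_{10}\log m\,C(N,s,m)$. (c) The fully discretized term $\big|{\tau\over m}\sum_i(\text{sign}\inner{a_i,x}-\text{sign}\inner{a_i,\hat y})\inner{a_i,\hat u}-\|x-\hat y\|\big|$ is handled, for each fixed $\hat y$, by Bernstein's inequality for mean-zero bounded variables with the range and variance of the first paragraph, followed by a union bound over $\mathcal{N}_{K^{(i)}}$ and the $s$-sparse support patterns, yielding $\le m^{-{1\over40}({5\over6})^i}\|x-\hat y\|+{1\over600}r_{i+1}$.

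Finally I would collapse everything using the recurrence of Definition~\ref{def:Sequences_for_nets}. The identities $r_{i+1}^2=600C_{10}\log m\,r_i\delta_i\,C(N,s,m)$ and $\delta_i\le r_{i+1}/600$ (Propositions~\ref{prop: Properties_of_sequences}, \ref{prop: Properties_of_sequences2}) are designed precisely so that the factor in (a) equals $r_{i+1}^2/(600 r_i)\le r_{i+1}/600$ and the factor in (b) equals $r_{i+1}/600$; both are thus comfortably below $r_{i+1}/50$ after accounting for the absolute constant $\tau=\sqrt{\pi/2}$. A final triangle inequality $\|x-\hat y\|\le\|x-y\|+\delta_i$ transfers the Bernstein bound back to $\|x-y\|$, absorbing the extra $\delta_i$ into the $r_{i+1}$ budget. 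Intersecting the events $E_u$ (probability $\ge 1-2/m^4$), the embedding events of Corollary~\ref{cor:Local binary embedding}, and the Bernstein union bound gives $(I)\le m^{-{1\over40}({5\over6})^i}\|x-y\|_2+r_{i+1}/50$ for $m>c\,s^{10}\log^{10}(N/s)$ with probability at least $1-\bar c\,(s/N)^{5s}m^{-5}-c_6 m^{-4}$. The step I expect to be the main obstacle is the \emph{variance estimate} behind (c): a crude Cauchy--Schwarz split of $\mathbb{E}[(\text{sign}\inner{a,x}-\text{sign}\inner{a,\hat y})^2\inner{a,\hat u}^2]$ only gives $O(\sqrt{\|x-\hat y\|})$, which propagates to an $m^{-1/2}$ error and destroys the optimal rate; the correct $O(\|x-\hat y\|)$ requires the conditional computation in the $(u,w)$-frame, exploiting that on the sign-disagreement wedge the projection direction $\hat u$ is exactly the one along which $|\beta_i|$ must be small. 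A secondary delicate point is the bookkeeping: one must check that the $2\delta_i/r_{i+1}$ amplification in (b) is exactly cancelled by the multiscale recurrence, and that the Bernstein deviation level can beat $\log|\mathcal{N}_{K^{(i)}}|\sim m^{({5\over6})^{i-1}/3}$ while keeping the coefficient of $\|x-\hat y\|$ at $m^{-{1\over40}({5\over6})^i}$.
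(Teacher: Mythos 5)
Your proposal is correct and follows essentially the same route as the paper: the same identification of $\|x-y\|$ as the mean of the summand via Proposition~\ref{prop:one-bit correlation with Gaussian}, the same three-way net decomposition (sign-flip cost at scale $\delta_i$ via Corollary~\ref{cor:Local binary embedding}, direction-change cost via Proposition~\ref{prop:difference_of_two_normalized_vectors} combined with the flip count at scale $r_i$, and a truncated Bernstein bound on the fully discretized term unionized over $\mathcal{N}_{K^{(i)}}$), and the same absorption of all error terms into the $r_{i+1}$ budget through the identity $C_{10}\,r_i\delta_i\log m\cdot C(N,s,m)=r_{i+1}^2/600$ of Proposition~\ref{prop: Properties_of_sequences}. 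The one genuine divergence is the variance estimate you single out as the main obstacle: you are right that plain Cauchy--Schwarz only yields $O\bigl(\sqrt{\|x-\hat y\|}\bigr)$, but the paper does not need your exact conditional computation in the $(u,w)$-frame; it instead truncates by $\ii_{E_i}$, pulls out $\inner{a_i,\hat u}^2\le C(N,s,m)^2$ via Lemma~\ref{lem: Unified uniform bound}, and uses $\mathbb{E}\bigl[(\text{sign}\inner{a,x}-\text{sign}\inner{a,\hat y})^2\bigr]=4\,d_g(x,\hat y)\le 4\pi\|x-\hat y\|$, so the linear-in-distance variance follows from the geodesic-distance identity at the price of a harmless $C(N,s,m)^2$ factor, and Cauchy--Schwarz appears only in bounding the $O(m^{-2})$ truncation error of the mean. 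Your wedge computation (signs disagree exactly when ${\|x-y\|\over 2}|\alpha_i|>\sqrt{1-\|x-y\|^2/4}\,|\beta_i|$, on which event the summand equals $2\tau|\alpha_i|$) is verifiably correct and gives a slightly sharper, $C(N,s,m)$-free variance, so it is a valid alternative for this sub-step --- just not a necessary one. Two cosmetic remarks: your extra union bound over $s$-sparse support patterns is redundant, since the Sudakov bound on $\log|\mathcal{N}_{K^{(i)}}|$ already goes through $w(K^{(i)})$; and you implicitly (and correctly) work on the annulus $r_{i+1}\le\|x-y\|_2\le r_i$, whereas the lemma statement has the two radii transposed.
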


\begin{lem}
	\label{lem:Bound_for_(II)}
	There exist universal constants $c, \bar{c} > 0$ such that for all $s$-sparse unit vector $y$ with $r_i \le \|x -y \|_2 \le r_{i+1}$,
	\[
	(II) \le {2 \over m^{{1 \over 40}({5 \over 6})^i}} \|x -y\|_2  + {r_{i+1} \over 50} 
	\] for all $m > c s^{10} \log^{10}(N/s)$
	with probability at least  $1 -   \bar{c}   \left({s \over N} \right)^{5 s} \cdot {1 \over m^5}  - {c_6 \over m^4}$.

\end{lem}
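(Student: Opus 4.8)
The plan is to treat $(II)$ as an instance of the centered multiplier process \eqref{eq:Multiplier_process_sum} with $g(a,y)=\inner{a,\frac{x+y}{\|x+y\|}}$, and to run the $\epsilon$-net-plus-Bernstein scheme of inequalities (i)--(ix) specialized to this $g$. Two structural features make the $x+y$ block the benign one. First, since $x-y\perp x+y$ (Lemma \ref{lem:Orthogonal_decomposition_Gaussian_vector}), Proposition \ref{prop:one-bit correlation with Gaussian} (with $\tau=\sqrt{\pi/2}$) gives, for any unit $\hat y$,
\[
\tau\,\mathbb{E}\big[(\text{sign}\inner{a,x}-\text{sign}\inner{a,\hat y})g(a,\hat y)\big]=\inner{x-\hat y,\tfrac{x+\hat y}{\|x+\hat y\|}}=0,
\]
so the summand is \emph{exactly} centered and the bias term of \eqref{eq:Multiplier_process_sum} disappears (in contrast to $(I)$). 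Second, because $\|x-y\|_2\le r_i\ll 1$ forces $\inner{x,y}$ near $1$, the denominator $\|x+y\|_2=\sqrt{2+2\inner{x,y}}$ is $\Theta(1)$, so $y\mapsto\frac{x+y}{\|x+y\|}$ is $O(1)$-Lipschitz; this avoids the $r_{i+1}^{-1}$ Lipschitz blow-up that plagues the $x-y$ direction. I also record that $|g(a_i,y)|\le C(N,s,m)$ on the event $E_u$ of Lemma \ref{lem: Unified uniform bound}, since $x+y\in K-K$.

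Next I would carry out the net reduction. Fix the $\delta_i$-net $\mathcal{N}_{K^{(i)}}$ of Section \ref{section:Epsilon_net}; for $y$ in the annulus $r_{i+1}\le\|x-y\|_2\le r_i$ pick $\hat y$ with $\|y-\hat y\|_2\le\delta_i$ and split
\[
(II)\le\underbrace{\tfrac{\tau}{m}\textstyle\sum_i|\text{sign}\inner{a_i,y}-\text{sign}\inner{a_i,\hat y}|\,|g(a_i,y)|}_{\text{(S)}}+\underbrace{\tfrac{2\tau}{m}\textstyle\sum_i|g(a_i,y)-g(a_i,\hat y)|}_{\text{(L)}}+\underbrace{\Big|\tfrac{\tau}{m}\textstyle\sum_i(\text{sign}\inner{a_i,x}-\text{sign}\inner{a_i,\hat y})g(a_i,\hat y)\Big|}_{\text{(N)}}.
\]
For (S) I would apply Theorem \ref{thm:Local binary embedding} to the pair $(y,\hat y)\in K^{(i)}$ exactly as in the proof of Corollary \ref{cor:Local binary embedding}, obtaining $\frac{1}{2m}\|\text{sign}(Ay)-\text{sign}(A\hat y)\|_1\le C_{10}\delta_i\log m$; with $|g|\le C(N,s,m)$ this gives $\text{(S)}\lesssim C(N,s,m)\,\delta_i\log m$. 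For (L) I would use the $x+y$ analogue of Proposition \ref{prop:difference_of_two_normalized_vectors} (its denominator $\|x+\hat y\|_2$ being $\Theta(1)$) to get $\big\|\frac{x+y}{\|x+y\|}-\frac{x+\hat y}{\|x+\hat y\|}\big\|_2\lesssim\delta_i$, and then the $2s$-sparse version of Lemma \ref{lem:stable_perturbation_lemma1} to conclude $|g(a_i,y)-g(a_i,\hat y)|\lesssim C(N,s,m)\,\delta_i$, whence $\text{(L)}\lesssim C(N,s,m)\,\delta_i$. Both terms are thus $\lesssim C(N,s,m)\delta_i\log m$, which by the scale relation $\frac{r_{i+1}}{600}=C_{10}\frac{r_i\delta_i\log m\,C(N,s,m)}{r_{i+1}}$ of Proposition \ref{prop: Properties_of_sequences} (yielding $C_{10}C(N,s,m)\delta_i\log m\le\frac{r_{i+1}}{600}$ since $r_{i+1}\le r_i$) together with $\delta_i\le\frac{r_{i+1}}{600}$ from Proposition \ref{prop: Properties_of_sequences2} is absorbed into the $\frac{1}{50}r_{i+1}$ budget.

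The crux is the net term (N). For fixed $\hat y$, set $X_i:=\tau(\text{sign}\inner{a_i,x}-\text{sign}\inner{a_i,\hat y})g(a_i,\hat y)$; these are i.i.d., mean zero by the identity above, and bounded by $B:=2\tau C(N,s,m)$ on $E_u$. Their variance is $\mathbb{E}X_i^2=4\tau^2\,\mathbb{E}\big[\mathbf{1}\{\text{signs of }\inner{a,x},\inner{a,\hat y}\text{ disagree}\}\,g(a,\hat y)^2\big]$, and here the bisector geometry is decisive: on the sign-disagreement wedge the Gaussian vector is nearly orthogonal to the bisector $\frac{x+\hat y}{\|x+\hat y\|}$, so $g(a,\hat y)^2$ is small exactly where the indicator is nonzero, giving $\sigma^2\lesssim\|x-\hat y\|_2\cdot\mathrm{polylog}(m)$. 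I would then invoke Bernstein's inequality for bounded mean-zero variables and union bound over $\mathcal{N}_{K^{(i)}}$, whose log-cardinality is estimated in Section \ref{section:Epsilon_net}; at the level $t=\frac{2}{m^{(1/40)(5/6)^i}}\|x-\hat y\|_2+\frac{1}{600}r_{i+1}$ one checks that both the variance term $\sigma^2/(mt^2)$ and the boundedness term $B/(mt)$ dominate $\log|\mathcal{N}_{K^{(i)}}|+5\log m$, precisely because the scales $r_i,\delta_i$ of Definition \ref{def:Sequences_for_nets} were tuned through $r_{i+1}^2\approx 600C_{10}\log m\,r_i\delta_i C(N,s,m)$ to balance these two contributions. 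The main obstacle is exactly this variance estimate: the crude bound $|X_i|\le B$ would only give the useless $C(N,s,m)\log m\cdot r_{i+1}\gg r_{i+1}$ flagged after step (vi), and even a Cauchy--Schwarz bound $\sigma^2\lesssim\|x-\hat y\|_2^{1/2}$ is too weak and fails at the innermost scales $i\to L$; one must genuinely exploit the cancellation, via the bisector geometry, to reach the sharp $\sigma^2\lesssim\|x-\hat y\|_2$, after which the tuned scales make the two-sided Bernstein trade-off against the net size go through. Finally, undoing the net by $\|x-\hat y\|_2\le\|x-y\|_2+\delta_i$ and collecting (S), (L), (N) yields $(II)\le\frac{2}{m^{(1/40)(5/6)^i}}\|x-y\|_2+\frac{1}{50}r_{i+1}$, with the failure probabilities $\frac{2}{m^4}$ (event $E_u$), $\frac{c_2}{m^5}$ (Corollary \ref{cor:Local binary embedding}), and the union-bounded Bernstein term over the sparse net combining into the stated $1-\bar c(s/N)^{5s}/m^5-c_6/m^4$.
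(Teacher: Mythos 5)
Your proposal reproduces the architecture of the paper's proof of Lemma \ref{lem:Bound_for_(II)} in all structural respects: the same $\delta_i$-net reduction in which Corollary \ref{cor:Local binary embedding} controls the sign-flip term $\mathrm{(S)}$ and a uniform/Lipschitz bound controls the perturbation of $g$, the same exact-centering identity from Proposition \ref{prop:one-bit correlation with Gaussian} using $x-\hat y\perp x+\hat y$, the same Bernstein-plus-union-bound over $\mathcal{N}_{K^{(i)}}$ at deviation level $m^{-{1\over 40}({5\over 6})^i}\|x-\hat y\|_2$, the same undoing of the net via $\|x-\hat y\|_2\le\|x-y\|_2+\delta_i$, and the same probability accounting (your derivation $C_{10}\,C(N,s,m)\,\delta_i\log m\le {r_{i+1}\over 600}$ from Proposition \ref{prop: Properties_of_sequences} and $r_{i+1}\le r_i$ is in fact the correct reading of Proposition \ref{prop: Properties_of_sequences2}, whose stated middle inequality has its direction reversed). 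Where you genuinely diverge is the variance estimate, and there your diagnosis of ``the main obstacle'' is off. The paper uses no bisector geometry: after truncating to $E_i$ it bounds $|g(a,\hat y)|\le C(N,s,m)$ pointwise and writes $\mathbb{E}[{\bf X}_2^2\,\ii_{E_i}]\le C(N,s,m)^2\,\mathbb{E}\big[(\text{sign}\inner{a,x}-\text{sign}\inner{a,\hat y})^2\big]=4C(N,s,m)^2 d_g(x,\hat y)\le 4\pi C(N,s,m)^2\|x-\hat y\|_2$, so the linear dependence on $\|x-\hat y\|_2$ comes purely from the sign-disagreement probability; your claim that without the wedge geometry one is stuck at $\sigma^2\lesssim\|x-\hat y\|_2^{1/2}$ is therefore false (Cauchy--Schwarz, which does lose the square root, is needed only for the untruncated tail $\mathbb{E}[|{\bf X}_2|\,\ii_{E_i^c}]$). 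That said, your wedge argument is itself sound and even stronger than needed: on the disagreement event $\inner{a,x}$ and $\inner{a,\hat y}$ have opposite signs, so $|\inner{a,x+\hat y}|\le|\inner{a,x-\hat y}|$ and hence $|g(a,\hat y)|\le{\|x-\hat y\|_2\over\|x+\hat y\|_2}\left|\inner{a,{x-\hat y\over\|x-\hat y\|}}\right|$, giving $\mathbb{E}[\ii\,g^2]\lesssim\|x-\hat y\|_2^2$ with no $C(N,s,m)^2$ factor. It would work; it is simply not what the tuned scales of Definition \ref{def:Sequences_for_nets} require.

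The one genuine gap is in the Bernstein step: you apply the bounded Bernstein inequality to variables that are ``mean zero and bounded by $B$ on $E_u$,'' but no single random variable has both properties --- ${\bf X}_2(a_i,x,\hat y)$ is exactly centered but unbounded, while ${\bf X}_2\,\ii_{E_i}$ is bounded but biased. The concentration inequality must be run on the truncated variables, and the bias has to be restored separately via $\left|\mathbb{E}[{\bf X}_2\,\ii_{E_i}]\right|=\left|\mathbb{E}[{\bf X}_2\,\ii_{E_i^c}]\right|\le\big(\mathbb{E}{\bf X}_2^2\big)^{1/2}\,\mathbb{P}(E_i^c)^{1/2}\lesssim m^{-2}$, exactly as the paper does in its Step 2, after which this $O(m^{-2})$ term is harmlessly absorbed into the $r_{i+1}$ budget (the paper's step (c)). This is a two-line repair, but as written your application of Bernstein is not valid; everything else in the proposal matches or legitimately varies from the paper's argument.
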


\begin{lem}
	\label{lem:Bound_for_(III)}
	There exist universal constants $c, \bar{c} > 0$ such that for  all $s$-sparse unit vector $y$ with $r_i \le \|x -y \|_2 \le r_{i+1}$,
	\[
	(III) \le {r_{i+1} \over 50} 
	\] for all $m > c s^{10} \log^{10}(N/s)$
	with probability at least $ 1 -   \bar{c}  \left({s \over N} \right)^{5 s} \cdot {1 \over m^5}- {c_8 \over m^4}$.
\end{lem}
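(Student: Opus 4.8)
The plan is to exploit the defining feature of the third term: $b_i(x,y)$ is the part of $a_i$ orthogonal to $\vspan\{x,y\}$, whereas $\text{sign}(\inner{a_i,x})$ and $\text{sign}(\inner{a_i,y})$ are functions only of the in-plane coordinates $\inner{a_i,x},\inner{a_i,y}$. By rotational invariance of the Gaussian, $b_i(x,y)$ is therefore independent of both sign terms and has mean zero, so the random vector
\[
V(y) := {\tau \over m}\sum_{i=1}^m \bigl(\text{sign}(\inner{a_i,x}) - \text{sign}(\inner{a_i,y})\bigr)\, b_i(x,y)
\]
satisfies $\mathbb{E}\,V(y)=0$. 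In contrast to $(I)$ and $(II)$, the term $(III)=\|V(y)\|_{K_1^\circ}=\sup_{u\in K_1}\inner{V(y),u}$ carries no bias; this is precisely why its target bound ${r_{i+1}/50}$ has no multiple of $\|x-y\|$, and why summing the three lemmas reproduces the additive constant ${3 \over 50}r_{i+1}$ together with the coefficient ${3 \over m^{{1\over 40}({5\over 6})^{i}}}$ of the RAIC. The goal is to bound $\sup_{y\in K^{(i)}}\|V(y)\|_{K_1^\circ}$ uniformly.

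First I would discretize using the $\delta_i$-net $\mathcal{N}_{K^{(i)}}$ of Section \ref{section:Epsilon_net}: for $y\in K^{(i)}$ choose $\hat y$ with $\|y-\hat y\|\le\delta_i$. Writing out $V(y)-V(\hat y)$ splits the perturbation into ${\tau \over m}\sum_i(\text{sign}(\inner{a_i,x})-\text{sign}(\inner{a_i,y}))(b_i(x,y)-b_i(x,\hat y))$ and ${\tau \over m}\sum_i(\text{sign}(\inner{a_i,\hat y})-\text{sign}(\inner{a_i,y}))b_i(x,\hat y)$. Each sign prefactor has sparse support by the local binary embedding (Theorem \ref{thm:Local binary embedding}, as applied in Corollary \ref{cor:Local binary embedding}): the first is supported on $\lesssim m\,r_i\log m$ indices and the second on $\lesssim m\,\delta_i\log m$ indices. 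Feeding in $\|b_i(x,\hat y)\|_{K_1^\circ}\le C(N,s,m)$ from Lemma \ref{lem: Unified uniform bound} and the displacement bound $\|b_i(x,y)-b_i(x,\hat y)\|_{K_1^\circ}\lesssim (\delta_i/r_{i+1})\,C(N,s,m)$---obtained from Proposition \ref{prop:difference_of_two_normalized_vectors} (the directions $(x\pm y)/\|x\pm y\|$ move by at most $2\delta_i/r_{i+1}$ since $\|x-y\|\ge r_{i+1}$) and Lemma \ref{lem:stable_perturbation_lemma2}---the recursion $r_{i+1}^2\asymp r_i\delta_i\log m\,C(N,s,m)$ forces both pieces to be $\lesssim r_{i+1}$.

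It then remains to bound $\|V(\hat y)\|_{K_1^\circ}$ at each net point. For a fixed test vector $u$ from a covering of $K_1$, the scalars $\xi_i:=(\text{sign}(\inner{a_i,x})-\text{sign}(\inner{a_i,\hat y}))\inner{b_i(x,\hat y),u}$ are independent, mean zero, bounded by $2\,C(N,s,m)$, and---crucially---satisfy $\mathbb{E}\,\xi_i^2\le 4\,d_g(x,\hat y)\,\|u\|_2^2\lesssim r_i$, since a sign mismatch occurs only with probability $d_g(x,\hat y)$ and $b_i$ is independent of the signs. Bernstein's inequality then makes the fluctuation governed by the small variance proxy $\asymp r_i/m$ rather than by the worst-case size $C(N,s,m)$; exploiting this cancellation is the crux. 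A union bound over $\mathcal{N}_{K^{(i)}}$ (Sudakov log-cardinality of Section \ref{section:Epsilon_net}) and over the covering of $K_1$ (contributing $w^2(K_1)\lesssim s\log(N/s)$) supplies the deviation budget. The resulting variance term $\sqrt{r_i\cdot(\text{log-cardinality})/m}$ is, by the recursion, a constant multiple of $r_{i+1}$; and should a Young-type split be preferred, it produces at worst ${2 \over m^{{1\over 40}({5\over 6})^{i}}}\|x-\hat y\|$, which via $\|x-\hat y\|\le r_i$ and the scaling ${r_i \over m^{{1\over 40}({5\over 6})^{i}}}\lesssim r_{i+1}$ is again absorbed into $r_{i+1}$. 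Together these yield $(III)\le r_{i+1}/50$, and collecting the failure probabilities of the local binary embedding, of the event $E_u$, and of the Bernstein union bound gives the stated probability.

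The step I expect to be the main obstacle is exactly this concentration/union-bound balance. The deviation must be driven down to the level $r_{i+1}$ even though both the per-term magnitude $C(N,s,m)$ and the net cardinality are polynomially large in $m$; this succeeds only because the sparsity of the sign-flip support (variance $\lesssim r_i$, via the local binary embedding) shrinks the variance, the Sudakov estimate limits the number of net points at scale $\delta_i$, and the precise recursive definitions of $r_i,\delta_i,r_{i+1}$ in Definition \ref{def:Sequences_for_nets} make every term land at scale $r_{i+1}$. Any looser bookkeeping would replace $r_{i+1}$ by $C(N,s,m)\,r_{i+1}$ and destroy the optimal rate. Finally, the dual-norm (supremum over $u\in K_1$) structure of $(III)$, absent from the scalar illustration in the outline, is what injects the extra Gaussian-width factor $w^2(K_1)$ into this balance.
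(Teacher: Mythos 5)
Your proposal is sound and would prove the lemma, but the core concentration step takes a genuinely different route from the paper's. Your Step~1 (net over $K^{(i)}$, splitting off $b_i(x,y)-b_i(x,\hat y)$ via Proposition~\ref{prop:difference_of_two_normalized_vectors} and Lemma~\ref{lem:stable_perturbation_lemma2}, and the sign-flip count via Corollary~\ref{cor:Local binary embedding}, all absorbed into $r_{i+1}$ by the recursion of Proposition~\ref{prop: Properties_of_sequences}) matches the paper exactly. For the main term at a net point, however, you extend the Bernstein-plus-truncation machinery of Lemmas~\ref{lem:Bound_for_(I)} and~\ref{lem:Bound_for_(II)} to the vector-valued case by introducing a \emph{second} net over the dual set $K_1$ (log-cardinality $\lesssim s\log(N/s)$) and applying scalar Bernstein to $\bigl(\text{sign}\inner{a_i,x}-\text{sign}\inner{a_i,\hat y}\bigr)\inner{b_i(x,\hat y),u}$ with variance proxy $\lesssim d_g(x,\hat y)\lesssim r_i$. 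The paper instead exploits that this term has an \emph{exact} conditional law: conditioned on the sign pattern, $h(x,\hat y)={1\over m}\sum_i\bigl(\text{sign}\inner{a_i,x}-\text{sign}\inner{a_i,\hat y}\bigr)b_i(x,\hat y)$ is Gaussian, $h\sim N(0,\lambda^2 I_{(x,\hat y)^\perp})$ with $\lambda^2\lesssim r_i\log m/m$ by Corollary~\ref{cor:Local binary embedding}, so $\|h\|_{K_1^\circ}$ is controlled in one stroke by $\mathbb{E}\|h\|_{K_1^\circ}\le\lambda\, w(K_1)$ plus Gaussian concentration of the $1$-Lipschitz dual norm (following Sections 8.3--9.1 of \cite{plan2016high}), with the choice $\epsilon=\delta_i^{1/2}/\sqrt{60}$ producing $2\tau r_{i+1}/600$ --- no net over $K_1$, no truncation, no bias term, and the Gaussian width entering additively through the mean rather than through a union-bound budget. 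What your route buys is uniformity of method across all three lemmas and robustness (it only needs $\inner{b_i,u}$ subgaussian, so it would survive non-Gaussian designs); what the paper's buys is shorter bookkeeping and sharper constants. Quantitatively you land in the same place: with total budget $\Lambda\lesssim\log|\mathcal{N}_{K^{(i)}}|+s\log(N/s)\lesssim\delta_i m$, your fluctuation $\sqrt{r_i\Lambda/m}\lesssim\sqrt{r_i\delta_i}\asymp r_{i+1}/\sqrt{600\,C_{10}\log m\cdot C(N,s,m)}\ll r_{i+1}$, and the bounded term $(C(N,s,m)/m)\Lambda\lesssim C(N,s,m)\,\delta_i\le r_{i+1}/(600\,C_{10}\log m)$ by the recursion. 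One step you gesture at but must carry out: $\inner{b_i(x,\hat y),u}$ is unbounded, so Bernstein requires the truncation $\ii_{E_i}$, which destroys exact mean-zero-ness; you need the Cauchy--Schwarz bias bound $\bigl|\mathbb{E}\bigl[\xi_i\,\ii_{E_i^c}\bigr]\bigr|\lesssim m^{-2}$, exactly as the paper does for $\mathbf{X}_2$ in Subsection~\ref{subsection:Proof of Lemma II} --- routine, but it is a genuine term in the final $r_{i+1}/50$ accounting.
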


Applying Lemma \ref{lem:Bound_for_(I)}, \ref{lem:Bound_for_(II)}, and \ref{lem:Bound_for_(III)} to terms (I), (II), (III) and setting the constant $c_9 := 2c_6 + c_8$ complete the proof of Proposition \ref{prop:Main_proposition}. 

\end{proof}

Proposition \ref{prop:Main_proposition} leads to our main global convergence theorems. 

\begin{thm}
	\label{thm: Contraction_with_additivie_error}
	If $x_k$, the $k$-th iterate of NBHIT satisfies $r_{i+1} \le \|x_k - x\| \le r_i$ for some $i$, then we have
	\[
	\|x_{k+1} - x\|_2 \le {12 \over m^{{1 \over 40} ({5 \over 6})^i}} \|x_k - x\|_2 + {6 \over 25}r_{i+1} 
	\] with probability at least
	\begin{align*}
	& 1 -  3 \bar{c}  \left({s \over N} \right)^{5 s} \cdot {1 \over m^5}  -  {c_9 \over m^4}.
	\end{align*}
\end{thm}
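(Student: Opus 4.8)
The plan is to derive this contraction inequality as an essentially immediate consequence of the metric-projection bound \eqref{ineq:Main_inequality} combined with the RAIC established in Proposition \ref{prop:Main_proposition}. The two ingredients fit together directly: \eqref{ineq:Main_inequality} converts the dual-norm quantity into the approximation error $\|x_{k+1}-x\|_2$ up to a factor of $4$, and Proposition \ref{prop:Main_proposition} bounds precisely that dual-norm quantity whenever the argument lies in the annulus $r_{i+1}\le\|x-y\|\le r_i$. So the entire argument amounts to verifying that $x_k$ is an admissible point for the RAIC and then substituting.

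First I would recall \eqref{ineq:Main_inequality}, which was obtained from Corollary \ref{cor: Metrical_projection bound} and Proposition \ref{prop:First_step_recurrence}, namely
\[
\|x_{k+1} -x \|_2 \le 4 \left\|  {\tau \over m}\cdot A^T(\text{sign}(Ax) - \text{sign}(Ax_k)) - (x-x_k) \right \|_{K_1^\circ}.
\]
Next I would observe that $x_k$ qualifies as a legitimate choice of $y$ in Proposition \ref{prop:Main_proposition}: since $x_k$ is produced by the hard-thresholding and normalization steps of Algorithm \ref{alg:NBIHT}, it is an $s$-sparse unit vector, and by the hypothesis of the theorem it satisfies $r_{i+1}\le\|x_k-x\|\le r_i$. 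Hence the RAIC applies with $y=x_k$, yielding
\[
\left \| {\tau \over m}\cdot A^T(\text{sign}(Ax) - \text{sign}(Ax_k)) -  (x- x_k) \right \|_{K_1^\circ} \le  {3 \over m^{{1 \over 40} ({5 \over 6})^{i}}} \|x - x_k\|_2 + {3 \over 50}r_{i+1}
\]
on the same high-probability event as in Proposition \ref{prop:Main_proposition}.

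Combining these two displays and multiplying the RAIC bound by the factor $4$ gives
\[
\|x_{k+1} - x\|_2 \le {12 \over m^{{1 \over 40} ({5 \over 6})^i}} \|x_k - x\|_2 + {12 \over 50}r_{i+1},
\]
and since $12/50 = 6/25$ this is exactly the claimed inequality. The probability bound is inherited verbatim from Proposition \ref{prop:Main_proposition}, since the only random event invoked is the one on which the RAIC holds; thus the stated success probability $1 - 3\bar{c}(s/N)^{5s}/m^5 - c_9/m^4$ carries over unchanged. There is no genuine obstacle at this stage: all the difficulty has been absorbed into establishing the RAIC (Proposition \ref{prop:Main_proposition} and its supporting Lemmas \ref{lem:Bound_for_(I)}--\ref{lem:Bound_for_(III)}), so the only points requiring care are the purely bookkeeping checks that $x_k$ is $s$-sparse and unit-norm and that the constants multiply correctly ($4\cdot 3 = 12$ and $4\cdot \tfrac{3}{50}=\tfrac{6}{25}$).
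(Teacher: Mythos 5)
Your proposal is correct and follows essentially the same route as the paper's own proof: apply Proposition \ref{prop:Main_proposition} with $y = x_k$ (valid since the NBIHT iterate is an $s$-sparse unit vector in the annulus $r_{i+1} \le \|x_k - x\| \le r_i$), then invoke \eqref{ineq:Main_inequality} and multiply the RAIC bound by $4$, with the probability inherited directly from the proposition. If anything, you are slightly more explicit than the paper in verifying that $x_k$ is an admissible point for the RAIC, which the paper's two-line proof leaves implicit.
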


\begin{proof} [Proof of Theorem \ref{thm: Contraction_with_additivie_error}] \mbox{}
	
	Applying Proposition \ref{prop:Main_proposition} to $x_k$ and using the relation $z_{k+1} = x_k + {\tau \over m} A^T(\text{sign}(Ax) - \text{sign}(Ax_k))$ yield
	\[
	\|z_{k+1} - x\|_{K_1^\circ}  \le   {3 \over m^{{1 \over 40} ({5 \over 6})^{i}}} \|x - y\|_2 + {3 \over 50}r_{i+1}.
	\]
	
	Then, from \eqref{ineq:Main_inequality}, 
	\begin{align*}
	\|x_{k+1} -x \|_2  & \le 4\left(  {3 \over m^{{1 \over 40} ({5 \over 6})^{i}}} \|x - y\|_2 + {3 \over 50}r_{i+1} \right)  \\
	& \le  {12 \over m^{{1 \over 40} ({5 \over 6})^i}} \|x_k - x\|_2 + {6 \over 25}r_{i+1}
	\end{align*} 
\end{proof}

\begin{cor}
	\label{cor:Main_gloabl_convergence}
	Suppose $m^{{1 \over 40} ({5 \over 6})^L} > 24$ for some integer $L > 0$. Let $i$ be an integer with $1 \le i \le L$ and $x_k$ satisfy $r_{i+1} \le \|x_k - x\| \le r_i$.
	Then, we have for any $t \ge 1$, either there exists $p$ with $1 \le p \le t$ such that $\|x_{k+p} -x\| \le r_{i+1}$ or 
	\[
	\|x_{k+t} - x\|_2 \le   {1 \over 2^t} \cdot {1 \over m^{{t \over 40} \left(({5 \over 6})^i- ({5 \over 6})^L \right) } } \|x_k - x\|_2 + {12 \over 25}r_{i+1} .
	\] 
\end{cor}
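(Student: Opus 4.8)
The plan is to read Theorem~\ref{thm: Contraction_with_additivie_error} as a one-step linear recurrence that is valid \emph{only} while the current iterate lies in the level-$i$ annulus $\{\,r_{i+1} \le \|\cdot - x\|_2 \le r_i\,\}$, and then to unroll it as a geometric recursion, all the while tracking that every intermediate iterate stays inside this annulus so that the recurrence keeps applying. The first step is to simplify the contraction coefficient. For any iterate $x_\ell$ in the annulus, Theorem~\ref{thm: Contraction_with_additivie_error} gives $\|x_{\ell+1}-x\|_2 \le \frac{12}{m^{\frac{1}{40}(5/6)^i}}\|x_\ell-x\|_2 + \frac{6}{25}r_{i+1}$. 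Writing $\frac{12}{m^{\frac{1}{40}(5/6)^i}} = 12\,m^{-\frac{1}{40}(5/6)^L}\cdot m^{-\frac{1}{40}\left((5/6)^i-(5/6)^L\right)}$ and invoking the hypothesis $m^{\frac{1}{40}(5/6)^L}>24$, the first factor is below $12/24=\tfrac12$, so with $\rho := \frac{12}{m^{\frac{1}{40}(5/6)^i}}$ I obtain
\[
\rho < \tfrac12\,m^{-\frac{1}{40}\left((5/6)^i-(5/6)^L\right)} \le \tfrac12,
\]
where the last inequality uses $i\le L$, making $(5/6)^i \ge (5/6)^L$ and the exponent nonpositive. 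This single estimate turns the Theorem's bound into a genuine contraction of ratio strictly below $\tfrac12$. Since Proposition~\ref{prop:Main_proposition} establishes the RAIC \emph{uniformly} over $K^{(i)}$, this one-step bound holds simultaneously for every iterate in the annulus on a single good event, so iterating introduces no extra probabilistic cost.

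Next I would prove the dichotomy by induction on the step count, carrying both the quantitative bound and annulus membership. Starting from $x_k$ with $r_{i+1} \le \|x_k-x\|_2 \le r_i$, suppose none of $x_{k+1},\dots,x_{k+j}$ has entered $B(x,r_{i+1})$; then each of them satisfies $r_{i+1} < \|x_{k+j'}-x\|_2 \le r_i$, so the recurrence applies at every step up to $j$. The upper endpoint is preserved because, for any iterate in the annulus,
\[
\|x_{\ell+1}-x\|_2 \le \tfrac12\|x_\ell-x\|_2 + \tfrac{6}{25}r_{i+1} \le \tfrac12 r_i + \tfrac{6}{25} r_i \le r_i,
\]
using $r_{i+1}\le r_i$ from Proposition~\ref{prop: Properties_of_sequences2} and $\tfrac12+\tfrac{6}{25}=\tfrac{37}{50}\le 1$. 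Hence either some $x_{k+p}$ with $1\le p\le t$ lands in $B(x,r_{i+1})$, which is the first alternative, or all of $x_{k+1},\dots,x_{k+t}$ remain in the annulus and the recurrence may be composed exactly $t$ times.

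In the second case I would unroll the recursion as a geometric series,
\[
\|x_{k+t}-x\|_2 \le \rho^{t}\|x_k-x\|_2 + \tfrac{6}{25}r_{i+1}\sum_{j=0}^{t-1}\rho^{j} = \rho^{t}\|x_k-x\|_2 + \tfrac{6}{25}r_{i+1}\cdot\frac{1-\rho^{t}}{1-\rho}.
\]
Because $\rho<\tfrac12$ forces $\frac{1}{1-\rho}<2$ and $1-\rho^{t}<1$, the additive term is at most $\tfrac{12}{25}r_{i+1}$, while the coefficient bound above gives $\rho^{t} < \frac{1}{2^{t}}\,m^{-\frac{t}{40}\left((5/6)^i-(5/6)^L\right)}$. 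Substituting both estimates yields precisely
\[
\|x_{k+t}-x\|_2 \le \frac{1}{2^{t}}\cdot\frac{1}{m^{\frac{t}{40}\left((5/6)^i-(5/6)^L\right)}}\|x_k-x\|_2 + \frac{12}{25}r_{i+1},
\]
which is the claimed bound.

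I expect the main obstacle to be the bookkeeping of annulus membership rather than any hard analysis: Theorem~\ref{thm: Contraction_with_additivie_error} supplies a contraction only while the iterate sits in the level-$i$ annulus, so the argument must rule out escape from above (handled by the invariance $\|x_{\ell+1}-x\|_2\le r_i$) and account for dropping below $r_{i+1}$ (the first branch of the dichotomy). Making the induction carry the quantitative estimate and the membership constraint together is the delicate point; once that is in place, everything reduces to the elementary geometric-series computation.
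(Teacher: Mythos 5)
Your proof is correct and follows essentially the same route as the paper: the same coefficient simplification from $m^{\frac{1}{40}(5/6)^L} > 24$, the same annulus-invariance check $\frac{1}{2}r_i + \frac{6}{25}r_{i+1} \le r_i$, and the same dichotomy, with your explicit geometric-series unrolling being just the induction the paper leaves to the reader (and your remark that the uniform RAIC over $K^{(i)}$ avoids extra probabilistic cost correctly fills in a detail the paper glosses over). No gaps.
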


\begin{proof}
	Since $m^{{1 \over 40} ({5 \over 6})^L} > 24$, $ {12 \over m^{{1 \over 40} ({5 \over 6})^i}} < {1 \over 2} \cdot {1 \over m^{{1 \over 40} \left(({5 \over 6})^i- ({5 \over 6})^L \right) } }$. Thus, from Theorem \ref{thm: Contraction_with_additivie_error}
	\[
	\|x_{k+1} - x\|_2 \le {1 \over 2} \cdot {1 \over m^{{1 \over 40} \left(({5 \over 6})^i- ({5 \over 6})^L \right) } } \|x_k - x\|_2 +  {6 \over 25}r_{i+1}.
	\] 
	Note that ${1 \over 2} \|x_k - x\|_2 +  {6 \over 25} r_{i+1} \le {1 \over 2} r_i + {1 \over 2} r_i \le r_i$, so $\|x_{k+1} -x\|_2 \le r_i$. 
	
	After noticing ${1 \over 2} \cdot {1 \over m^{{1 \over 40} \left(({5 \over 6})^i- ({5 \over 6})^L \right) } } \le {1 \over 2}$, one can show that the following by induction: For any $t \ge 1$, either there exists $p$ with $1 \le p \le t$ such that $\|x_{k+p} -x\| \le r_{i+1}$ or 
	\[
	\|x_{k+t} - x\|_2 \le   {1 \over 2^t} \cdot {1 \over m^{{t \over 40} \left(({5 \over 6})^i- ({5 \over 6})^L \right) } } \|x_k - x\|_2 +  {12 \over 25} r_{i+1}
	\] with high probability.
	
\end{proof}

From the relation between $r_i$ and $r_{i+1}$, it turns out that $\|x_{k+p} -x\| \le r_{i+1}$ for some $p$ with $1 \le p \le 25$ whenever $x_k \in K^{i}$. In other words, $25$ iterations are enough for the BIHT iterates to reach next level $K^{(i+1)}$. This is basically the idea of Theorem \ref{thm: Main Convergence Thm}, which we present in the following corollary. 

\begin{cor}
	\label{cor:Main_gloabl_convergence2}
	Suppose $m^{{1 \over 40} ({5 \over 6})^L} > 24$  for some positive integer $L$. 
	Then, for any integer $i$ with $0 \le i < L$, if $\|x_k - x\| \le r_i$, there exists $t \le 25$ such that 
	\[
	\|x_{k+j} - x\|_2 \le r_i \quad \text{for all $0 \le j \le t$}
	\]  and
	\[
	\|x_{k+t} - x\|_2 \le r_{i+1}.
	\]
\end{cor}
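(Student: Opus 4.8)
The plan is to reduce the statement entirely to Corollary~\ref{cor:Main_gloabl_convergence}, invoked with step count $t=25$, and then to check that the contraction it supplies already suffices to descend from scale $r_i$ to scale $r_{i+1}$ within those $25$ iterations. First I would dispose of the trivial case: if $\|x_k-x\|\le r_{i+1}$, then choosing $t=0$ makes both conclusions hold at once, since the index range $0\le j\le 0$ is just $\{0\}$ and $r_{i+1}\le r_i$. Hence I may assume $r_{i+1}<\|x_k-x\|\le r_i$, which is exactly the hypothesis needed to apply Corollary~\ref{cor:Main_gloabl_convergence} at this level $i$.

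Applying that corollary with $t=25$ produces a dichotomy. In the first branch there is a $p$ with $1\le p\le 25$ and $\|x_{k+p}-x\|\le r_{i+1}$, and I take $t$ to be the least such $p$. In the second branch no such $p$ exists and instead
\[
\|x_{k+25}-x\|_2 \le \frac{1}{2^{25}}\cdot\frac{1}{m^{\frac{25}{40}\left(\left(\frac56\right)^i-\left(\frac56\right)^L\right)}}\|x_k-x\|_2 + \frac{12}{25}r_{i+1}.
\]
Here I set $t=25$ and must show the right-hand side is at most $r_{i+1}$, i.e.\ that the first term is at most $\frac{13}{25}r_{i+1}$. Using $\|x_k-x\|\le r_i$, this reduces to the single scalar inequality
\[
\frac{1}{2^{25}}\cdot m^{-\frac{25}{40}\left(\left(\frac56\right)^i-\left(\frac56\right)^L\right)}\cdot\frac{r_i}{r_{i+1}} \le \frac{13}{25}.
\]

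From the closed form of the sequence in Definition~\ref{def:Sequences_for_nets} I would extract the ratio bound $\frac{r_i}{r_{i+1}}\le m^{\frac{1}{10}\left(\frac56\right)^i}$: the constant, the $\log m$, and the $C(N,s,m)$ contributions to the ratio all carry nonpositive exponents and so are bounded by $1$, while the $m$-exponent works out to $\frac{1}{12}\left(\frac56\right)^{i-1}=\frac{1}{10}\left(\frac56\right)^i$. Substituting, the exponent of $m$ on the left becomes $\frac{1}{40}\left(25\left(\frac56\right)^L-21\left(\frac56\right)^i\right)$, and since $i<L$ forces $\left(\frac56\right)^i\ge\frac65\left(\frac56\right)^L$, we get $21\left(\frac56\right)^i\ge 25.2\left(\frac56\right)^L>25\left(\frac56\right)^L$, so this exponent is strictly negative. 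Thus $m$ to that power lies below $1$ and the left side is at most $2^{-25}<\frac{13}{25}$, with room to spare; combined with the additive $\frac{12}{25}r_{i+1}$ this yields $\|x_{k+25}-x\|\le r_{i+1}$, so $t=25$ works in the second branch.

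Finally I would record the ``stays within $r_i$'' assertion, which is inherited from the single-step bound of Theorem~\ref{thm: Contraction_with_additivie_error}: whenever $r_{i+1}\le\|x_{k+j}-x\|\le r_i$ one has $\frac{12}{m^{\frac{1}{40}(5/6)^i}}<\frac12$ (because $m^{\frac{1}{40}(5/6)^L}>24$ and $i<L$), hence $\|x_{k+j+1}-x\|\le\frac12\|x_{k+j}-x\|+\frac{6}{25}r_{i+1}\le\left(\frac12+\frac{6}{25}\right)r_i\le r_i$. By induction every iterate up to the first index $t\le 25$ at which the error drops to $\le r_{i+1}$ stays at scale $\le r_i$, and this is precisely the $t$ selected above. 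The one genuinely delicate point, and the main obstacle, is the ratio estimate $\frac{r_i}{r_{i+1}}\le m^{\frac{1}{10}(5/6)^i}$ together with verifying the sign of the resulting $m$-exponent; once that is in hand, the remainder is bookkeeping around the dichotomy of Corollary~\ref{cor:Main_gloabl_convergence} and the elementary single-step contraction.
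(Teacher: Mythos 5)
Your proposal is correct and takes essentially the same route as the paper: the paper proves the corollary by contradiction, assuming the error stays above $r_{i+1}$ for more than $25$ steps and deriving $\|x_{k+25}-x\|\le \frac{1}{25}r_{i+1}+\frac{12}{25}r_{i+1}\le r_{i+1}$ from Corollary \ref{cor:Main_gloabl_convergence} via the logarithmic inequality $t+t\left(1-\left(\frac{5}{6}\right)^{L-i}\right)\frac{1}{40}\left(\frac{5}{6}\right)^i\log_2 m\ge \log_2 25+\frac{1}{10}\left(\frac{5}{6}\right)^i\log_2 m$, which is precisely the exponent arithmetic you perform directly. The only differences are cosmetic: you run the dichotomy forward rather than by contradiction, and you make explicit the ratio bound $r_i/r_{i+1}\le m^{\frac{1}{10}(5/6)^i}$ (via $\frac{1}{12}\left(\frac{5}{6}\right)^{i-1}=\frac{1}{10}\left(\frac{5}{6}\right)^i$, with the constant, $\log m$, and $C(N,s,m)$ factors carrying nonpositive exponents) that the paper invokes implicitly from Definition \ref{def:Sequences_for_nets}.
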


\begin{proof}
	Suppose the claim in the Corollary is not true. Then, there exists $t > 25$ such that 
	$\|x_{k+p} - x\| > r_{i+1}$ for all $p$ with $1 \le p \le t$. Also, $\|x_{k+p} - x\| \le r_i$ as in the proof in Corollary \ref{cor:Main_gloabl_convergence}. 
	On the other hand, it is easy to check that
	\[
	t + t \left(1 - \left({5 \over 6}\right)^{L-i} \right) \cdot  {1 \over 40} \left({5 \over 6}\right)^i \log_2 m \ge \log_2 {25} + {1 \over 10} \left( {5 \over 6}\right)^i \log_2 m \quad \text{for $t > 25$.}
	\] 
	Then, from Definition \ref{def:Sequences_for_nets} for $r_i, r_{i+1}$, this implies that $ {1 \over 2^t} \cdot {1 \over m^{{t \over 40} \left(({5 \over 6})^i- ({5 \over 6})^L \right) } }  r_i \le {1 \over 25}r_{i+1}$ by taking the logarithm to the base 2 on this equality. However, then we have $\|x_{k+25} - x\| \le {1 \over 25}r_{i+1} + {12 \over 25}r_{i+1} \le r_{i+1}$ by Corollary \ref{cor:Main_gloabl_convergence}, which contradicts to the assumption that $\|x_{k+25} - x\| > r_{i+1}$.
\end{proof}

We restate Theorem \ref{thm: Main Convergence Thm} for the convenience of readers. 
\begin{thm}
	\label{thm:Main_gloabl_convergence3}
	Suppose $m > \max \{c s^{10} \log^{10} (N/s), 24^{48} \}$ for a universal constant $c > 0$. Then, there exists a universal constant $C_{12} > 0$ such that the NBIHT iterates obey
	\[
	\|x_{k} - x\|_2 \le   C_{12}{ (s \log(N/s) )^{7/2}(\log m)^{12}  \over m^{ \left(1 -{1 \over 2}\left( {5 \over 6} \right)^{  \lfloor{k/25}\rfloor-1 }  \right) } }
	\]  for $k \ge 1$ with probability exceeding $1 - O \left( {1 \over m^3} \right)$.
\end{thm}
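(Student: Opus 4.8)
The plan is to treat this as the assembly step: the analytically hard content --- the RAIC of Proposition~\ref{prop:Main_proposition}, and hence the per-level contraction of Theorem~\ref{thm: Contraction_with_additivie_error} and Corollaries~\ref{cor:Main_gloabl_convergence}--\ref{cor:Main_gloabl_convergence2} --- is already in hand, so what remains is a deterministic counting argument carried out on a single high-probability event, followed by a conversion of the radii $r_i$ into the explicit displayed form. First I would fix $L$ as the largest integer with $m^{\frac{1}{40}(5/6)^L}>24$, which exists since $m>24^{48}$; maximality gives $m^{\frac{1}{40}(5/6)^{L+1}}\le 24$, whence $(5/6)^{L-1}=O(1/\log m)$ and $L=O(\log\log m)$. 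I then condition simultaneously on (i) the first-iteration bound \eqref{eq:bound_for_first_NBHIT_iteration}, which with $r_1=m^{-1/2}C(N,s,m)$ gives $\|x_1-x\|\le r_1$, and (ii) the RAIC of Proposition~\ref{prop:Main_proposition} holding at each level $i=1,\dots,L$. Because each RAIC is uniform over its entire annulus $K^{(i)}$, one instance controls every iterate that ever enters level $i$, so the union bound is over only the $L=O(\log\log m)$ levels (plus the first iteration); each contributes $O(1/m^4)$, and $O(\log\log m)/m^4=O(1/m^3)$, matching the claimed probability.

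On this event the argument is purely deterministic. Starting from $\|x_1-x\|\le r_1$, I would apply Corollary~\ref{cor:Main_gloabl_convergence2} repeatedly: from any iterate in $K^{(i)}$ the iterates remain in $K^{(i)}$ and reach $K^{(i+1)}$ within at most $25$ steps, for $0\le i<L$. Tracking indices, level $i$ is entered by iteration at most $1+25(i-1)=25i-24$, so with $i=\lfloor k/25\rfloor$ one has $25\lfloor k/25\rfloor-24\le k-24\le k$, i.e. $x_k$ already lies in level $\lfloor k/25\rfloor$ (capped at $L$). This yields the key deterministic bound $\|x_k-x\|\le r_{\min(\lfloor k/25\rfloor,\,L)}$ for $\lfloor k/25\rfloor\ge 1$; the residual range $k\le 24$ is even easier, since $\|x_k-x\|\le r_1$ sits below the $m^{-2/5}$ scale the formula permits there. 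For $\lfloor k/25\rfloor\ge L$ I would additionally invoke Theorem~\ref{thm: Contraction_with_additivie_error} at level $L$ --- where the factor $12/m^{\frac{1}{40}(5/6)^L}$ is still below $\tfrac12$ by the choice of $L$ --- to argue that once an iterate enters $B(x,r_L)$ it cannot escape, establishing the floor $\|x_k-x\|\le r_L$.

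It then remains to convert radii into the displayed expression. From Definition~\ref{def:Sequences_for_nets}, bounding $3(1-(5/6)^{i-1})\le 3$ and $7(1-(5/6)^{i-1})\le 7$ (absorbing $(600C_{10})^3$ into the final constant) and using $C(N,s,m)^{6-5(5/6)^{i-1}}\le C(N,s,m)^6\lesssim (s\log(N/s))^3(\log m)^3$ via $C(N,s,m)\lesssim\sqrt{s\log(N/s)}\,\sqrt{\log m}$ (the bound noted right after its definition), I get $r_i\lesssim m^{-(1-\frac12(5/6)^{i-1})}(s\log(N/s))^3(\log m)^{10}$, which is dominated by $(s\log(N/s))^{7/2}(\log m)^{12}\,m^{-(1-\frac12(5/6)^{i-1})}$ since $s\log(N/s),\log m\ge 1$. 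Taking $i=\lfloor k/25\rfloor$ reproduces the theorem's $m$-exponent exactly. For the floor regime one checks that $r_L\,m^{1-\frac12(5/6)^{\lfloor k/25\rfloor-1}}\lesssim m^{\frac12((5/6)^{L-1}-(5/6)^{\lfloor k/25\rfloor-1})}(s\log(N/s))^3(\log m)^{10}$, and since $(5/6)^{L-1}=O(1/\log m)$ the power of $m$ here is bounded by an absolute constant, so $r_L$ is at most the displayed quantity once $C_{12}$ is taken large; this pins down $C_{12}$.

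The main obstacle, granting the RAIC, is not any single inequality but the floor regime: Corollary~\ref{cor:Main_gloabl_convergence2} drives descent only for $i<L$, so the level-$L$ \emph{containment} (iterates at scale $r_L$ staying there for all later $k$) must be argued separately, and one must simultaneously verify that the theorem's formula --- whose $m$-exponent keeps rising toward $1$ as $k$ grows --- still upper-bounds the \emph{constant} floor $r_L$. It is exactly this reconciliation that forces the logarithmic slack $(\log m)^{12}$, the power $(s\log(N/s))^{7/2}$, and a sufficiently large $C_{12}$. The secondary care is the index bookkeeping that converts ``$25$ steps per level'' into the clean exponent $(5/6)^{\lfloor k/25\rfloor-1}$.
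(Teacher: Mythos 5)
Your architecture matches the paper's proof almost exactly (same $L$, same union bound over $O(\log\log m)$ levels, same $25$-steps-per-level descent via Corollary~\ref{cor:Main_gloabl_convergence2}, same conversion of $r_i$ through $C(N,s,m)^6\lesssim (s\log(N/s))^3(\log m)^3$), but there is a genuine gap at the single point you yourself flag as the main obstacle: the floor regime. You claim that invoking Theorem~\ref{thm: Contraction_with_additivie_error} at level $L$ shows that once an iterate enters $B(x,r_L)$ it cannot escape. That theorem's hypothesis is $r_{i+1}\le\|x_k-x\|\le r_i$, so at level $L$ it covers only the annulus $r_{L+1}\le\|x_k-x\|\le r_L$; an iterate landing strictly inside $B(x,r_{L+1})$ is covered by no available RAIC, and none can be supplied, since for $i>L$ one has $m^{\frac{1}{40}(5/6)^i}\le 24$, so the factor $12/m^{\frac{1}{40}(5/6)^i}$ is no longer below $\tfrac12$ and the contraction machinery breaks down at scales finer than $r_{L+1}$. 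In fact non-escape is not what holds: the paper's Regime-$B$ argument shows only the crude overshoot bound
\[
\|x_{k+1}-x\|_2 \le 8\,C_{10}\, r_L \log m \cdot C(N,s,m),
\]
obtained from \eqref{ineq:Main_inequality}, the triangle inequality, the uniform bound $\max_j\|a_j\|_{K_1^\circ}\le C(N,s,m)$ on $E_u$, and the second part of Corollary~\ref{cor:Local binary embedding} at scale $r_L$. Iterates genuinely can leave $B(x,r_L)$ --- indeed, since $r_{L-1}\asymp r_L$ up to constants while the overshoot factor $\log m\cdot C(N,s,m)\gg 1$, they can even jump above $r_{L-1}$ --- after which Corollary~\ref{cor:Main_gloabl_convergence2} re-engages and drives them back down. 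Consequently your key deterministic bound $\|x_k-x\|\le r_{\min(\lfloor k/25\rfloor,L)}$ is not established beyond the first entry into the ball; the provable bound is $\|x_k-x\|\le 8C_{10}\log m\cdot C(N,s,m)\cdot\max\{r_{\lfloor k/25\rfloor},r_L\}$.

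The good news is that the damage is entirely local and your conversion step already contains the repair. The slack you noticed --- $r_i$ natively gives $(s\log(N/s))^3(\log m)^{10}$ while the theorem states $(s\log(N/s))^{7/2}(\log m)^{12}$ --- is not generosity: it is consumed exactly by the overshoot factor $\log m\cdot C(N,s,m)\lesssim (s\log(N/s))^{1/2}(\log m)^{3/2}$, since $3+\tfrac12=\tfrac72$ and $10+\tfrac32\le 12$. So your closing intuition that the floor regime ``forces'' the exponents $7/2$ and $12$ is correct, but the mechanism is the escape-and-recapture bound above, not the reconciliation of a constant floor $r_L$ with the growing $m$-exponent (under your non-escape claim, the stated exponents would be loose). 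Replacing the unjustified non-escape assertion with the paper's Regime-$B$ overshoot estimate, and carrying the extra factor through your final computation, completes the proof; the probability accounting, the $L=O(\log\log m)$ estimate, and the treatment of $m^{\frac12(5/6)^{L-1}}=O(1)$ for $\lfloor k/25\rfloor> L$ are all correct as you wrote them.
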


\begin{proof} [Proof of Theorem \ref{thm:Main_gloabl_convergence3} ] \mbox{}
	Our proof is based on Corollary \ref{cor:Main_gloabl_convergence2} and a similar type of argument used to show the stability of the Truncated Wirtinger Flow \cite{chen2017solving}.
	
	Let $L$ be the largest positive integer satisfying $m^{{1 \over 40} ({5 \over 6})^L} > 24$. Note that such $L$ always exists from the assumption $m > 24^{48} = 24^{40\left({6 \over 5}\right)}$. We shall call $\{y: \|y -x \| > r_L \}$ Regime $A$ and $\{y: \|y -x \| \le r_L \}$ Regime $B$. 
	
	First, for $x_k$ belongs to Regime $A$, applying Corollary \ref{cor:Main_gloabl_convergence2} repeatedly shows that there exists $k \le 25i$ such that $\|x_k - x\| \le r_{i}$ as long as $i \le L$. 
	
	Next, we will prove that if $x_k$ is in Regime $B$ ($\|x_k - x\| \le r_{L}$), then $\|x_{k+1} - x\| \le 8 C_{10}  r_{L} \log m C(N,s,m)$. In other words, for  the approximation error of $x_{k+1}$ is still well-controlled for $x_k$ belonging to Regime $B$. To see this, we start from \eqref{ineq:Main_inequality}, 
	\begin{align*}
	\left\| x_{k+1}- x \right \|_2
	& \le 4 \left \| {\tau \over m}\cdot A^T(\text{sign}(Ax) - \text{sign}(Ax_k)) -  (x- x_k) \right \|_{K_1^\circ} \\
	& \le 4 \left \| {\tau \over m}\cdot A^T(\text{sign}(Ax) - \text{sign}(Ax_k))  \right \|_{K_1^\circ} +  4\|  x- x_k \|_{K_1^\circ} \\
	& \le 4 \left \| {\tau \over m} \sum_{i=1}^m(\text{sign} (\inner{a_i,x}) - \text{sign}(\inner{a_i, x_k}) a_i \right \|_{K_1^\circ} +  4\|  x- x_k \|_{K_1^\circ} \\
	& \le  4\cdot {\tau \over m} \sum_{i=1}^m |\text{sign} (\inner{a_i,x}) - \text{sign}(\inner{a_i, x_k})|  \left \| a_i \right \|_{K_1^\circ} +  4\|  x- x_k \|_{K_1^\circ} \\
	& \le  4 \max_{1 \le j \le m} {\left \| a_j \right \|_{K_1^\circ} } \cdot {\tau \over m} \sum_{i=1}^m |\text{sign} (\inner{a_i,x}) - \text{sign}(\inner{a_i, x_k})|  +  4\|  x- x_k \|_{K_1^\circ} \\
	& \le 4C_{10} r_{L} \log m \cdot C(N,s,m) + 4 \|x- x_k\|_2\\
	& \le 4 C_{10} r_{L} \log m \cdot C(N,s,m) + 4 r_{L} \\
	& \le 8  C_{10}  r_{L} \log m \cdot C(N,s,m)
	\end{align*} where the first five inequalities follow from the triangle inequality. The fifth inequality arises from Corollary \ref{cor:Local binary embedding}, the definition of the event $E_u$, and the definition of $C(N,s,m)$. 
	
	If $x_{k+1}$ lands on the region $A$ (i.e., $\|x_{k+1} -x \| > r_L$), then we can again apply Corollary \ref{cor:Main_gloabl_convergence2}. So the rest of iterations are guaranteed to satisfy
	\[
	\left\| x_{k+j}- x \right \|_2 \le 8 C_{10}  \log m \cdot C(N,s,m)  r_{L}
	\] for all $j \ge 1$. 
	Hence, above argument, Definition \ref{def:Sequences_for_nets}, and \eqref{eq:definition_of_C(N,s,m)} 
	yield
	\[
	\|x_k - x\|_2 \le   8 C_{10}   \log m \cdot C(N,s,m) \cdot \max \{ r_{ \lfloor{k/25}\rfloor },r_{L}  \} \le   C_{11}{ (s \log(N/s) )^{7/2}(\log m)^{12}  \over m^{ \left(1 -{1 \over 2}\left( {5 \over 6} \right)^{ \min\{ \lfloor{k/25}\rfloor, L \}-1 }  \right) } }
	\] for some universal constant $C_{11}$.
	
	Now, note that the above expression for the error bound can be rewritten as
	\begin{align}
	 C_{11}{ (s \log(N/s) )^{7/2}(\log m)^{12}  \over m^{ \left(1 -{1 \over 2}\left( {5 \over 6} \right)^{ \min\{ \lfloor{k/25}\rfloor, L \}-1 }  \right) } } \nonumber
	 & = C_{11}{ (s \log(N/s) )^{7/2}(\log m)^{12} m^{ {1 \over 2}\left( {5 \over 6} \right)^{ \min\{ \lfloor{k/25}\rfloor, L \}-1 }  } \over m } \nonumber \\
	 & = C_{11}{ (s \log(N/s) )^{7/2}(\log m)^{12} \max\{ m^{ {1 \over 2}\left( {5 \over 6} \right)^{  \lfloor{k/25}\rfloor -1 }  }, m^{ {1 \over 2}\left( {5 \over 6} \right)^{  L-1 }  } \}\over m } \nonumber \\
	 \label{eq:Proof_MainTheorem_error_bound}
	 & \le C_{11}{ (s \log(N/s) )^{7/2}(\log m)^{12} \cdot m^{ {1 \over 2}\left( {5 \over 6} \right)^{  \lfloor{k/25}\rfloor -1 }  } \cdot m^{ {1 \over 2}\left( {5 \over 6} \right)^{  L-1 }  } \over m }.
	\end{align}
	
	From the definition of $L$, we have 
	$m^{{1 \over 40} ({5 \over 6})^{L+1}} \le 24$ or $ {1 \over 40}  ({5 \over 6})^{L+1} \log m \le \log 24$. This implies that ${1 \over 2}\left({5 \over 6} \right)^{L-1} \lesssim {1 \over \log m}$, i.e., there exists a universal constant $\hat{c} > 0$ such that ${1 \over 2}\left({5 \over 6} \right)^{L-1} \le {\hat{c}  \over \log m}$.
	Since $m^{{1 \over 2}\left({5 \over 6} \right)^{L-1}} \le m^{{\hat{c}  \over \log m}} = \exp \left ({\hat{c}  \over \log m} \cdot \log m \right) = \exp(\hat{c} )$, which is another constant, so applying this fact to \eqref{eq:Proof_MainTheorem_error_bound} implies that
	\[
	C_{11}{ (s \log(N/s) )^{7/2}(\log m)^{12}  \over m^{ \left(1 -{1 \over 2}\left( {5 \over 6} \right)^{ \min\{ \lfloor{k/25}\rfloor, L \}-1 }  \right) } } \le C_{12}{ (s \log(N/s) )^{7/2}(\log m)^{12}  \over m^{ \left(1 -{1 \over 2}\left( {5 \over 6} \right)^{  \lfloor{k/25}\rfloor -1 }  \right) } } 
	\] 
	for some universal constant $C_{12} := C_{11} \cdot \exp(\hat{c}) > 0$. Hence, we establish that
	\[
	\|x_k - x\|_2 \le  C_{12}{ (s \log(N/s) )^{7/2}(\log m)^{12}  \over m^{ \left(1 -{1 \over 2}\left( {5 \over 6} \right)^{  \lfloor{k/25}\rfloor -1 }  \right) } }.
	\]
	
	As for the probability of success, we apply the union bound over all the levels $K_1, K_2, \dots, K_L$ where each holds with probability at least
	\begin{align*}
	& 1 -  \left(3 \bar{c}  \left({s \over N} \right)^{5 s} \cdot {1 \over m^5}  -  {c_9 \over m^4} \right),
	\end{align*} from Theorem \ref{thm: Contraction_with_additivie_error}. Thus, the success probability should be at least 
	\[
	 1 -  L \left(3 \bar{c}  \left({s \over N} \right)^{5 s} \cdot {1 \over m^5}  -  {c_9 \over m^4} \right) \ge 1 - O \left( {1 \over m^3} \right),
	\] where the last inequality is by observing $L = O(\log \log m)$, which can obtained from $m^{{1 \over 40} ({5 \over 6})^L} > 24$. This proves the theorem. 
	
\end{proof}

\section{Proofs for Section \ref{section:Main technical propositions}}
In this section, we prove Lemma \ref{lem:Bound_for_(I)}, \ref{lem:Bound_for_(II)}, and \ref{lem:Bound_for_(III)}. The ideas for their proofs are similar and we start with the proof of Lemma \ref{lem:Bound_for_(II)} because we believe it is simpler than the other two. 

\subsection{Proof of Lemma \ref{lem:Bound_for_(II)}}
\label{subsection:Proof of Lemma II}
For $y \in K^{(i)}$, choose $\hat{y}$ in $\mathcal{N}_{K^{(i)}}$ with $\|y - \hat{y} \| \le  \delta_{i}$. Note that since $y \in K^{(i)}$, $\|x - y\| \le r_i$.

\paragraph{Step 1. Approximate the term $(II)$ by $\epsilon$-net $\mathcal{N}_{K^{(i)}}$} 
Continuing from $(II)$, we obtain 
\begin{align*}
& \left | {1 \over m} \sum_{i=1}^m \tau(\text{sign} (\inner{a_i,x}) - \text{sign}(\inner{a_i, y}) \inner{a_i,{x +y \over \|x +y \|}  } \right| \\
&  \stackrel{(i)} \le \left | {1 \over m} \sum_{i=1}^m \tau(\text{sign} (\inner{a_i, x}) - \text{sign}(\inner{a_i, \hat{y}})) \inner{a_i,{x +y \over \|x +y \|}  } \right| \\
& \qquad  \medmath{ + \left| {1 \over m} \sum\limits_{i=1}^m \left [\tau ( \text{sign}(\inner{a_i,x} ) - \text{sign}(\inner{a_i,y}) \inner{a_i,{x +y \over \|x +y \|}  } \right ] - {1 \over m} \sum\limits_{i=1}^m \left [\tau ( \text{sign}(\inner{a_i, x } ) - \text{sign}(\inner{a_i,\hat{y}})) \inner{a_i,{x +y \over \|x +y \|}  } \right ] \right | } \\
& \stackrel{(ii)}  \le \left | {1 \over m} \sum_{i=1}^m \tau(\text{sign} (\inner{a_i, x}) - \text{sign}(\inner{a_i, \hat{y}})) \inner{a_i,{x +y \over \|x +y \|}  } \right| \\
&\qquad +  {1 \over m} \sum\limits_{i=1}^m  \tau | \text{sign}(\inner{a_i,y} ) - \text{sign}(\inner{a_i, \hat{y} } ) | \left |\inner{a_i,{x +y \over \|x +y \|}  } \right|  \\
&\stackrel{(iii)}   \le \left | {1 \over m} \sum_{i=1}^m \tau(\text{sign} (\inner{a_i, x}) -  \text{sign}(\inner{a_i, \hat{y}}) \inner{a_i,{x +y \over \|x +y \|}  } \right| \\
& \qquad  +  C(N,s,m)  \cdot {\tau \over m} \sum\limits_{i=1}^m  | \text{sign}(\inner{a_i,y} ) - \text{sign}(\inner{a_i, \hat{y} } ) | \\
& \stackrel{(iv)} \le \left | {1 \over m} \sum_{i=1}^m \tau(\text{sign} (\inner{a_i, x}) - \text{sign}(\inner{a_i, \hat{y}})) \inner{a_i,{x +y \over \|x +y \|}  } \right|   +  2C_{10} \tau \cdot \log m \cdot C(N,s,m) \delta_{i},
\end{align*}
as long as we have Lemma \ref{lem: Unified uniform bound} (uniform bound lemma) and Corollary \ref{cor:Local binary embedding} (local binary embedding). Here, in (i) and (ii) follow by the triangle inequality. To have (iii), we used the uniform bound for $ \left| \inner{a_i,{x +y \over \|x +y \|}  } \right| $ in Lemma \ref{lem: Unified uniform bound}. The inequality (iv) is due to Corollary \ref{cor:Local binary embedding}.

Now consider the first term in the right side of (iv):
\begin{align*}
&   \left| {1 \over m } \sum_{i=1}^m \tau(\text{sign} (\inner{a_i, x}) - \text{sign}(\inner{a_i, \hat{y}})) \inner{a_i,{x +y \over \|x +y \|}  } \right| \\ 
& \stackrel{(v)} \le   \left| {1 \over m} \sum_{i=1}^m \tau(\text{sign} (\inner{a_i, x}) - \text{sign}(\inner{a_i, \hat{y}})) \inner{a_i,{ x + \hat{y} \over \|x + y\|}  } \right| \\
& \qquad +     \left| {1 \over m} \sum_{i=1}^m \tau(\text{sign} (\inner{a_i, x}) - \text{sign}(\inner{a_i, \hat{y}})) \left(\inner{a_i,{x +y \over \|x +y \|}  } - \inner{a_i,{ x + \hat{y} \over \|x + y \|}  } \right) \right|\\ 
& \stackrel{(vi)} \le   \left| {1 \over m} \sum_{i=1}^m \tau(\text{sign} (\inner{a_i, x}) - \text{sign}(\inner{a_i, \hat{y}})) \inner{a_i,{ x + \hat{y} \over \|x +y \|}  } \right| \\
& \qquad +      {1 \over m} \sum_{i=1}^m \tau\left| (\text{sign} (\inner{a_i, x}) - \text{sign}(\inner{a_i, \hat{y}})) \right| \left| \left(\inner{a_i,{x +y \over \|x +y \|}  } - \inner{a_i,{ x + \hat{y} \over \|x + y \|}  } \right) \right|\\ 
& \stackrel{(vii)}\le   \left| {1 \over m} \sum_{i=1}^m \tau(\text{sign} (\inner{a_i, x}) - \text{sign}(\inner{a_i, \hat{y}})) \inner{a_i,{ x + \hat{y} \over \|x +y \|}  } \right| \\
& \qquad +   {2  C(N,s,m) \delta_i \over \|x +y \| } \cdot {1 \over m} \sum_{i=1}^m \tau\left| (\text{sign} (\inner{a_i, x}) - \text{sign}(\inner{a_i, \hat{y}})) \right|  \\ 
& \stackrel{(viii)} \le   \left| {1 \over m} \sum_{i=1}^m \tau(\text{sign} (\inner{a_i, x}) - \text{sign}(\inner{a_i, \hat{y}})) \inner{a_i,{ x+ \hat{y} \over \|x +y \|}  } \right|  +   {2 C_{10} \tau { r_i \log m \cdot C(N,s,m)   \delta_{i} \over \|x + y\|_2}} \\
& \stackrel{(ix)} \le   \left| {1 \over m} \sum_{i=1}^m \tau(\text{sign} (\inner{a_i, x}) - \text{sign}(\inner{a_i, \hat{y}})) \inner{a_i,{ x + \hat{y} \over \|x +y \|}  } \right|  +   {4 C_{10} \tau \cdot { r_i \log m \cdot C(N,s,m)   \delta_{i} }}  \\
&  =   \left| {1 \over m} \sum_{i=1}^m \tau(\text{sign} (\inner{a_i, x}) - \text{sign}(\inner{a_i, \hat{y}})) \inner{a_i,{ x + \hat{y} \over \|x + \hat{y} \|}  } \right| \cdot { \| x + \hat{y} \| \over \| x + y \|}  +   {4 C_{10} \tau \cdot { r_i \log m \cdot C(N,s,m)   \delta_{i} }}.
\end{align*}

In the chain of inequalities above, (v) and (vi) follow from the triangle inequality. We applied Lemma \ref{lem: Unified uniform bound} to have (vii). To get (viii), first note that $\|\hat{y} - x\| \le r_i$ since $\hat{y} \in K^{(i)}$ and apply Corollary \ref{cor:Local binary embedding}. (ix) is easily follows from $\|x + y\| \ge 1$ since $y \in K^{(i)}$ implying $\|x- y\| \le r_1 \ll 1$. 

Hence, we obtain
\begin{align*}
    & \left | {1 \over m} \sum_{i=1}^m \tau(\text{sign} (\inner{a_i,x}) - \text{sign}(\inner{a_i, y}) \inner{a_i,{x +y \over \|x +y \|}  } \right| \\
    & \le \left| {1 \over m} \sum_{i=1}^m \tau(\text{sign} (\inner{a_i, x}) - \text{sign}(\inner{a_i, \hat{y}})) \inner{a_i,{ x + \hat{y} \over \|x + \hat{y} \|}  } \right| \cdot { \| x + \hat{y} \| \over \| x + y \|} \\
    & \qquad + 2C_{10} \tau \cdot \log m \cdot C(N,s,m) \delta_{i} +   {4 C_{10} \tau \cdot { r_i \log m \cdot C(N,s,m)   \delta_{i} }} \\
    & \le \left| {1 \over m} \sum_{i=1}^m \tau(\text{sign} (\inner{a_i, x}) - \text{sign}(\inner{a_i, \hat{y}})) \inner{a_i,{ x + \hat{y} \over \|x + \hat{y} \|}  } \right| \cdot { \| x + \hat{y} \| \over \| x + y \|} + {2\tau r_{i+1} \over 600} + {4\tau r_{i+1} \over 600} \\
    & \le \left| {1 \over m} \sum_{i=1}^m \tau(\text{sign} (\inner{a_i, x}) - \text{sign}(\inner{a_i, \hat{y}})) \inner{a_i,{ x + \hat{y} \over \|x + \hat{y} \|}  } \right| \cdot { \| x + \hat{y} \| \over \| x + y \|} + {8 r_{i+1} \over 600},
\end{align*} where we used Proposition \ref{prop: Properties_of_sequences2} in the second inequality.

\paragraph{Step 2. Bounding the mean and variance of truncated terms} 
To keep notation light, let us define a random variable ${\bf{X}_2}(a_i,x,y) $ by
\begin{align}
\label{def:X2}
{\bf{X}_2}(a_i,x,y) =  (\text{sign} (\inner{a_i, x}) - \text{sign}(\inner{a_i, y}) \inner{a_i,{ x + y \over \|x + y \|}  }.
\end{align}

First note that ${\bf{X}_2}(a_i,x,y) $ is mean-zero which can be verified by direct expansion of the right hand side of \eqref{def:X2} and Proposition \ref{prop:one-bit correlation with Gaussian}. This implies
\begin{align*}
0 &= \mathbb{E} \left[{\bf{X}_2}(a_i, x, \hat{y})  \right] \\
 &= \mathbb{E} \left[{\bf{X}_2}(a_i, x, \hat{y}) \ii_{E_i} \right]  	+ \mathbb{E} \left[{\bf{X}_2}(a_i, x, \hat{y}) \ii_{E^c_i} \right]. 
\end{align*}

Hence, we obtain
\begin{align*}
 &\left|	\mathbb{E} \left[ {\bf{X}_2}(a_i, x, \hat{y}) \ii_{E_i} \right] \right| \\
 &\le  \mathbb{E} \left[\left|	 {\bf{X}_2}(a_i, x, \hat{y}) \right| \ii_{E^c_i} \right]   \\
 &\le  \mathbb{E} \left[\left|	 {\bf{X}_2}(a_i, x, \hat{y}) \right|^2  \right]^{1/2}  \cdot [\mathbb{E} \ii_{E^c_i}]^{1/2} \\
 &\le  \mathbb{E} \left[\left|	{\bf{X}_2}(a_i, x, \hat{y}) \right|^2  \right]^{1/2}  \cdot \mathbb{P}(E^c_i)^{1/2} \\
 &= \mathbb{E} \left[  (\text{sign} (\inner{a_i, x}) - \text{sign}(\inner{a_i, \hat{y}}))^2 \left|	\inner{a_i,{ x+ \hat{y} \over \|x + \hat{y} \|}  } \right|^2  \right]^{1/2}  \cdot \mathbb{P}(E^c_i)^{1/2} \\
  &\le \mathbb{E} \left[  (\text{sign} (\inner{a_i, x}) - \text{sign}(\inner{a_i, \hat{y}}))^4 \right]^{1/4} \mathbb{E} \left[	\inner{a_i,{ x + \hat{y} \over \|x + \hat{y} \|}  }^4 \right]^{1/4}  \cdot \mathbb{P}(E^c_i)^{1/2} \\
 &\le 2 \cdot 3^{1/4} \cdot {\sqrt{2} \over m^2}\\
 &\le 4 \cdot {\sqrt{2} \over m^2}.
\end{align*}
Here, the second and fourth inequalities are by the Cauchy-Schwartz inequality. The second last line is due to the facts that $ \left| \text{sign} (\inner{a_i, x}) -\text{sign}(\inner{a_i, \hat{y}}) \right| \le 2$, the fourth moment of the standard Gaussian random variable is $3$, and  $\mathbb{P}(E_i) \ge 1 - {2 \over m^5}$.

By the construction of the event $E_i$ and Lemma \ref{lem: Unified uniform bound}, ${\bf{X}_2}(a_i, x, \hat{y}) \ii_{E_i}$ is bounded by
\[
\left| {\bf{X}_2}(a, x, \hat{y}) \ii_{E_i} \right| \le 2 C(N,s,m)
\] and its second moment is bounded (so is its variance) by 
\begin{align*}
\label{eq:Second_Moment 1}
& \mathop{\mathbb{E}}_{a_i}  \left[{\bf{X}_2}(a, x, \hat{y})^2 \ii_{E_i} \right] \\
&= \mathop{\mathbb{E}}_{a_i}  \left[ (\text{sign} (\inner{a_i, x}) - \text{sign}(\inner{a_i, \hat{y}}))^2 \inner{a_i,{ x + \hat{y} \over \|x + \hat{y} \|}  }^2 \ii_{E_i} \right] \\
& \quad  \le C(N,s,m)^2  \mathop{\mathbb{E}}_{a_i}  \left[ (\text{sign} (\inner{a_i, x}) - \text{sign}(\inner{a_i, \hat{y}}))^2 \ii_{E_i}   \right] \\
& \quad  \le C(N,s,m)^2  \mathop{\mathbb{E}}_{a_i}  \left[ (\text{sign} (\inner{a_i, x}) - \text{sign}(\inner{a_i, \hat{y}}))^2  \right] \\
& \quad  \le 4 C(N,s,m)^2   \cdot d_g(x, \hat{y}) \\
& \quad \le 4 \pi C(N,s,m)^2 \|x - \hat{y}\|_2,
\end{align*} where $d_g(x, \hat{y})$ is the normalized geodesic distance between $x$ and $\hat{y}$. Here, we applied Lemma \ref{lem: Unified uniform bound} to get the first inequality and used $\mathop{\mathbb{E}}_{a_i}  \left[ (\text{sign} (\inner{a_i, x}) - \text{sign}(\inner{a_i, \hat{y}}))^2  \right] = 4 \mathop{\mathbb{P}}_{a_i}(\text{sign} (\inner{a_i, x}) \neq \text{sign}(\inner{a_i, \hat{y}})) = 4 d_g(x, \hat{y})$, which is due to the rotation invariance of the standard Gaussian vector. 

\paragraph{Step 3. Bounding the sum of truncated terms by the Bernstein's inequality}
Next, we apply the Bernstein's inequality for mean-zero bounded random variables to have
\[
\mathbb{P} \left( \left| {1 \over m} \sum_{i=1}^m \tau \left({\bf{X}_2}(a_i, x, \hat{y}) \ii_{E_i} - 	\mathbb{E} \left[ {\bf{X}_2}(a, x, \hat{y}) \ii_{E_i} \right] \right)  \right| \ge t \right) \le 2 \exp \left( - {m^2t^2/2 \over \sigma^2 + Kmt/3} \right)
\] where $\sigma^2$ is the sum of the variances and $K$ is the bound of the random variables $\tau{\bf{X}_2}(a_i, x, \hat{y}) \ii_{E_i}$. 
Because $\sigma^2 \le \tau^2 \cdot m \mathop{\mathbb{E}}_{a_i}  \left[{\bf{X}_2}(a, x, \hat{y})^2 \ii_{E_i} \right]$ and $K \le 2\tau C(N,s,m)$ from Step 2, we obtain
\begin{align*}
&\mathbb{P} \left( \left|{1 \over m} \sum_{i=1}^m \tau \left({\bf{X}_2}(a_i, x, \hat{y}) \ii_{E_i} - 	\mathbb{E} \left[ {\bf{X}_2}(a, x, \hat{y}) \ii_{E_i} \right] \right)  \right| \ge t \right) \\
&\quad \le 2 \exp \left( - {mt^2/2 \over \tau^2 \mathop{\mathbb{E}}_{a_i}  \left[{\bf{X}_2}(a, x, \hat{y})^2 \ii_{E_i} \right] +  2\tau C(N,s,m) t/3} \right) \\
& \quad \le  2 \exp \left( - {m t^2/2 \over 4 \pi \tau^2 C(N,s,m)^2 \cdot  \|x - \hat{y}\|_2  + 2\tau C(N,s,m) t/3} \right)\\
& \quad \le  2 \exp \left( - {m t^2/2 \over 4 \pi \tau^2 C(N,s,m)^2 \cdot  \|x - \hat{y}\|_2  + 4 \pi \tau^2 C(N,s,m)^2 t/3} \right)
\end{align*} where we used the upper estimates of the first and second moments of ${\bf{X}_2}(a, x, \hat{y})\ii_{E_i} $ in Step 2.

Choose $t= {1 \over m^{1/2 - \xi_i}} \|x - \hat{y}\|_2$ in which $\xi_i \in [0, 1/2]$ will be determined later. Then, we have
\begin{align*}
&  {m t^2/2  \over 4 \pi \tau^2 C(N,s,m)^2 \cdot  \|x- \hat{y}\|_2 + 4 \pi  \tau^2  C(N,s,m)^2 t/3} \\
& \quad  = { m^{2 \xi_i} \|x- \hat{y}\|_2^{2} \over 8 \pi  \tau^2  C(N,s,m)^2  \cdot [  \|x - \hat{y}\|_2  +  {1 \over m^{1/2 - \xi_i}} \|x - \hat{y}\|_2/3 ] } \\
& \quad  \ge { m^{2 \xi_i} \|x - \hat{y}\|_2^{2} \over 16 \pi  \tau^2  C(N,s,m)^2  \cdot [ \max \{ \|x - \hat{y}\|_2 ,  {1 \over m^{1/2 - \xi_i}} \|x - \hat{y}\|_2/3  \} ] } \\
& \quad  \ge {1 \over 16 \pi  \tau^2 C(N,s,m)^2  } \cdot   \min \{m^{2\xi_i } \|x - \hat{y}\|_2 ,  3  m^{1/2 + \xi_i} \|x - \hat{y}\|_2\}.
\end{align*}

Using the fact that $r_{i+1}  \le  \|x - \hat{y}\| \le r_i$ since $\hat{y} \in \mathcal{N}_{K^{(i)}} \subset K^{(i)}$, above further reduces to
\begin{align*}
& {1 \over 16 \pi  \tau^2  C(N,s,m)^2  } \cdot   \min \{m^{2\xi_i } \|x - \hat{y}\|_2 ,  3  m^{1/2 + \xi_i} \|x - \hat{y}\|_2 \}  \\
&\quad  \ge {1 \over 16 \pi  \tau^2  C(N,s,m)^2  } \|x - \hat{y}\|_2 \cdot   \min \{m^{2\xi_i }  ,  3  m^{1/2 + \xi_i} \} \\
&\quad  \ge {1 \over 16 \pi  \tau^2 C(N,s,m)^2  } r_{i+1}  \cdot   \min \{m^{2\xi_i } ,  3  m^{1/2 + \xi_i} \} \\
&\quad  \ge {1 \over 80 C(N,s,m)^2  } r_{i+1} m^{2\xi_i },
\end{align*}
where the last inequality is from the fact that $0 \le \xi_i \le 1/2$.

Thus, with a probability at least $1 - 2 \exp \left(- {r_{i+1} m^{2\xi_i }   \over 80 C(N,s,m)^2  }  \right)$, we have

\begin{align}
\label{bound:Deviation bound 1 for (I)}
 \left| {1 \over m} \sum_{i=1}^m \tau \left({\bf{X}_2}(a_i, x, \hat{y}) \ii_{E_i} - 	\mathbb{E} \left[ {\bf{X}_2}(a, x, \hat{y}) \ii_{E_i} \right] \right)  \right| \le {1 \over m^{1/2 - \xi_i }} \|x - \hat{y}\|_2.
\end{align}

By the union bound over all  $ \hat{y}$ in $\mathcal{N}_{K^{(i)}}$, 
above bound holds uniformly for all $\hat{y}  \in \mathcal{N}_{K^{(i)}}$ with probability at least 
\begin{align*}
&1 - 2 | \mathcal{N}_{K^{(i)}} |  \exp \left(- {r_{i+1} m^{2\xi_i }   \over 80 C(N,s,m)^2  }  \right ) \\ 
& \ge 1 -2 \exp \left(C^{''}  (600C_{10} )^{2 \left(1 -\left({5 \over 6} \right)^{i} \right)}m^{{1 \over 3}({5 \over 6})^{i-1}} (\log m)^{2 \left( 1- {7 \over 3}\left({5 \over 6} \right)^{i-1} \right)} C(N,s,m)^{4  - {10 \over 3}\left({5 \over 6} \right)^{i-1}}  \right) \\
& \qquad \qquad  \times  \exp \left(- (600C_{10})^{3 \left(1 -\left({5 \over 6} \right)^{i} \right)} {m^{2\xi_i - \left(1 -{1 \over 2}\left({5 \over 6} \right)^{i} \right)}  (\log m)^{7 \left(1 -  \left({5 \over 6} \right)^{i} \right)} C(N,s,m)^{6 - 5\left({5 \over 6} \right)^i }  \over 80 C(N,s,m)^2  }  \right)
\end{align*} where the bound for $| \mathcal{N}_{K^{(i)}} |$ is from Subsection \ref{section:Epsilon_net} and $r_{i+1}$ is from Definition \ref{def:Sequences_for_nets}.

Note that $\xi_i$ should satisfy $2\xi_i - \left(1 - {1 \over 2}({5 \over 6})^i \right) \ge {3 \over 8}({5 \over 6})^{i-1}$ for this to hold with a high probability. Now choose $\xi := {1 \over 2} \left(1 - {1 \over 20} ({5 \over 6})^i \right)$ which guarantees that
\begin{align*}
&{ C^{''} (600C_{10})^{2 \left(1 -\left({5 \over 6} \right)^{i-1} \right)}m^{{1 \over 3}({5 \over 6})^{i-1}} (\log m)^{2 \left( 1- {7 \over 3}\left({5 \over 6} \right)^{i-1} \right)} C(N,s,m)^{4  - {10 \over 3}\left({5 \over 6} \right)^{i-1}}} \\
&\qquad -  (600C_{10})^{3 \left(1 -\left({5 \over 6} \right)^{i} \right)}m^{{3 \over 8}({5 \over 6})^{i-1} }  (\log m)^{7 \left(1 -  \left({5 \over 6} \right)^{i} \right)} C(N,s,m)^{4 - 5 \left({5 \over 6} \right)^i } \\
& \le  C^{(3)} m^{{1 \over 3}({5 \over 6})^{i-1}} \left[  (\log m)^{2 \left( 1- {7 \over 3}\left({5 \over 6} \right)^{i} \right)} C(N,s,m)^{4  - 4 \left({5 \over 6} \right)^{i}} -    m^{{1 \over 24}({5 \over 6})^{i-1}} (\log m)^{7 \left(1 -  \left({5 \over 6} \right)^{i} \right)} C(N,s,m)^{4 - 5 \left({5 \over 6} \right)^i } \right] \\
& = C^{(3)} m^{{1 \over 3}({5 \over 6})^{i-1}}C(N,s,m)^{4  - 4 \left({5 \over 6} \right)^{i}} \left[  (\log m)^{2 \left( 1- {7 \over 3}\left({5 \over 6} \right)^{i} \right)}  -  (\log m)^{7 \left(1 -  \left({5 \over 6} \right)^{i} \right)}  m^{{1 \over 20}({5 \over 6})^{i}}  C(N,s,m)^{- \left({5 \over 6} \right)^i } \right] \\
& \le C^{(4)} m^{{1 \over 3}({5 \over 6})^{i-1}}C(N,s,m)^{4  - 4 \left({5 \over 6} \right)^{i}} \left[  (\log m)^{2 \left( 1- {7 \over 3}\left({5 \over 6} \right)^{i} \right)}  -  (\log m)^{7 \left(1 -  \left({5 \over 6} \right)^{i} \right)}   \right] \\
& \le -  {C^{(4)} \over 2}m^{{1 \over 3}({5 \over 6})^{i-1} }  (\log m)^{7 \over 6} \\
& \le -  {C^{(4)} \over 2}  (\log m)^{7 \over 6}
\end{align*} for some small fixed constants $C^{(3)}, C^{(4)} > 0$.
Here the second inequality follows from $m^{1 \over 20} \ge  C(N,s,m)$ if $m > c s^{10} \log^{10} (N/s)$ for a sufficiently large constant $c > 0$. 

Hence, we have 
\begin{align*}
&1 - 2 |  \mathcal{N}_{K^{(i)}} |  \exp \left(- {r_{i+1} m^{2\xi_i }   \over 80 C(N,s,m)^2 }  \right ) \ge 1 -2 \exp \left( - {C^{(4)}  \over 2}  (\log m)^{7 \over 6}  \right) \ge 1 - {c_4 \over m^4}
\end{align*} for some absolute constant $c_4 > 0$ and for all $m > c s^{10} \log^{10} (N/s)$  where $c$ is a sufficiently large constant.

\paragraph{Step 4. Establishing the bound for $(II)$}

Continuing from Step 1, recall that $(II)$ is now bounded by
\begin{align*}
& \left | {1 \over m} \sum_{i=1}^m \tau(\text{sign} (\inner{a_i,x}) - \text{sign}(\inner{a_i, y}) \inner{a_i,{x +y \over \|x +y \|}  } \right| \\
& \quad \le \left| {1 \over m} \sum_{i=1}^m \tau(\text{sign} (\inner{a_i, x}) - \text{sign}(\inner{a_i, \hat{y}})) \inner{a_i,{ x + \hat{y} \over \| x  + \hat{y} \|}  } \right| \cdot { \| x + \hat{y} \| \over \| x + y \|}  +   {8 r_{i+1} \over 600}\\
& \quad \le 2 \left| {1 \over m} \sum_{i=1}^m \tau{\bf{X}_2}(a_i, x, \hat{y}) \right|  +    {8 r_{i+1} \over 600}.
\end{align*}

Also by the definition of the event $E_u$, whenever the event $E_u$ occurs, we have
\[
{1 \over m} \sum_{i=1}^m \tau{\bf{X}_2}(a_i, x, \hat{y}) = {1 \over m} \sum_{i=1}^m \tau{\bf{X}_2}(a_i, x, \hat{y}) \ii_{E_i}.
\]

Hence, by combining all the results together, for all $y$ with $r_{i+1} \le \|x-y\| \le r_i$, we have
\begin{align*}
& \left | {1 \over m} \sum_{i=1}^m \tau(\text{sign} (\inner{a_i,x}) - \text{sign}(\inner{a_i, y}) \inner{a_i,{x +y \over \|x +y \|}  } \right| \\
&  \quad \stackrel{(a)}  \le 2 \left| {1 \over m} \sum_{i=1}^m \tau{\bf{X}_2}(a_i, x, \hat{y})  \ii_{E_i} \right|  +    {8 r_{i+1} \over 600}\\
& \quad \stackrel{(b)}  \le 2 \tau \left| {1 \over m} \sum_{i=1}^m {\bf{X}_2}(a_i, x, \hat{y}) \ii_{E_u} -  \mathbb{E} \left[ {\bf{X}_2}(a, x, \hat{y}) \ii_{E_u} \right]  \right| +  2\tau  \left| \mathbb{E} \left[ {\bf{X}_2}(a, x, \hat{y}) \ii_{E_u} \right] \right| +   {8 r_{i+1} \over 600}\\
& \quad \stackrel{(c)}  \le   {2 \over m^{{1 \over 40} ({5 \over 6})^i }} \|x - \hat{y}\|_2 + {8 \sqrt{2} \tau  \over m^2} +  {8 r_{i+1} \over 600}\\
& \quad \stackrel{(d)}  \le  {2 \over m^{{1 \over 40} ({5 \over 6})^i }}(\|x - y\|_2 +  \delta_i) +  {8\sqrt{2} \tau  \over m^2}  +    {8 r_{i+1} \over 600}\\
& \quad \stackrel{(e)} \le    {2 \over m^{{1 \over 40} ({5 \over 6})^i }}\|x - y\|_2 + 2\delta_i  + {8 \sqrt{2} \tau  \over m^2}  +   {8 r_{i+1} \over 600}\\
& \quad \stackrel{(f)}   \le    {2 \over m^{{1 \over 40} ({5 \over 6})^i }}\|x - y\|_2 + {2 \over 600}r_{i+1}  + {1 \over 600}r_{i+1}  +   {8 r_{i+1} \over 600}\\
& \quad \le    {2 \over m^{{1 \over 40} ({5 \over 6})^i }}\|x - y\|_2 +  {12 r_{i+1} \over 600}
\end{align*} for all $m > c s^{10} \log^{10} (N/s)$.
Here, the first inequality follows from the assumption that $E_u$ occurs. We applied the triangle inequality to have (b), and (d). The inequality (c) results from \eqref{bound:Deviation bound 1 for (I)}, $\xi = {1 \over 2} \left(1 - {1 \over 20} ({5 \over 6})^i \right)$, and the bound for $ \left| \mathbb{E} \left[ {\bf{X}_2}(a, x, \hat{y}) \ii_{E_u} \right]\right|$ in Step 2. The inequality (e) follows from $m^{{1 \over 40} ({5 \over 6})^i} > m^{{1 \over 40} ({5 \over 6})^L} > 24 $. We applied Proposition \ref{prop: Properties_of_sequences2} to get (f).

Combining the results we have so far, $(II)$ is bounded by
\begin{align*}
(II) \le {2 \over m^{{1 \over 40} ({5 \over 6})^i }}\|x - y\|_2  + {r_{i+1} \over 50}.
\end{align*}
As for the success probability for this bound to hold, we need Lemma \ref{lem: Unified uniform bound}, Corollary \ref{cor:Local binary embedding}, the event $E_u$ occurred, the concentration bound in Step 3, and the error bound for the first iteration of NBIHT \eqref{eq:bound_for_first_NBHIT_iteration}. Hence, applying the union bound gives us
\begin{align*}
& 1 - {2 \over m^4} - {c_2 \over m^5} - {2 \over m^4} - {c_4 \over m^4} - \bar{c} \left({s \over N} \right)^{5 s} \cdot {1 \over m^5}  \ge 1 -   \bar{c}  \left({s \over N} \right)^{5 s} \cdot {1 \over m^5} - {c_6 \over m^4}.
\end{align*}

\subsection{Proof of Lemma \ref{lem:Bound_for_(I)}}
\label{subsection:Proof of Lemma I}

This subsection is devoted to derive the bound for the term $(I)$. The idea is quite similar to the proof of Lemma \ref{lem:Bound_for_(II)}, so we omitted some parts of the proof to avoid repetitions. 

As before let $y \in K^{(i)}$, so $r_{i+1} \le \|x -y\| \le r_i$. We proceed with the similar arguments used for the bound of $(II)$ in subsection \ref{subsection:Proof of Lemma II}.

\paragraph{Step 1. Approximate $(I)$ by $\epsilon$-net $\mathcal{N}_{K^{(i)}}$} 
We begin with approximating $(I)$ with the $\epsilon$-net $\mathcal{N}_{K^{(i)}}$. The following inequality holds
\begin{align}
\nonumber
&\left | {1 \over m} \sum_{i=1}^m \tau (\text{sign} (\inner{a_i,x}) - \text{sign}(\inner{a_i, y}) \inner{a_i, {x -y \over \|x -y \| }}  -  \|x- y\| \right| \\
\label{term:term1_in_Bound_I}
&\le \left | {1 \over m} \sum_{i=1}^m \tau (\text{sign} (\inner{a_i,x }) - \text{sign}(\inner{a_i, \hat{y}}) \inner{a_i, {x -\hat{y} \over \|x -\hat{y} \| }}  -  \|x -\hat{y}\| \right| \\
\label{term:term2_in_Bound_I}
& + \medmath{ \left | {1 \over m} \sum_{i=1}^m \tau (\text{sign} (\inner{a_i,x }) - \text{sign}(\inner{a_i, \hat{y}}) \inner{a_i, {x -\hat{y} \over \|x -\hat{y} \| }} - {1 \over m} \sum_{i=1}^m \tau (\text{sign} (\inner{a_i,x}) - \text{sign}(\inner{a_i, y}) \inner{a_i, {x -y \over \|x -y \| }}  \right| } \\
\nonumber
&  + \left| \|x- y\|  -  \|x -\hat{y}\| \right|
\end{align}
provided we have Lemma \ref{lem: Unified uniform bound} and Corollary \ref{cor:Local binary embedding}.

The third term in the right hand side is bounded by the triangle inequality:
\[
\left| \|x- y\|  -  \|x -\hat{y}\| \right| \le \|x- y - x + \hat{y}\| = \| y  - \hat{y}\| \le  \delta_i. 
\]

\paragraph{Step 2. Bounding the mean and variance of truncated terms}
Define a random variable ${\bf{X}_1}(a_i,x,y) $ as
\[
{\bf{X}_1}(a_i,x,y) =  (\text{sign} (\inner{a_i, x}) - \text{sign}(\inner{a_i, y}) \inner{a_i,{ x - y \over \|x - y \|}  }.
\]

As in the previous subsection, we have
\[
\tau \mathbb{E} \left[ \text{sign} (\inner{a_i, x}) - \text{sign}(\inner{a_i, y}) \inner{a_i,{ x - y }} \right] = 2 - 2 \inner{x,y} = \|x - y\|_2^2
\] by directly expanding terms in the expectation and applying Proposition \ref{prop:one-bit correlation with Gaussian}. 

Hence, we obtain
\begin{align*}
\|x- \hat{y}\|_2 &= \mathbb{E} \left[\tau {\bf{X}_1}(a_i, x, \hat{y})  \right] \\
&= \mathbb{E} \left[\tau {\bf{X}_2}(a_i, x, \hat{y}) \ii_{E_i} \right]  	+ \mathbb{E} \left[\tau{\bf{X}_2}(a_i, x, \hat{y}) \ii_{E^c_i} \right]. 
\end{align*}
So, the same arguments in the previous section give us the following bound.
\begin{align*}
&\left|	\mathbb{E} \left[\tau {\bf{X}_1}(a_i, x, \hat{y}) \ii_{E_i} \right] - \|x- \hat{y}\|_2 \right| \\
&\le  \mathbb{E} \left[\left|\tau	 {\bf{X}_1}(a_i, x, \hat{y}) \right| \ii_{E^c_i} \right]   \\
&\le  {4c_5 \over m^2}. 
\end{align*}

By the construction of the event $E_i$, ${\bf{X}_1}(a_i, x, \hat{y}) \ii_{E_i}$ is bounded by
\[
\left| {\bf{X}_1}(a, x, \hat{y}) \ii_{E_i} \right| \le 2C(N,s,m)
\] and the bound for the second moment for ${\bf{X}_1}(a_i, x, \hat{y}) \ii_{E_i}$ is given by
\begin{align*}
\label{eq:Second_Moment 2}
& \mathop{\mathbb{E}}_{a_i}  \left[{\bf{X}_1}(a, x, \hat{y})^2 \ii_{E_i} \right]  \le 4 \pi C(N,s,m)^2 \|x - \hat{y}\|_2
\end{align*} by the same argument used for ${\bf{X}_2}(a_i, x, \hat{y}) \ii_{E_i}$ in the previous subsection.

\paragraph{Step 3. Bounding the sum of truncated terms by the Bernstein's inequality}

Step 2 shows that the magnitude and variance of ${\bf{X}_1}(a_i, x, \hat{y}) \ii_{E_i}$ can be bounded exactly by those of ${\bf{X}_2}(a_i, x, \hat{y}) \ii_{E_i}$ in Step 2 in Subsection \ref{subsection:Proof of Lemma II}. Hence, applying the bounded Bernstein inequality and using the $\epsilon$-net covering argument for $K^{(i)}$ as in Subsection \ref{subsection:Proof of Lemma II} yield the following statement:

For all $\hat{y} \in \mathcal{N}_{K^{(i)}}$, we obtain

\[
\left| {1 \over m} \sum_{i=1}^m \tau {\bf{X}_1}(a_i, x, \hat{y}) \ii_{E_i} - 	\mathbb{E} \left[ \tau {\bf{X}_1}(a, x, \hat{y}) \ii_{E_i} \right]   \right| \le {1 \over m^{1/2 - \xi_i }} \|x - \hat{y}\|_2
\] with probability $1 - {c_4 \over m^4}$.

Then, we apply the triangle inequality to above to get
\begin{align*}
\left| {1 \over m} \sum_{i=1}^m \tau {\bf{X}_1}(a_i, x, \hat{y}) \ii_{E_i} - 	\|x - \hat{y}\|_2 	 \right| 
& \le {1 \over m^{1/2 - \xi_i }} \|x - \hat{y}\|_2 + {4 c_5 \over m^2}	\\
& \le {1 \over m^{1/2 - \xi_i }} \|x -y\|_2  +\|y- \hat{y}\| + {r_{i+1} \over 600}	\\
& \le {1 \over m^{1/2 - \xi_i }} \|x -y\|_2  + \delta_i + {r_{i+1} \over 600}	\\
& \le {1 \over m^{{1 \over 40}({5 \over 6})^i}} \|x -y\|_2 + {2r_{i+1} \over 600}	\\
\end{align*}

for all $m > c s^{10} \log^{10} (N/s)$ with a probability at least $1 - {c_4 \over m^4}$.

\paragraph{Step 4. Establishing the bound for $(I)$}

As we derived the bound for for $\tau{\bf{X}_2}(a_i, x, \hat{y})$ in Section \ref{subsection:Proof of Lemma II}, whenever the event $E_u$ occurs, we have
\[
{1 \over m} \sum_{i=1}^m \tau{\bf{X}_1}(a_i, x, \hat{y}) = {1 \over m} \sum_{i=1}^m \tau{\bf{X}_1}(a_i, x, \hat{y}) \ii_{E_i}.
\]
This conditional equality and the bound for the truncated terms in \eqref{term:term1_in_Bound_I} in Step 3 and allow us to control \eqref{term:term1_in_Bound_I}.

On the other hand, the term \eqref{term:term2_in_Bound_I} is bounded as below:
\begin{align*}
&  \left | {1 \over m} \sum_{i=1}^m \tau (\text{sign} (\inner{a_i,x }) - \text{sign}(\inner{a_i, \hat{y}}) \inner{a_i, {x -\hat{y} \over \|x -\hat{y} \| }} - {1 \over m} \sum_{i=1}^m \tau (\text{sign} (\inner{a_i,x}) - \text{sign}(\inner{a_i, y}) \inner{a_i, {x -y \over \|x -y \| }}  \right|  \\
& \stackrel{(a)}  \le \left | {1 \over m} \sum_{i=1}^m \tau (\text{sign} (\inner{a_i,x }) - \text{sign}(\inner{a_i, \hat{y}}) \left(\inner{a_i, {x -\hat{y} \over \|x -\hat{y} \| }} - \inner{a_i, {x -y \over \|x -y \| }} \right) \right| \\
&\qquad + \left | {1 \over m} \sum_{i=1}^m \tau  (\text{sign} (\inner{a_i,x }) - \text{sign}(\inner{a_i, \hat{y} } -\text{sign} (\inner{a_i, x }) + \text{sign}(\inner{a_i, y} ) \inner{a_i, {x -y \over \|x -y \| }} \right| \\
& \stackrel{(b)} \le {1 \over m} \sum_{i=1}^m \tau | \text{sign} (\inner{a_i,x }) - \text{sign}(\inner{a_i, \hat{y}} | \left| \inner{a_i, {x -\hat{y} \over \|x -\hat{y} \| }-  {x -y \over \|x -y \| }} \right| \\
&\qquad +  {1 \over m} \sum_{i=1}^m \tau  \left | \text{sign} (\inner{a_i,\hat{y} })  -\text{sign} (\inner{a_i, y })  \right| \left | \inner{a_i, {x -y \over \|x -y \| }} \right|  \\
&  \stackrel{(c)} \le {1 \over m} \sum_{i=1}^m \tau | \text{sign} (\inner{a_i,x }) - \text{sign}(\inner{a_i, \hat{y}} | \left| \inner{a_i, {x -\hat{y} \over \|x -\hat{y} \| }-  {x -y \over \|x -y \| }} \right| +  2 \tau C_{10} { C(N,s,m) \log m \cdot  \delta_{i} }\\
&  \stackrel{(d)} \le  {1 \over m} \sum_{i=1}^m \tau | \text{sign} (\inner{a_i,x }) - \text{sign}(\inner{a_i, \hat{y}} | \cdot   C(N,s,m)\cdot {2 \delta_{i} \over r_{i+1}}   +   2 \tau C_{10} C(N,s,m) \log m \cdot  \delta_{i} \\
&  \stackrel{(e)}\le   C(N,s,m) \log m \cdot {4 \tau r_i \delta_{i} \over r_{i+1}}   +  2 \tau C_{10} C(N,s,m) \log m \cdot  \delta_{i} \\
&  \stackrel{(f)} \le   {4 \tau  r_{i+1} \over 600} +  2 \tau C_{10} C(N,s,m) \log m \cdot  \delta_{i}\\
&  \stackrel{(g)} \le   {4 \tau  r_{i+1} \over 600} +  {2 \tau  r_{i+1} \over 600}\\
&   \le  {8 \over 600} r_{i+1}.
\end{align*}
Here are the justifications for the above chain of inequalities: (a) and (b) are by the triangle inequality. We applied Lemma \ref{lem: Unified uniform bound} and Corollary \ref{cor:Local binary embedding} to obtain (c). The inequality (d) results from Proposition \ref{prop:difference_of_two_normalized_vectors} and Lemma \ref{lem: Unified uniform bound}. (e) arises from applying Corollary \ref{cor:Local binary embedding}. In (f) and (g), we used Proposition \ref{prop: Properties_of_sequences2}.

Putting all the bounds we have so far together, we establish
\begin{align*}
	(I) & \le {1 \over m^{{1 \over 40}({5 \over 6})^i}} \|x -y\|_2 + {2r_{i+1} \over 600} + \delta_i + {8 \over 600} r_{i+1}\\
	& \le {1 \over m^{{1 \over 40}({5 \over 6})^i}} \|x -y\|_2  + {r_{i+1} \over 50} 
\end{align*}
with probability at least 
\begin{align*}
& 1 - {2 \over m^4} - {c_2 \over m^5} - {2 \over m^4} - {c_4 \over m^4} - \bar{c} \left({s \over N} \right)^{5 s} \cdot {1 \over m^5}  \ge 1 -   \bar{c}  \left({s \over N} \right)^{5 s} \cdot {1 \over m^5} - {c_6 \over m^4}.
\end{align*}

\subsection{Proof of Lemma \ref{lem:Bound_for_(III)}}
As before, let $y \in K^{(i)}$, so $r_{i+1} \le \|x -y\| \le r_i$ and $\hat{y} \in \mathcal{N}_{K^{(i)}}$ with $\|y - \hat{y} \| \le \delta_i$. 

\paragraph{Step 1. Approximate $(III)$ by $\epsilon$-net $\mathcal{N}_{K^{(i)}}$} First, we apply the triangle inequality and Lemma \ref{lem:Orthogonal_decomposition_Gaussian_vector} to have
\begin{align*}
& \left \| {\tau \over m} \sum_{i=1}^m(\text{sign} (\inner{a_i,x}) - \text{sign}(\inner{a_i, y}) b_i(x,y) \right \|_{K_1^\circ}  \\
& \le \left \| {\tau \over m} \sum_{i=1}^m(\text{sign} (\inner{a_i,x }) - \text{sign}(\inner{a_i, y}) b_i(x, \hat{y}) \right \|_{K_t^\circ}  \\
& \qquad + \left \|  {\tau \over m} \sum_{i=1}^m(\text{sign} (\inner{a_i,x}) - \text{sign}(\inner{a_i, y}) ( b_i(x,y) - b_i(x, \hat{y})) \right \|_{K_1^\circ} \\
& \le \left \| {\tau \over m} \sum_{i=1}^m(\text{sign} (\inner{a_i,x }) - \text{sign}(\inner{a_i, y}) b_i(x, \hat{y}) \right \|_{K_1^\circ}  \\
&\qquad  + \left \|  {\tau \over m} \sum_{i=1}^m(\text{sign} (\inner{a_i,x}) - \text{sign}(\inner{a_i, y}) \left( \inner{a_i, {x -y \over \|x -y \| }} {x -y \over \|x -y \| } - \inner{a_i, {x -\hat{y} \over \|x -\hat{y}  \| }} {x -\hat{y}  \over \|x -\hat{y}  \| } \right) \right \|_{K_1^\circ}  \\
&\qquad  + \left \|  {\tau \over m} \sum_{i=1}^m(\text{sign} (\inner{a_i,x}) - \text{sign}(\inner{a_i, y}) \left( \inner{a_i, {x +y \over \|x +y \| }} {x +y \over \|x +y \| } - \inner{a_i, {x +\hat{y} \over \|x + \hat{y}  \| }} {x + \hat{y}  \over \|x +\hat{y}  \| } \right) \right \|_{K_1^\circ} 
\end{align*}

We will bound the second and third term using Lemma \ref{lem:stable_perturbation_lemma2}.

By applying Lemma \ref{lem:stable_perturbation_lemma2}, Corollary \ref{cor:Local binary embedding}, and Proposition \ref{prop: Properties_of_sequences2} to the second term, we have
\begin{align*}
&  \left \|  {\tau \over m} \sum_{i=1}^m(\text{sign} (\inner{a_i,x}) - \text{sign}(\inner{a_i, y}) \left( \inner{a_i, {x -y \over \|x -y \| }} {x -y \over \|x -y \| } - \inner{a_i, {x -\hat{y} \over \|x -\hat{y}  \| }} {x -\hat{y}  \over \|x -\hat{y}  \| } \right) \right \|_{K_1^\circ}  \\
&\le  {\tau \over m} \sum_{i=1}^m \left| \text{sign} (\inner{a_i,x}) - \text{sign}(\inner{a_i, y} \right| \left \|  \left( \inner{a_i, {x -y \over \|x -y \| }} {x -y \over \|x -y \| } - \inner{a_i, {x -\hat{y} \over \|x -\hat{y}  \| }} {x -\hat{y}  \over \|x -\hat{y}  \| } \right) \right \|_{K_1^\circ} \\ 
& \le  r_{i} (\log m) \cdot  {4 \tau C(N,s,m)  \delta_i\over r_{i+1}}\\
& \le {4 \tau \over 600}r_{i+1}.
\end{align*}
Similarly, the third term is bounded by ${4 \tau  \over 600}r_{i+1}$.

It remains to show that the first term is well-controlled.
\begin{align*}
 & \left \| {\tau \over m} \sum_{i=1}^m(\text{sign} (\inner{a_i,x }) - \text{sign}(\inner{a_i, y}) b_i(x, \hat{y}) \right \|_{K_1^\circ}  \\
 & \le \left \| {\tau \over m} \sum_{i=1}^m(\text{sign} (\inner{a_i, x}) - \text{sign}(\inner{a_i, \hat{y} }) b_i(x, \hat{y}) \right \|_{K_1^\circ}  \\
 & \qquad +  \left \| {\tau \over m} \sum_{i=1}^m(\text{sign} (\inner{a_i,x }) - \text{sign}(\inner{a_i, y}) - \text{sign} (\inner{a_i, x}) + \text{sign}(\inner{a_i, \hat{y} }) b_i(x, \hat{y}) \right \|_{K_1^\circ} \\
  & \le \left \| {\tau \over m} \sum_{i=1}^m(\text{sign} (\inner{a_i, x}) - \text{sign}(\inner{a_i, \hat{y} }) b_i(x, \hat{y}) \right \|_{K_1^\circ}   + {\tau \over m} \sum_{i=1}^m \left | \text{sign} (\inner{a_i,y })  - \text{sign} (\inner{a_i, \hat{y}}) \right|  \left \| b_i(x, \hat{y}) \right \|_{K_1^\circ}  \\
& \le \left \| {\tau \over m} \sum_{i=1}^m(\text{sign} (\inner{a_i, x}) - \text{sign}(\inner{a_i, \hat{y} }) b_i(x, \hat{y}) \right \|_{K_1^\circ}   + 2 \tau \delta_i \log m \cdot  \left \| b_i(x, \hat{y}) \right \|_{K_1^\circ}  \\
& \le \left \| {\tau \over m} \sum_{i=1}^m(\text{sign} (\inner{a_i, x}) - \text{sign}(\inner{a_i, \hat{y} }) b_i(x, \hat{y}) \right \|_{K_1^\circ}   +     2 \tau C(N,s,m)\log m \cdot \delta_{i} \\
\numberthis \label{ineq:Lemma23_first_term}
& \le \left \| {\tau \over m} \sum_{i=1}^m(\text{sign} (\inner{a_i, x}) - \text{sign}(\inner{a_i, \hat{y} }) b_i(x, \hat{y}) \right \|_{K_1^\circ}   +   { 2 \tau r_{i+1} \over 600}
\end{align*}
whenever Lemma \ref{lem: Unified uniform bound} and Corollary \ref{cor:Local binary embedding} hold. Here, the second last inequality is from Lemma \ref{lem: Unified uniform bound} and the last is by Proposition \ref{prop: Properties_of_sequences2}. Thus, it boils down to control the first term in the right hand side of \eqref{ineq:Lemma23_first_term}, which is presented in the next step. 

\paragraph{Step 2. Bounding the first term by decoupling} \mbox{}\\
As for the first term $ \left \| {\tau \over m} \sum_{i=1}^m(\text{sign} (\inner{a_i, x}) - \text{sign}(\inner{a_i, \hat{y} }) b_i(x, \hat{y}) \right \|_{K_1^\circ}$, we apply a simple variant of Lemma 8.1 in \cite{plan2016high}. This implies that 
$\inner{a_i, x}, \inner{a_i, \hat{y} }$ and $b_i(x, \hat{y})$ are independent, which consequently shows that $ (\text{sign} (\inner{a_i, x}) - \text{sign}(\inner{a_i, \hat{y} })$ and 
$b_i(x, \hat{y})$ are independent. This allows us to apply the concentration inequality 
conditioned on $ (\inner{a_i, x}) - \text{sign}(\inner{a_i, \hat{y} })$ as we describe below.

Define
\[
h(x, \hat{y}) := {1\over m}  \sum_{i=1}^m (\text{sign} (\inner{a_i, x}) - \text{sign}(\inner{a_i, \hat{y} }) b_i(x, \hat{y}).
\]
This object has a very similar structure as the function $h$ in Section 8.4.3 in \cite{plan2016high} and we are going to follow the arguments appeared in Section 8.4.3 and 9.1 in that paper.  

Conditioned on $\text{sign} (\inner{a_i, x}) - \text{sign}(\inner{a_i, \hat{y} }$, we have

\[
h \sim N(0, \lambda I_{(x, \hat{y})^\perp}) \qquad \text{where} \qquad  \lambda^2 := {1\over m^2}  \sum_{i=1}^m (\text{sign} (\inner{a_i, x}) - \text{sign}(\inner{a_i, \hat{y} })^2. 
\]

By following the argument in Section 8.4.3 of \cite{plan2016high}, conditional on $\text{sign} (\inner{a_i, x}) - \text{sign}(\inner{a_i, \hat{y} }$, we have
\[
\mathbb{E} \|h \|_{K_t^\circ} \le \lambda \cdot w(K_t).
\]
Since
\[
\lambda = {1\over m}  \sqrt{\sum_{i=1}^m (\text{sign} (\inner{a_i, x}) - \text{sign}(\inner{a_i, \hat{y} })^2} = {\sqrt{2} \over \sqrt{m}}  \sqrt{{1\over m}\sum_{i=1}^m \left| \text{sign} (\inner{a_i, x}) - \text{sign}(\inner{a_i, \hat{y} }\right| },
\] we have 
$\lambda \le {2 \over m^{1/2}} \sqrt{C_{10} r_i \log m}$ with probability $1-{c_2 \over m^4}$ by Corollary \ref{cor:Local binary embedding}.

Again, by the same consideration in \cite{plan2016high} based on the Gaussian concentration inequality (Section 8.3 and 9.1 in \cite{plan2016high} to control the term $E_2$) for any $\epsilon \ge 2m^{-1/2}$, we have
\begin{align*}
\left \| {\tau \over m} \sum_{i=1}^m(\text{sign} (\inner{a_i, x}) - \text{sign}(\inner{a_i, \hat{y} }) b_i(x, \hat{y}) \right \|_{K_1^\circ} 
& \le 4 \tau  \left({C_{10} r_i \log m \over m}\right)^{1/2}  \left(  w(K_1) +  \epsilon m^{1/2} \right)\\
& \le 4 \tau  \left({C_{10} r_i \log m \over m}\right)^{1/2}  \left( C(N,s,m)^{1/2} + \epsilon m^{1/2} \right)\\
& \le 4 \tau  \left({C_{10} r_i \log m \over m}\right)^{1/2}  \left( C(N,s,m)^{1/2}  \cdot \epsilon m^{1/2} \right)
\end{align*}
with probability at least $1 - \exp (-c' \epsilon^2 m)$ for some universal constant $c' > 0$  (in the last inequality, we have used the simple fact that $ab \ge a + b$ if $a,b \ge 2$).

From Proposition \ref{prop: Properties_of_sequences}, 
\begin{align*}
\left \| {\tau \over m} \sum_{i=1}^m(\text{sign} (\inner{a_i, x}) - \text{sign}(\inner{a_i, \hat{y} }) b_i(x, \hat{y}) \right \|_{K_1^\circ} 
& \le 4 {r_{i+1} \over 10\sqrt{6}} \cdot {m^{-1/2} \over \delta^{1/2}_{i}} \cdot \epsilon m^{1/2}\\
& \le  {2 r_{i+1} \over 10} \cdot {\epsilon \over \delta^{1/2}_{i}}.
\end{align*}

Taking $\epsilon := {\delta^{1/2}_i  \over \sqrt{60}}$ (note that this choice of $\epsilon \gg 2m^{-1/2}$ from the construction of $\delta_i$) yields
\begin{align}
\label{eq:Bound_(III)_for_one_pair}
\left \| {\tau \over m} \sum_{i=1}^m(\text{sign} (\inner{a_i, x}) - \text{sign}(\inner{a_i, \hat{y} }) b_i(x, \hat{y}) \right \|_{K_1^\circ}
& \le   {2 \tau  r_{i+1} \over 600}
\end{align} with probability at least 
\begin{align*}
& 1 - \exp (-c' \epsilon^2 m) \\
& \quad = 1 -\exp \left(-c'  {\delta_i m^{1 } \over 60}  \right) \\
& \quad  = 1 -\exp \left(-c'  (600C)^{3 \left(1 -\left({5 \over 6} \right)^{i-1} \right)}  { m^{ {1 \over 3}\left({5 \over 6} \right)^{i-1} } (\log m)^{ {14 \over 3}\left(1 -\left({5 \over 6} \right)^{i-1} \right) +{4 \over 3}} C(N,s,m)^{5  - {10 \over 3}\left({5 \over 6} \right)^{i-1}}\over 60}  \right).
\end{align*}
Since we want the bound \eqref{eq:Bound_(III)_for_one_pair} holds for all $\hat{y}$ in $\mathcal{N}_{K^{(i)}}$, by the union bound, we have \eqref{eq:Bound_(III)_for_one_pair} for all $\hat{y}$ in $\mathcal{N}_{K^{(i)}}$ with probability greater than
\begin{align*}
&1 - 2 |  \mathcal{N}_{K^{(i)}} | \exp \left(-c'  (600C)^{2 \left(1 -\left({5 \over 6} \right)^{i-1} \right)} { m^{ {1 \over 3}\left({5 \over 6} \right)^{i-1} } (\log m)^{ {14 \over 3}\left(1 -\left({5 \over 6} \right)^{i-1} \right) +{4 \over 3}}  C(N,s,m)^{5  - {10 \over 3}\left({5 \over 6} \right)^{i-1}}\over 60}  \right)\\
& \ge 1 -2 \exp \left( C^{''} (600C)^{2 \left(1 -\left({5 \over 6} \right)^{i-1} \right)} m^{{1 \over 3}({5 \over 6})^{i-1}} (\log m)^{2 \left( 1- {7 \over 3}\left({5 \over 6} \right)^{i-1} \right)} C(N,s,m)^{4 - {10 \over 3}\left({5 \over 6} \right)^{i-1}}  \right) \\
& \qquad \qquad  \times  \exp \left(-c'  (600C)^{3 \left(1 -\left({5 \over 6} \right)^{i-1} \right)} { m^{ {1 \over 3}\left({5 \over 6} \right)^{i-1} }  (\log m)^{ {14 \over 3}\left(1 -\left({5 \over 6} \right)^{i-1} \right) +{4 \over 3}}  C(N,s,m)^{5  - {10 \over 3}\left({5 \over 6} \right)^{i-1}}\over 60}  \right).
\end{align*}
By the same argument in the previous subsection based on comparing the exponents of $\log m$ and $C(N,s,m)$, this probability is at least 
\[
1 - \exp \left(-C^{(5)} m^{ {1 \over 3}\left({5 \over 6} \right)^{i} } (\log m)^{7 \over 6} C(N,s,m) \right) \ge 1 - {c_7 \over m^5}.
\]

\paragraph{Step 3. Establishing the bound for $(III)$ } 
Combining the previous bounds in this subsection yields
\begin{align*}
\left \| {\tau \over m} \sum_{i=1}^m(\text{sign} (\inner{a_i,x}) - \text{sign}(\inner{a_i, y}) b_i(x,y) \right \|_{K_1^\circ}  
&\le {4 \tau  \over 600}r_{i+1} + {2 \tau r_{i+1} \over 600} +  {2 \tau r_{i+1} \over 600}\\
&\le  \left( {6 \over 600} +  {3 \over 600} +  {3 \over 600} \right)r_{i+1} \\
&=  {r_{i+1} \over 50} 
\end{align*} with probability exceeding
\begin{align*}
& 1 - {2 \over m^4} - {c_2 \over m^5} - {2 \over m^4} - {c_7 \over m^4} - \bar{c} \left({s \over N} \right)^{5 s} \cdot {1 \over m^5}  \ge 1 -   \bar{c}  \left({s \over N} \right)^{5 s} \cdot {1 \over m^5} - {c_8 \over m^4}.
\end{align*}

\section{Discussion}
We show that NBHIT enjoys the optimal approximation error decay in the number of measurements for the one-bit compressed sensing problem. While this demonstrates its efficiency, there are still several aspects worth further investigation: 

(1) 
It would be interesting to see how much the requirement for number of measurements in Theorem \ref{thm: Main Convergence Thm} can be relaxed. Although improving this requirement is not the main focus in this paper, the extensive numerical experiments in the literature indicate that BIHT-type algorithms perform much better than other algorithms even for a moderate number of measurements \cite{jacques2013robust, boufounos2015quantization, liu2019one}. We leave it as a future work. 
 
(2) Note that in general it is not possible to recover a sparse signal from $1$-bit measurements with non-Gaussian vectors even if we have infinitely many measurements \cite{ai2014one}. However, under some extra assumptions on the signal set, we can reconstruct the signal with a reasonable accuracy. 
It could be worth to explore whether BIHT-type algorithms still exhibit superior performance for non-Gaussian measurements under these assumptions. 

(3) Another possible direction would be to extend and analyze  BIHT-type algorithms for sparse signals with respect to a dictionary. It is easy to see that our results naturally extend to sparse signals with respect to any orthogonal basis. 
We expect that BIHT-type algorithms might offer a good approximation error decay for a certain type of dictionaries as well, assuming that the hard thresholding operator for the dictionary is well defined and can be implemented in a computationally-efficient way.

\section*{Acknowledgement}
The authors thank Xiaowei Li for reading this manuscript and giving us several valuable comments.

\ifCLASSOPTIONcaptionsoff
  \newpage
\fi



%

\bibliographystyle{IEEEtran}
\bibliography{ApproximateRIP_BIHT_Ref}

\begin{thebibliography}{10}
\providecommand{\url}[1]{#1}
\csname url@samestyle\endcsname
\providecommand{\newblock}{\relax}
\providecommand{\bibinfo}[2]{#2}
\providecommand{\BIBentrySTDinterwordspacing}{\spaceskip=0pt\relax}
\providecommand{\BIBentryALTinterwordstretchfactor}{4}
\providecommand{\BIBentryALTinterwordspacing}{\spaceskip=\fontdimen2\font plus
\BIBentryALTinterwordstretchfactor\fontdimen3\font minus
  \fontdimen4\font\relax}
\providecommand{\BIBforeignlanguage}[2]{{%
\expandafter\ifx\csname l@#1\endcsname\relax
\typeout{** WARNING: IEEEtran.bst: No hyphenation pattern has been}%
\typeout{** loaded for the language `#1'. Using the pattern for}%
\typeout{** the default language instead.}%
\else
\language=\csname l@#1\endcsname
\fi
#2}}
\providecommand{\BIBdecl}{\relax}
\BIBdecl

\bibitem{foucart2013invitation}
S.~Foucart and H.~Rauhut, ``An invitation to compressive sensing,'' in \emph{A
  mathematical introduction to compressive sensing}.\hskip 1em plus 0.5em minus
  0.4em\relax Springer, 2013, pp. 1--39.

\bibitem{eldar2012compressed}
Y.~C. Eldar and G.~Kutyniok, \emph{Compressed sensing: theory and
  applications}.\hskip 1em plus 0.5em minus 0.4em\relax Cambridge university
  press, 2012.

\bibitem{gunturk2013sobolev}
C.~S. G{\"u}nt{\"u}rk, M.~Lammers, A.~M. Powell, R.~Saab, and {\"O}.~Y{\i}lmaz,
  ``Sobolev duals for random frames and {$\Sigma \Delta$} quantization of
  compressed sensing measurements,'' \emph{Foundations of Computational
  mathematics}, vol.~13, no.~1, pp. 1--36, 2013.

\bibitem{saab2018quantization}
R.~Saab, R.~Wang, and {\"O}.~Y{\i}lmaz, ``Quantization of compressive samples
  with stable and robust recovery,'' \emph{Applied and Computational Harmonic
  Analysis}, vol.~44, no.~1, pp. 123--143, 2018.

\bibitem{chou2015noise}
E.~Chou, C.~S. G{\"u}nt{\"u}rk, F.~Krahmer, R.~Saab, and {\"O}.~Y{\i}lmaz,
  ``Noise-shaping quantization methods for frame-based and compressive sampling
  systems,'' in \emph{Sampling theory, a renaissance}.\hskip 1em plus 0.5em
  minus 0.4em\relax Springer, 2015, pp. 157--184.

\bibitem{chou2016distributed}
E.~Chou and C.~S. G{\"u}nt{\"u}rk, ``Distributed noise-shaping quantization: I.
  beta duals of finite frames and near-optimal quantization of random
  measurements,'' \emph{Constructive Approximation}, vol.~44, no.~1, pp. 1--22,
  2016.

\bibitem{baraniuk2017exponential}
R.~G. Baraniuk, S.~Foucart, D.~Needell, Y.~Plan, and M.~Wootters, ``Exponential
  decay of reconstruction error from binary measurements of sparse signals,''
  \emph{IEEE Transactions on Information Theory}, vol.~63, no.~6, pp.
  3368--3385, 2017.

\bibitem{deift2011optimal}
P.~Deift, F.~Krahmer, and C.~S. G{\"u}nt{\"u}rk, ``An optimal family of
  exponentially accurate one-bit sigma-delta quantization schemes,''
  \emph{Communications on Pure and Applied Mathematics}, vol.~64, no.~7, pp.
  883--919, 2011.

\bibitem{saab2017compressed}
R.~Saab, R.~Wang, and {\"O}.~Y{\i}lmaz, ``From compressed sensing to compressed
  bit-streams: practical encoders, tractable decoders,'' \emph{IEEE
  Transactions on Information Theory}, vol.~64, no.~9, pp. 6098--6114, 2017.

\bibitem{boufounos20081}
P.~T. Boufounos and R.~G. Baraniuk, ``1-bit compressive sensing,'' in
  \emph{2008 42nd Annual Conference on Information Sciences and Systems}.\hskip
  1em plus 0.5em minus 0.4em\relax IEEE, 2008, pp. 16--21.

\bibitem{plan2016high}
Y.~Plan, R.~Vershynin, and E.~Yudovina, ``High-dimensional estimation with
  geometric constraints,'' \emph{Information and Inference: A Journal of the
  IMA}, vol.~6, no.~1, pp. 1--40, 2016.

\bibitem{plan2016generalized}
Y.~Plan and R.~Vershynin, ``The generalized lasso with non-linear
  observations,'' \emph{IEEE Transactions on information theory}, vol.~62,
  no.~3, pp. 1528--1537, 2016.

\bibitem{davenport20141}
M.~A. Davenport, Y.~Plan, E.~Van Den~Berg, and M.~Wootters, ``1-bit matrix
  completion,'' \emph{Information and Inference: A Journal of the IMA}, vol.~3,
  no.~3, pp. 189--223, 2014.

\bibitem{horowitz2009semiparametric}
J.~L. Horowitz, \emph{Semiparametric and nonparametric methods in
  econometrics}.\hskip 1em plus 0.5em minus 0.4em\relax Springer, 2009,
  vol.~12.

\bibitem{jacques2013robust}
L.~Jacques, J.~N. Laska, P.~T. Boufounos, and R.~G. Baraniuk, ``Robust 1-bit
  compressive sensing via binary stable embeddings of sparse vectors,''
  \emph{IEEE Transactions on Information Theory}, vol.~59, no.~4, pp.
  2082--2102, 2013.

\bibitem{boufounos2015quantization}
P.~T. Boufounos, L.~Jacques, F.~Krahmer, and R.~Saab, ``Quantization and
  compressive sensing,'' in \emph{Compressed sensing and its
  applications}.\hskip 1em plus 0.5em minus 0.4em\relax Springer, 2015, pp.
  193--237.

\bibitem{liu2019one}
D.~Liu, S.~Li, and Y.~Shen, ``One-bit compressive sensing with projected
  subgradient method under sparsity constraints,'' \emph{IEEE Transactions on
  Information Theory}, vol.~65, no.~10, pp. 6650--6663, 2019.

\bibitem{jacques2013quantized}
L.~Jacques, K.~Degraux, and C.~De~Vleeschouwer, ``Quantized iterative hard
  thresholding: Bridging 1-bit and high-resolution quantized compressed
  sensing,'' \emph{arXiv preprint arXiv:1305.1786}, 2013.

\bibitem{powell2013quantization}
A.~M. Powell, R.~Saab, and {\"O}.~Y{\i}lmaz, ``Quantization and finite
  frames,'' in \emph{Finite frames}.\hskip 1em plus 0.5em minus 0.4em\relax
  Springer, 2013, pp. 267--302.

\bibitem{plan2013one}
Y.~Plan and R.~Vershynin, ``One-bit compressed sensing by linear programming,''
  \emph{Communications on Pure and Applied Mathematics}, vol.~66, no.~8, pp.
  1275--1297, 2013.

\bibitem{plan2012robust}
------, ``Robust 1-bit compressed sensing and sparse logistic regression: A
  convex programming approach,'' \emph{IEEE Transactions on Information
  Theory}, vol.~59, no.~1, pp. 482--494, 2012.

\bibitem{knudson2016one}
K.~Knudson, R.~Saab, and R.~Ward, ``One-bit compressive sensing with norm
  estimation,'' \emph{IEEE Transactions on Information Theory}, vol.~62, no.~5,
  pp. 2748--2758, 2016.

\bibitem{rangan2001recursive}
S.~Rangan and V.~K. Goyal, ``Recursive consistent estimation with bounded
  noise,'' \emph{IEEE Transactions on Information Theory}, vol.~47, no.~1, pp.
  457--464, 2001.

\bibitem{nguyen2011frame}
H.~Q. Nguyen, V.~K. Goyal, and L.~R. Varshney, ``Frame permutation
  quantization,'' \emph{Applied and Computational Harmonic Analysis}, vol.~31,
  no.~1, pp. 74--97, 2011.

\bibitem{powell2010mean}
A.~M. Powell, ``Mean squared error bounds for the rangan--goyal soft
  thresholding algorithm,'' \emph{Applied and Computational Harmonic Analysis},
  vol.~29, no.~3, pp. 251--271, 2010.

\bibitem{oymak2015near}
S.~Oymak and B.~Recht, ``Near-optimal bounds for binary embeddings of arbitrary
  sets,'' \emph{arXiv preprint arXiv:1512.04433}, 2015.

\bibitem{ledoux2001concentration}
M.~Ledoux, \emph{The concentration of measure phenomenon}.\hskip 1em plus 0.5em
  minus 0.4em\relax American Mathematical Soc., 2001, no.~89.

\bibitem{vershynin2018high}
R.~Vershynin, \emph{High-dimensional probability: An introduction with
  applications in data science}.\hskip 1em plus 0.5em minus 0.4em\relax
  Cambridge University Press, 2018, vol.~47.

\bibitem{bilyk2015random}
D.~Bilyk and M.~T. Lacey, ``Random tessellations, restricted isometric
  embeddings, and one bit sensing,'' \emph{arXiv preprint arXiv:1512.06697},
  2015.

\bibitem{chen2017solving}
Y.~Chen and E.~J. Cand{\`e}s, ``Solving random quadratic systems of equations
  is nearly as easy as solving linear systems,'' \emph{Communications on Pure
  and Applied Mathematics}, vol.~70, no.~5, pp. 822--883, 2017.

\bibitem{ai2014one}
A.~Ai, A.~Lapanowski, Y.~Plan, and R.~Vershynin, ``One-bit compressed sensing
  with non-gaussian measurements,'' \emph{Linear Algebra and its Applications},
  vol. 441, pp. 222--239, 2014.

\end{thebibliography}

\end{document}